\documentclass[10pt]{article}
\usepackage{fullpage}
\usepackage[bookmarks]{hyperref}
\usepackage{amssymb}
\usepackage{amsmath}
\usepackage{amsthm}
\usepackage{graphicx,color,colordvi}
\usepackage{bbm}
\usepackage{cleveref}
\usepackage{stmaryrd}
\usepackage[utf8]{inputenc}
\usepackage{fullpage}
\usepackage[blocks]{authblk}
\usepackage{dsfont}
\usepackage{mathtools}
\usepackage{authblk}
\newcommand{\norm}[1]{\left\| #1 \right\|}
\usepackage{ulem}

\usepackage{pgfplots}

\setlength{\oddsidemargin}{0.0in}
\setlength{\topmargin}{-1.0cm}  % needs to be adjusted to your printer!
\setlength{\textwidth}{6.6in}
\setlength{\textheight}{9in} \setlength{\parskip}{1mm}

\def\openone{\leavevmode\hbox{\small1\kern-3.8pt\normalsize1}}

\def\cn{{\cal N}}

\def\CC{\mathbb{C}}
\def\RR{\mathbb{R}}
\def\ZZ{\mathbb{Z}}

\def\NN{\mathbb{N}}
\def\LL{\mathbb{L}}

\def\11{\mathbb{I}}
\def\LL{\mathcal{L}}

\def\ind{1\hspace{-0.27em}\mathrm{l}}

\newtheorem{theorem}{Theorem}
\newtheorem{lemma}{Lemma}
\newtheorem{proposition}{Proposition}
\newtheorem{corollary}{Corollary}

\theoremstyle{definition}
\newtheorem{definition}{Definition}
\newtheorem{example}{Example}

\newcommand{\proj}[1]{\ket{#1}\!\bra{#1}}

\def\eps{\varepsilon}

\newcommand{\qty}[1]{\left\lbrace #1   \right\rbrace}
\newcommand{\supp}{\mathop{\rm supp}\nolimits}

\newcommand{\tr}{\mathop{\rm Tr}\nolimits}

\newcommand{\ds}{\displaystyle}

\usepackage{pst-node}
\usepackage{tikz-cd} 

\newcommand{\bra}[1]{\langle#1|}
\newcommand{\ket}[1]{|#1\rangle}

 % ... is a partition of ...

\newcommand{\cA}{{\cal A}}
\newcommand{\cB}{{\cal B}}
\newcommand{\cD}{{\cal D}}

\newcommand{\cF}{{\mathcal{F}}}

\newcommand{\cN}{{\cal N}}

\newcommand{\cH}{{\cal H}}
\newcommand{\cK}{{\cal K}}

\newcommand{\cP}{\mathcal{P}}

\newcommand{\cL}{{\cal L}}

\newcommand{\cR}{{\cal R}}
\newcommand{\cX}{{\cal X}}

\newcommand{\cM}{{\mathcal{M}}}

\newcommand{\Id}{{\mathds{1}}}

\newcommand{\C}{{\mathbb{C}}}

\def\e{\mathrm{e}}

\usepackage{graphicx}
\usepackage{setspace}
\usepackage{verbatim}
\usepackage{subfig}

\theoremstyle{definition}

\theoremstyle{remark}
\newtheorem{remark}{Remark}

\numberwithin{equation}{section}

\DeclareRobustCommand\openone{\leavevmode\hbox{\small1\normalsize\kern-.33em1}}

\newcommand{\id}{{\rm{id}}}
\newcommand{\be}{\begin{equation}}
	\newcommand{\ee}{\end{equation}}
\newcommand{\bea}{\begin{eqnarray}}
	\newcommand{\eea}{\end{eqnarray}}
\newcommand{\beas}{\begin{eqnarray*}}
	\newcommand{\eeas}{\end{eqnarray*}}

\setcounter{Maxaffil}{1}

\title{Approximate tensorization of the relative entropy \\
	for noncommuting conditional expectations}

\begin{document}

 	\bibliographystyle{abbrv}

\author[1]{Ivan Bardet}
\affil[1]{\small{Inria Paris, France}}
\author[2,3,4,5]{{\'A}ngela Capel}
\affil[2]{Departamento de An{\'a}lisis Matem{\'a}tico y Matem{\'a}tica Aplicada, Universidad Complutense de Madrid, Spain} 
\affil[3]{Instituto de Ciencias Matemáticas (CSIC-UAM-UC3M-UCM),  Madrid, Spain }
\author[4,5]{Cambyse Rouz\'{e}}
\affil[4]{Department of Mathematics, Technische Universit\"at M\"unchen, 85748 Garching, Germany}
\affil[5]{Munich Center for Quantum Science and Technology (MCQST), M\"unchen, Germany}

\date{}

\maketitle

\begin{abstract}
In this paper, we derive a new generalisation of the strong subadditivity of the entropy to the setting of general conditional expectations onto arbitrary finite-dimensional von Neumann algebras.  This generalisation, 
referred to as approximate tensorization of the relative entropy, consists in a lower bound for
the sum of relative entropies between a given density and its respective projections onto two intersecting
von Neumann algebras in terms of the relative entropy between the same density and its projection
onto an algebra in the intersection, up to multiplicative and additive constants. In particular, our inequality reduces to
the so-called quasi-factorization of the entropy for commuting algebras, which is a key step in modern proofs of the logarithmic Sobolev inequality for classical lattice spin systems. We also provide estimates on the constants in terms of conditions of clustering of correlations in the setting of quantum lattice spin systems. Along the way, we show the equivalence between conditional expectations arising from Petz recovery maps and those of general Davies semigroups.
	\end{abstract}
\tableofcontents
\section{Introduction}
In the last few decades, entropy has been proven to be a fundamental object in various fields of mathematics and theoretical physics. Its quantum analogue characterizes the optimal rate at which two different states of a system can be discriminated when an arbitrary number of copies of the system is available. Given two states $\rho,\sigma$ of a finite-dimensional von Neumann algebra $\cN\subset \cB(\cH)$, it is given by 
\begin{align*}
D(\rho\|\sigma):=
\tr[\rho\,(\ln\rho-\ln\sigma)]\,,
\end{align*} 
whenever $\supp(\rho)\subset\supp(\sigma)$, where $\tr$ denotes the unnormalized trace on $\cB(\cH)$. When $\sigma:=\Id_{\cH}/d_\cH$ is the completely mixed state of $\cB(\cH)$, the relative entropy can be written in terms of the von Neumann entropy $S(\rho):=-\tr[\rho\ln\rho]$ of the state $\rho$:
\begin{align*}
D(\rho\|\Id_\cH/d_\cH)=-S(\rho)+\ln(d_\cH)\,.
\end{align*}

Probably the most fundamental property of entropy is the following \textit{strong subadditivity} inequality (SSA) \cite{lieb1973SSA}: given a tripartite system $\cH_{ABC}:=\cH_A\otimes \cH_B\otimes \cH_C$ and a state $\rho\equiv \rho_{ABC}$ on $\cH_{ABC}$, 
\begin{align}\tag{SSA}
S(\rho_{ABC})+S(\rho_B)\le S(\rho_{AB})+S(\rho_{BC})\,,
\end{align}
where for any subsystem $D$ of $ABC$, $\rho_{D}:=\tr_{D^c}[\rho_{ABC}]$ denotes the marginal state on $D$. Restated in terms of the quantum relative entropy, (SSA) takes the following form:
\begin{align}\label{SSA}
D\left(\rho_{ABC}\Big\|\rho_B\otimes \frac{\Id_{AC}}{d_{\cH_{AC}}}\right)\le D\left(\rho_{ABC}\Big\| \rho_{AB}\otimes \frac{\Id_{C}}{d_{\cH_C}}  \right)+D\left(\rho_{ABC}\Big\| \rho_{BC}\otimes \frac{\Id_{A}}{d_{\cH_{A}}}  \right)\,.
\end{align}

In the present paper, we consider the following more general framework: let $\cM\subset \cN_1,\cN_2\subset\cN$ be four von Neumann subalgebras of the algebra of linear operators acting on a finite-dimensional Hilbert space $\cH$, and let $E^\cM,E_1,E_2$ be conditional expectations onto $\cM,\cN_1,\cN_2$, respectively. When the quadruple $(\cM,\cN_1,\cN_2,\cN)$ forms a \textit{commuting square}, that is when $E_{1}\circ E_2=E_2\circ E_1=E^\cM$, the following generalization of SSA occurs: for any state $\rho$ on $\cN$,
\begin{align}\label{eqcoarsegrain}
D(\rho\|E^{\cM}_*(\rho))\le D(\rho\| E_{1*}(\rho))+D(\rho\|E_{2*}(\rho))\,,
\end{align}
where the maps $E^\cM_*,E_{1*}$, $E_{2*}$ are the Hilbert-Schmidt duals of $E^\cM, E_{1}, E_2$, also known as \textit{coarse-graining maps} \cite{petz2008}. One can easily recover the previous (SSA) inequality from (\ref{eqcoarsegrain}) by taking $\cN\equiv\cB(\cH_{ABC})$, and the coarse-graining maps to be the partial traces onto the subalgebras $\cN_1\equiv\cB(\cH_{AB}) $, $\cN_2\equiv \cB(\cH_{BC})$ and $\cM\equiv \cB({\cH_B})$, respectively. Thus, inequality \eqref{eqcoarsegrain} can be seen as an operator algebraic generalization of the (SSA) inequality.

However, the commuting square assumption and subsequently inequality \eqref{eqcoarsegrain} are not satisfied in most of the cases of interest that appear in information-theoretical settings or quantum many-body systems. Indeed, in the context of interacting lattice spin systems, conditional expectations arising e.g. from the large time limit of a dissipative evolution on subregions of the lattice generally do not satisfy the commuting square assumption. In this case, approximations of the (SSA) were found in the classical case (i.e. when all algebras are commutative) and when $\cM\equiv \CC\Id_{\cH}$ \cite{cesi2001quasi}. For classical lattice spin systems, these inequalities, termed as \textit{approximate tensorization of the relative entropy} (also known in the literature as \textit{quasi-factorization of the relative entropy} \cite{cesi2001quasi,[D02]}), take the following form
\begin{align}\label{eq:Cesi-quasi-factorization}
D(\rho\|\sigma)\le \frac{1}{1-2c_1}\,\big(D(\rho\|E_{1*}(\rho))+D(\rho\|E_{2*}(\rho))\big)\,,
\end{align}
where $\sigma:= E^{\cM}_*(\rho)$ for all states $\rho$, and $c_1:=\|E_{1}\circ E_{2}-E^\cM:\,\mathbb{L}_1(\sigma)\to \mathbb{L}_\infty(\cN)\|$ is a constant that measures the violation of the commuting square condition for the quadruple $(\cM,\cN_1,\cN_2,\cN)$. For reasons that will become clear in the remaining parts of the article, we refer to the constant $c_1$ as the \textit{clustering of correlations} constant in this introduction.

% We are particularly interested in the case of lattice spins systems where $\sigma\equiv\sigma_\beta$ is the Gibbs state at inverse temperature $\beta$. Typically for classical spin systems, $c_1=0$ at infinite temperature ($\beta=0$) and remains small for conditional expectations onto far apart regions and at high enough temperature. Such an inequality was recently generalized to the quantum setting in \cite{[CLP18], [CLP18a],BardetCapelLuciaPerezGarciaRouze-HeatBath1DMLSI-2019}. 

An inequality of the form of \eqref{eq:Cesi-quasi-factorization} is the main ingredient in modern proofs of \textit{modified logarithmic Sobolev inequalities} (MLSI) which govern the rapid thermalization of classical lattice spin systems evolving according to a Glauber dynamics and in the high temperature regime \cite{cesi2001quasi},  \cite{[D02]}. Furthermore, the aforementioned quantum versions of \eqref{eq:Cesi-quasi-factorization} for different \textit{conditional relative entropies} have been used in the past years to obtain some examples of positive MLSI for quantum spin systems \cite{[CLP18a], BardetCapelLuciaPerezGarciaRouze-HeatBath1DMLSI-2019, [CLP18]}. Our main motivation in the current paper is a continuation of those results by further generalizing  \eqref{eq:Cesi-quasi-factorization} to a more abstract setting, with the aim of providing new interesting examples of positive MLSI. In fact, after the first version of this manuscript, the main results contained here have allowed some of the authors to solve a long-standing open problem regarding a system-size independent MLSI for certain evolutions that converge to Gibbs states of nearest-neighbour commuting Hamiltonians at high enough temperature in \cite{capel2020modified}. 

\paragraph{Main results:}In this paper, building on the previous results of approximate tensorization of the form of \eqref{eq:Cesi-quasi-factorization},  we take one step further and introduce a \textit{weak approximate tensorization} for the relative entropy, denoted throughout the text by AT$(c,d)$, which amounts to the existence of positive constants $c\geq 1$ and $d\ge 0$ such that (see \Cref{theo_AT_pinching})\footnote{The definition of (strong) approximate tensorization recently arose in a first version of the paper \cite{nick2019scoopingpaper}, where it was coined as ``adjusted subadditivity of relative entropy''. As explained by the author himself, this definition was already present in an earlier draft of our present article, which we had shared with him (see also the recently published thesis \cite{thesisangela}). Furthermore, the techniques that we introduce here are different from his, and more in line with the classical literature on the subject.}
	\begin{align}\tag{$\operatorname{AT(c,d)}$}
D(\rho\|E^\cM_*(\rho))\le c\,\big(D(\rho\|E_{1*}(\rho))+D(\rho\|E_{2*}(\rho))\big)+d\,.
\end{align}
Whenever $d=0$, we refer to the previous bound as a \textit{strong approximate tensorization} for the relative entropy. Nevertheless, as opposed to the classical setting, conditional expectations arising from dissipative evolutions on quantum lattice spin systems generically do not satisfy the commuting square condition even at infinite temperature. This difference is exclusively due to the non-commutativity of the underlying algebras. The additive constant $d$ is meant to take into account this correction from the classical case.

Note that, at infinite temperature, the conditional expectations are selfadjoint with respect to the Hilbert-Schmidt inner product, a property referred to as \textit{symmetric} in \cite{gao2018fisher,bardet2021group}. Under this condition, in \cite{gao2017strong}, a different extension of (SSA) was proposed. In our framework, the inequality derived in \cite{gao2017strong} leads to an AT$(1,d)$, which can be regarded as measuring the violation of the commutative square condition at infinite temperature. On the other hand, our strong approximate tensorization constant $c$ can be regarded as a finite temperature relaxation of the case $c=1$ in \cite{gao2017strong}. 

% \textcolor{red}{In this article, we prove different approximate tensorization inequalities AT$(c,d)$ with the same idea in mind: the additive correction (or \textit{weak constant}) $d$ represents a defect from classicality, meanwhile the multiplicative constant (or \textit{strong constant}) $c$ is a measure of the distance from being a commutative square.}

The first AT$(c,d)$ inequality that we obtain is presented in \Cref{HSAT}, where we use the change of measure argument from \cite{junge2019stability} in order to directly connect the previous AT$(1,d)$ inequality from \cite{gao2017strong} for symmetric conditional expectations to an AT$(c,d')$ inequality for the general case, where $c$ is a spectral quantity depending solely on the invariant states of the smallest algebra $\cM$ and $d'$ is proportional to $d$. In particular, whenever $d=0$, this results allows us to transfer strong approximate tensorization for symmetric conditional expectations to strong approximate tensorization for general conditional expectations. However, in this inequality the multiplicative constant cannot be related to the clustering of correlations constant $c_1$ in the case of interacting systems, and can be in general exponentially larger. Our main result, stated in \Cref{theo_AT_pinching}, precisely fills this gap. Moreover, the inequality reduces to the classical inequality of \cite{cesi2001quasi} for commutative algebras. 

In Section \ref{sec:dabies}, we apply the previous results on weak approximate tensorization to the context of lattice spin systems with commuting Hamiltonians. In particular, we show in \Cref{prop_classical_inftyT} that classical evolutions over quantum systems (termed embedded Glauber dynamics) satisfy AT$(c,0)$ with the same constant as in the classical case. As an independent but important result, we also prove in \Cref{theo_equal_cond},  that the conditional expectations associated to the heat-bath dynamics and Davies dynamics coincide. This, in particular, allows us to transfer various results of remarkable interest that have been proven in the past years for one of the dynamics to the other, and vice versa.

% However, since \Cref{coro_equal_cond} is independent of the approximate tensorization results in the core of this manuscript, we defer it to Appendix \ref{sec:condexp}. \textcolor{red}{tbh, I find it very weird to say that we have a main result and leave its statement to an appendix. Since we are still defining the conditional expectation sin Section 2.4, why not enunciate the theorem there?}

\paragraph{Applications:} As mentioned previously, the main application of these inequalities is in the context of mixing times of continuous-time local Markovian evolutions over quantum lattice spin systems - although we expect these inequalities and their proof techniques to find other applications in quantum information theory. In \cite{cesi2001quasi}, Cesi used his inequality in order to show the exponential convergence in relative entropy of classical Glauber dynamics on lattice systems towards equilibrium, independently of the lattice size, in the form of a positive MLSI constant (defined in Section \ref{subsec:application_MLSI}). In a subsequent paper \cite{capel2020modified} that appeared after the first version of the current manuscript, we made use of the approximate tensorization inequality to show similar convergences for dissipative quantum Gibbs samplers.

Moreover, in this paper we illustrate the potential of these techniques in the aforementioned context of mixing times by estimating the MLSI constant whenever the generator of the dynamics is constructed from Pinching onto a pair of different, orthonormal bases. Additionally, we use our main results in approximate tensorization to obtain new entropic uncertainty relations in Section \ref{subsec:uncertainty_relations}.

\paragraph{Outline of the paper}
In \Cref{sec2}, we review basic mathematical concepts used in this paper, and more particularly the notion of a non-commutative conditional expectation. We derive theoretical expressions on the \textit{strong} ($c$) and \textit{weak} ($d$) constants for general von Neumann algebras in \Cref{sec:strong-quasi-tensorization}, where our main result is stated as \Cref{theo_AT_pinching}. We subsequently apply them to obtain strengthenings of uncertainty relations and examples of positivity of MLSI in \Cref{sec:applications}.  Moreover, in \Cref{sec:dabies}, we derive explicit bounds on the constants $c$ and $d$ for conditional expectations associated to Gibbs samplers on lattice spin systems in terms of the interactions of the corresponding Hamiltonian. In \Cref{sec:conclusion}, we discuss the results presented in our paper and how they have been applied to different contexts after the appearance of the first version of our manuscript.
Finally, in \Cref{sec:condexp}, we review the conditional expectations arising from Petz recovery maps and from Davies generators and show in that both conditional expectations coincide. We conclude by collecting the proofs of some technical results in Appendix \ref{appendix:proofs}.

\section{Notations and definitions}\label{sec2}
In this section, we fix the basic notation used in the paper, and introduce the necessary definitions. 

\subsection{Basic notations}

Let $(\cH,\langle .|.\rangle)$ be a finite-dimensional Hilbert space of dimension $d_\cH$. We denote by $\cB(\cH)$ the Banach space of bounded operators on $\cH$,  by $\cB_{\operatorname{sa}}(\cH)$ the subspace of self-adjoint operators on $\cH$, and by $\cB_+(\cH)$ the cone of positive semidefinite operators on $\cH$. The adjoint of an operator $Y$ is written as $Y^*$. We will also use the same notations $\cN_{\operatorname{sa}}$ and $\cN_+$ in the case of a von Neumann subalgebra $\cN$ of $\cB(\cH)$. The identity operator on $\cN$ is denoted by $\Id_\cN$, dropping the index $\cN$ when it is unnecessary. In the case of $\cB(\CC^\ell)$, $\ell\in\NN$, we will also use the notation $\Id$ for $\Id_{\CC^\ell}$. Similarly, given a map $\Phi:\cB(\cH)\to\cB(\cH)$, we denote its dual with respect to the Hilbert-Schmidt inner product as $\Phi_*$. We also denote by $\id_{\cB(\cH)}$, or simply $\id$, resp. $\id_\ell$, the identity superoperator on $\cB(\cH)$, resp. $\cB(\CC^\ell)$. We denote by $\mathcal{D}(\cH)$ the set of positive semidefinite, trace-one operators on $\cH$, also called \textit{density operators},  by $\cD_+(\cH)$ the subset of full-rank density operators, and by $\cD_{\leq}(\cH)$ the set of subnormalized density operators. In the following, we will often identify a density matrix $\rho\in\mathcal{D}(\cH)$ and the \textit{state} it defines, that is the positive linear functional $\cB(\cH)\ni X\mapsto\tr(\rho \,X)$. More generally, given a von Neumann subalgebra $\cN\subseteq\cB(\cH)$ with block decomposition $\cN:=\bigoplus_l\mathbb{M}_{n_l}\otimes \Id_{m_l}$, we denote by $\cD(\cN)$ the set of states of the form 
\begin{align}\label{eq:decompsigma}
\sigma:=\bigoplus_{l} p_l\,\rho_l\otimes\tau_l\,,
\end{align}
for some $n_l\times n_l$ states $\rho_l$ and $ m_l\times m_l$ full-rank states $\tau_l$. The sets $\cD(\cN)_+$ and $\cD(\cN)_{\le}$ are defined similarly.

\subsection{Entropic quantities and $\mathbb{L}_p$ spaces} Throughout this paper, we will use various distance measures between states and between observables: given a state $\rho\in\cD(\cN)$, its \textit{von Neuman entropy} is defined by
\begin{align*}
S(\rho):=-\tr\big[ \rho\,\ln\rho \big]\,.
\end{align*}
When $\rho\equiv \rho_{AB}\in \cD(\cH_A\otimes \cH_B)$ is the state of a bipartite quantum system, its \textit{conditional entropy} is defined by
\begin{align*}
S(A|B)_\rho:=S(\rho_{AB})-S(\rho_B)\,,
\end{align*}
where $\rho_B:=\tr_A(\rho)$ corresponds to the marginal of $\rho$ over the subsystem $\cH_B$. More generally, given two positive semidefinite operators $\rho,\sigma\in \cB_+(\cH)$, the \textit{relative entropy} between $\rho$ and $\sigma$ is defined as follows \cite{Umegaki-RelativeEntropy-1962}:
\begin{align*}
D(\rho\|\sigma):=\left\{\begin{aligned}
&\tr[\rho\,(\ln\rho-\ln\sigma)]\,\,\,\,\supp(\rho)\subset\supp(\sigma)\\
&+\infty\,\,\qquad\qquad\qquad\text{else}
\end{aligned}\right.
\end{align*}
Moreover, given (possibly subnormalized) positive semidefinite operators $\rho\ge 0$ and $\sigma>0$, their \textit{max-relative entropy} is defined as \cite{datta2009minmaxrelativeentropies}: 
\begin{align*}
D_{\max}(\rho\|\sigma):= \inf\{\lambda|\,\rho\le \e^{\lambda}\sigma    \}  \equiv\ln\,(\|\sigma^{-\frac{1}{2}}\,\rho\,\sigma^{-\frac{1}{2}}\|_\infty)\,.
\end{align*}
From the max-relative entropy, we can define the \textit{max-information} of a (possibly subnormalized) bipartite state $\rho_{AB}\in\cD_{\le}(\cH_A\otimes \cH_B)$ as follows \cite{berta2011quantum}:
\begin{align*}
I_{\max}(A:B)_{\rho}\equiv I_{\max}(\cH_A:\cH_B)_{\rho} :=\inf_{\tau_{B}\in\cD(\cH)}\,D_{\max}(\rho_{AB}\|\rho_A\otimes \tau_B)\,.
\end{align*}
Given a subalgebra $\cN$ of $\cB(\cH)$ and $\sigma\in\cD_+(\cN)$, we define the modular maps $\Gamma_\sigma:\cN\to \cB(\cH)$ and $\Delta_\sigma:\cN\to\cN$ as follows
\begin{align*}
	\Gamma_\sigma(X):=\sigma^{1/2}\,X\,\sigma^{1/2}\, , \qquad\Delta_\sigma(X)=\sigma\,X\,\sigma^{-1}\,.
	\end{align*}
Then for any $p\ge 1$ and $X\in\cN$, its non-commutative weighted $\mathbb{L}_p(\sigma)$-norm is defined as \cite{Kosaki-noncommLp-1984}:
\begin{align*}
	\|X\|_{\mathbb{L}_p(\sigma)}:=\tr\left[ \big|\Gamma_\sigma^{\frac{1}{p}}(X) \big|^p  \right]^{\frac{1}{p}}\, ,
	\end{align*}
	and $\|X\|_{\mathbb{L}_{\infty}(\sigma)}=\|X\|_\infty$, the operator norm of $X$, which we will also often more simply denote by $\| X \|$. We call the space $\cB(\cH)$ endowed with the norm $\|.\|_{\mathbb{L}_p(\sigma)}$ the \textit{quantum $\mathbb{L}_p(\sigma)$ space}. In the case $p=2$, we have a Hilbert space, with corresponding $\sigma$-KMS scalar product
\begin{align}\label{KMSinner}
	\langle X,\,Y\rangle_\sigma:=\tr \left[\sigma^{1/2}X^*\sigma^{1/2}Y \right]\,.
	\end{align}
Weighted $\mathbb{L}_p$ norms enjoy the following useful properties:
\begin{itemize}
	\item[-] H\"{o}lder's inequality: for any $p,\hat{p}\ge 1$ such that $p^{-1}+\hat{p}^{-1}=1$, and any $X,Y\in\cN$:
	\begin{align*}
		\langle X,\,Y\rangle_\sigma\le \|X\|_{\mathbb{L}_p(\sigma)}\,\|Y\|_{\mathbb{L}_{\hat{p}}(\sigma)}\,.
		\end{align*}
		Here, $\hat{p}$ is the \textit{H\"{o}lder conjugate} of $p$.
	\item[-] Duality of norms: for any $p\ge 1$ of H\"{o}lder conjugate $\hat{p}$, and any $X\in\cN$:
	\begin{align*}
		\|X\|_{\mathbb{L}_p(\sigma)}=\sup_{\|Y\|_{\mathbb{L}_{\hat{p}}(\sigma)}\le 1}\,\langle Y,\,X\rangle_\sigma\,.
		\end{align*}
	\item[-] For any completely positive, unital linear map $\Phi:\cN\to \cN$ such that $\Phi_*(\sigma)=\sigma$, any $p\ge 1$ and any $X\in\cN$:
	\begin{align}\label{Phiinvariantnorm}
		\|\Phi(X)\|_{\mathbb{L}_p(\sigma)}\le \|X\|_{\mathbb{L}_p(\sigma)}\,.
		\end{align}
	\end{itemize}
	
\subsection{Conditional expectations}\label{subsec:conditional_expectations}

Here, we introduce the main object studied in this paper:

\begin{definition}[Conditional expectations~\cite{OhyaPetz-Entropy-1993}]
 Let $\cM\subset \cN$ be a von Neumann subalgebra of $\cN$. Given a state $\sigma\in\cD_+(\cM)$, a linear map $E:\cN\to\cM$ is called a \textit{conditional expectation} with respect to $\sigma$ of $\cN$ onto $\cM$ if the following conditions are satisfied:
\begin{itemize}
	\item[-] For all $X\in\cN$, $\|E[X]\|\le \|X\|$;
	\item[-] For all $X\in \cM$, $E[X]=X$;
	\item[-] For all $X\in\cN$, $\tr[\sigma E[X]]=\tr[\sigma X]$.
\end{itemize}
\end{definition}

A conditional expectation satisfies the following useful properties (see \cite{Aspects2003} for proofs and more details):

\begin{proposition}\label{propositioncondexp}
	Conditional expectations generically satisfy the following properties:
	\begin{itemize}
		\item[(i)] The map $E$ is completely positive and unital.
		\item[(ii)] For any $X\in\cN$ and any $Y,Z\in\cM$, $E[YXZ]=Y E[X]Z$.
		\item[(iii)] $E$ is self-adjoint with respect to the scalar product $\langle .,\,.\rangle_\sigma$. In other words:
		\begin{align*}
			\Gamma_\sigma\circ E=E_*\circ\Gamma_\sigma   \,,
		\end{align*}
		where $E_*$ denotes the adjoint of $E$ with respect to the Hilbert-Schmidt inner product.
		\item[(iv)] $E$ commutes with the modular automorphism group of $\sigma$: for any $s\in\RR$,
		\begin{align}
			\Delta_\sigma^{is}\circ E=E\circ \Delta^{is}_\sigma\,.
		\end{align}
		\item[(v)] Uniqueness: given a von Neumann subalgebra $\cM\subset \cN$ and a faithful state $\sigma$, the existence of a conditional expectation $E$ is equivalent to the invariance of $\cM$ under the modular automorphism group $(\Delta_\sigma^{is})_{s\in\RR}$.  In this case, $E$ is uniquely determined by $\sigma$.
	\end{itemize}
\end{proposition}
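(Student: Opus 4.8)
The plan is to separate the automatic $C^*$-algebraic facts (i)--(ii) from the state-dependent modular facts (iii)--(v), and to treat the existence/uniqueness statement (v) as the one genuinely nontrivial ingredient. For (i) and (ii) I would invoke \emph{Tomiyama's theorem}. The three defining conditions make $E$ a norm-one idempotent onto $\cM$: it is contractive by the first bullet; it fixes $\cM$ by the second, so $E^2=E$ because $\ran E\subseteq\cM$; and since $\Id\in\cM$ we get $E[\Id]=\Id$, which forces unitality and $\|E\|=1$. Tomiyama's theorem asserts that any norm-one projection of a $C^*$-algebra onto a $C^*$-subalgebra is automatically completely positive and an $\cM$-bimodule map, which is precisely (i) together with (ii). One could instead extract the bimodule identity by hand from the Kadison--Schwarz inequality $E[X]^*E[X]\le E[X^*X]$ valid for the $2$-positive map $E$, but quoting Tomiyama is cleaner.

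Property (iv) is then a one-line consequence of (ii). Since $\sigma\in\cD_+(\cM)$ and $\cM$ is a von Neumann algebra closed under Borel functional calculus, $\sigma^{\pm is}\in\cM$ for every $s\in\RR$; hence the bimodule property gives $E[\sigma^{is}X\sigma^{-is}]=\sigma^{is}E[X]\sigma^{-is}$, that is $E\circ\Delta_\sigma^{is}=\Delta_\sigma^{is}\circ E$. In the more general situation where the reference state need not lie in $\cM$, the same commutation follows from uniqueness in (v): the twisted map $\Delta_\sigma^{-is}\circ E\circ\Delta_\sigma^{is}$ is again a $\sigma$-preserving conditional expectation onto $\cM$ (it maps into and fixes $\cM$ because $\Delta_\sigma^{is}$ preserves $\cM$, is contractive, and preserves $\sigma$ since $\sigma$ commutes with $\sigma^{\pm is}$), hence it equals $E$.

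For (iii) I would prove the key lemma that $\langle m,(\id-E)[Y]\rangle_\sigma=0$ for all $m\in\cM$ and $Y\in\cN$. Using $\sigma^{\pm 1/2}\in\cM$ and (ii) to write $\sigma^{1/2}E[Y]=E[\sigma^{1/2}Y\sigma^{-1/2}]\,\sigma^{1/2}$, then cyclicity of the trace, the bimodule property to absorb $m^*$ inside $E$, and finally the defining invariance $\tr[\sigma E[\,\cdot\,]]=\tr[\sigma\,\cdot\,]$, the quantity $\langle m,E[Y]\rangle_\sigma=\tr[\sigma^{1/2}m^*\sigma^{1/2}E[Y]]$ collapses to $\tr[\sigma^{1/2}m^*\sigma^{1/2}Y]=\langle m,Y\rangle_\sigma$. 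Applying the lemma with $m=E[X]$ gives $\langle E[X],Y\rangle_\sigma=\langle E[X],E[Y]\rangle_\sigma$, and its conjugate with $m=E[Y]$ gives $\langle X,E[Y]\rangle_\sigma=\langle E[X],E[Y]\rangle_\sigma$; comparing the two yields the KMS-symmetry $\langle E[X],Y\rangle_\sigma=\langle X,E[Y]\rangle_\sigma$, which unwinds exactly into the stated identity $\Gamma_\sigma\circ E=E_*\circ\Gamma_\sigma$. (If $\sigma$ is only assumed faithful rather than lying in $\cM$, the step $\sigma^{1/2}E[Y]\sigma^{-1/2}=E[\sigma^{1/2}Y\sigma^{-1/2}]$ is instead obtained by analytically continuing (iv) in $s$, which is legitimate because in finite dimension $s\mapsto E\circ\Delta_\sigma^{is}-\Delta_\sigma^{is}\circ E$ is entire and vanishes on $\RR$.)

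Finally, (v) is where I expect the real work to sit, and it is the step I would lean on \emph{Takesaki's theorem} for: a $\sigma$-preserving conditional expectation onto $\cM$ exists if and only if $\cM$ is globally invariant under $(\Delta_\sigma^{is})_{s\in\RR}$, and it is then unique. The uniqueness direction can be recovered from the key lemma above, since any such $E$ must agree with the $\sigma$-KMS orthogonal projection of $\cN$ onto $\cM$; existence under the modular-invariance hypothesis is the genuinely nonelementary input. I therefore expect the modular-invariance characterization in (v) to be the main obstacle: once it is in place, properties (i)--(iv) follow from Tomiyama's theorem together with the short computations above.
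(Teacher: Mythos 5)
The paper does not actually prove this proposition: it states it and refers the reader to \cite{Aspects2003} for proofs, so there is no in-paper argument to compare against. Your proposal reconstructs precisely the standard arguments that such a reference contains --- Tomiyama's theorem for (i)--(ii), the $\sigma$-KMS orthogonality computation for (iii), and Takesaki's theorem for (iv)--(v) --- and all the steps are sound in the finite-dimensional setting of the paper. One small caution: as written, the chain in which (iii) uses (iv), (iv) in the general case uses uniqueness from (v), and uniqueness from (v) is \emph{recovered from the key lemma} of (iii) would be circular; the cycle is broken either by taking uniqueness directly from Takesaki's theorem (as you primarily do), or by observing that the key lemma only needs modular invariance of the algebra $\cM$ itself --- writing $\sigma^{1/2}m^*\sigma^{1/2}=\Delta_\sigma^{1/2}(m^*)\,\sigma$ with $\Delta_\sigma^{1/2}(m^*)\in\cM$ and then using the bimodule property and $\tr[\sigma E[\cdot]]=\tr[\sigma\,\cdot\,]$ --- rather than the commutation of $E$ with the modular flow.
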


From now on, and with a slight abuse of notations, given the finite-dimensional von Neumann subalgebra $\cN=E[\cB(\cH)]$ of $\cB(\cH)$, we denote by $\cD(\cN):= E_{*}(\cD(\cH))$ its corresponding set of states that are invariant by $E$, so that $\cD(\cH)\equiv \cD(\cB(\cH))$. In other words, the states $\tau_l$ in the decomposition \eqref{eq:decompsigma} are now fixed by $E$. Similarly, the set of subnormalized states on the algebra $\cN$ is defined as $\cD_{\le}(\cN) $. We also introduce the concept of a conditional covariance: given a von Neumann-subalgebra $\cM\subset \cN$, a conditional expectation $E^\cM$ from $\cN$ onto $\cM$ and a quantum state $\sigma\in\cD_+(\cM)$, where $\cD(\cM)$ is defined with respect to $E^\cM$, we define the \textit{conditional covariance} functional as follows: for any two $X,Y\in\cN$,
\begin{align}\label{condcov}
	\operatorname{Cov}_{\cM,\sigma}(X,Y):=\langle X-E^\cM[X],\,Y-E^\cM[Y]\rangle_\sigma\,.
\end{align}

\subsection{Two examples of classes of conditional expectations}\label{sect_twoexamples}

In this subsection, we provide more details about the conditional expectations that we will consider in the case of Gibbs states on lattice spin systems in Section \ref{sec:dabies}. Some properties and new results of independent interest regarding these conditional expectations are deferred to Appendix \ref{sec:condexp} for sake of clarity.

\subsubsection{Conditional expectations generated by a Petz recovery map}

Let $\sigma$ be a faithful density matrix on a finite-dimensional algebra $\cN$ and let $\cM\subset\cN$ be a subalgebra. We denote by $E_\tau$  the conditional expectation onto $\cM$ with respect to the completely mixed state (i.e. $E_\tau$ is self-adjoint with respect to the Hilbert-Schmidt inner product). We also adopt the following notations: we write $\sigma_\cM=E_\tau(\sigma)$ and
\[\mathcal A_\sigma(X):=\sigma_\cM^{-\frac12}\,E_\tau[\sigma^{\frac12}\,X\,\sigma^{\frac12}]\,\sigma_\cM^{-\frac12}\,.\]
Remark that $\mathcal A_\sigma$ is also the unique map such that for all $X\in\cN$ and all $Y\in\cM$:
\[\tr[\sigma^{\frac12}\,X\,\sigma^{\frac12}\,Y]=\tr[\sigma_\cM^{\frac12}\,\mathcal A_\sigma(X)\,\sigma_\cM^{\frac12}\,Y]\,.\]
The adjoint of $\mathcal A_\sigma$ is the Petz recovery map of $E_\tau$ with respect to $\sigma$, denoted by $\cR_\sigma$:
\begin{align*}
\mathcal{R}_{\sigma}(\rho_\cM):=\sigma^{\frac{1}{2}}\sigma_\cM^{-\frac{1}{2}}\rho_\cM\sigma_\cM^{-\frac{1}{2}}\sigma^{\frac{1}{2}}\,,
\end{align*}
where $\rho_\cM:=E_\tau(\rho)$. It is proved in \cite{carlen2017recovery} that $\mathcal A_\sigma$ is a conditional expectation if and only if $\sigma\,X\,\sigma^{-1}\in\cM$ for all $X\in\cM$. In the general case, we denote by 
\begin{equation}\label{eq_cond_Petz}
E_\sigma:=\lim_{n\to\infty}\mathcal A_\sigma^n 
\end{equation}
the projection on its fixed-point algebra for the $\sigma$-KMS inner product, which is a conditional expectation as we assumed $\sigma$ to be faithful. That is, $E_\sigma$ is the orthogonal projection for $\langle\cdot,\cdot\rangle_\sigma$ on the algebra:
\[\cF(\mathcal A_\sigma)=\{X\in\cN\,;\,\mathcal A_\sigma(X)=X\}\,.\]

	\subsubsection{Conditional expectations coming from Davies semigroups}
	
The basic model for the evolution of an open system in the Markovian regime is given by a quantum Markov semigroup (or QMS) $(\cP_t)_{t\ge0}$ acting on $\cB(\cH)$. Such a semigroup is characterised by its generator, called the Lindbladian $\LL$, which is defined on $\cB(\cH)$ by 
$$\cL(X)={\lim}_{t\to 0}\,\frac{1}{t}\,(\cP_t(X)-X) $$
 for all $X\in\cB(\cH)$. Recall that by the GKLS Theorem \cite{Lind,[GKS76]}, $\cL$ takes the following form: for all $X\in\cB(\cH)$,
\begin{equation}\label{eqlindblad}
\cL(X)=i[H,X]+\frac{1}{2}\sum_{k=1}^l{\left[2\,L_k^*XL_k-\left(L_k^*L_k\,X+X\,L_k^*L_k\right)\right]}\, , 
\end{equation}
where $H\in\cB_{\operatorname{sa}}(\cH)$, the sum runs over a finite number of \textit{Lindblad operators} $L_k\in\cB(\cH)$, and $[\cdot,\cdot]$ denotes the commutator defined as $[X,Y]:=XY-YX$, $\forall X,Y\in\cB(\cH)$. The QMS is said to be \textit{faithful} if it admits a full-rank invariant state $\sigma$. When the state $\sigma$ is the unique invariant state, the semigroup is called \textit{primitive}. Further assuming the self-adjointness of the generator $\cL$ with respect to the inner product (\ref{KMSinner}) (or \textit{detailed balance condition}), there exists a conditional expectation $E\equiv E_\cF$ onto the fixed-point subalgebra $\cF(\cL):=\{X\in\cB(\cH):\,\cL(X)=0\}$ such that
\begin{align*}
\cP_t(X)\underset{t\to\infty}{\to}E[X]\,
\end{align*}
 for all $X\in\cB(\cH)$. \\
 
 We now focus on a particular class of QMS called Davies QMS. Such semigroups are obtained in the \textit{weak coupling limit} of a system and a heat bath.
 Let $H$ be a selfadjoint operator on $\cH$, representing the Hamiltonian of the system. The corresponding Gibbs state at inverse temperature $\beta$ is defined as
\begin{align}
\sigma=\frac{\e^{-\beta H}}{\tr[\e^{-\beta H}]}\,.
\end{align}

Next, consider the Hamiltonian $H^{\operatorname{HB}}$ of the heat bath, as well as a set of system-bath interactions $\{ S_{\alpha}\otimes B_{\alpha} \}$, for some label $\alpha$. Here, we do not assume anything on the $S_\alpha$'s. The Hamiltonian of the universe composed of the system and its heat-bath is given by
\begin{align}
H=H_\Lambda+H^{\operatorname{HB}}+\sum_{\alpha\in\Lambda}S_{\alpha}\otimes B_{\alpha}\,.
\end{align}
Assuming that the bath is in a Gibbs state, by a standard argument (e.g. weak coupling limit, see \cite{[SL78]}), the evolution on the system can be approximated by a quantum Markov semigroup whose generator is of the following form:
\begin{align}\label{eq_lindblad}
\cL^{\operatorname{D},\beta}(X)=\sum_{\omega,\alpha}\,\chi^{\beta}_{\alpha}(\omega)\,\Big(  S_{\alpha}^*(\omega)XS_{\alpha}(\omega)-\frac{1}{2}\,\big\{  S_{\alpha}^*(\omega)S_{\alpha}(\omega),X \big\}   \Big)\,.
\end{align}
The Fourier coefficients of the two-point correlation functions of the environment $\chi_{\alpha}^\beta$ satisfy the following KMS condition:
\begin{align}\label{eq_KMS}
\chi_{\alpha}^\beta(-\omega)=\e^{-\beta\omega}\,\chi_{\alpha}^\beta(\omega)\,.
\end{align}
The operators $S_{\alpha}(\omega)$ are the Fourier coefficients of the system couplings $S_{\alpha}$, which means that they satisfy the following equation for any $t\in\RR$:
\begin{align}\label{eq!}
\e^{-itH}\,S_{\alpha}\e^{it H}=\sum_\omega\e^{it\omega}S_{\alpha}(\omega)\,\qquad\Leftrightarrow  \qquad  S_\alpha(\omega)=\sum_{\eps-\eps'=\omega}P_\eps\,S_\alpha\,P_{\eps'}\,.
\end{align}
where the sum is over a finite number of frequencies. This implies in particular the following useful relation:
\begin{align}\label{eq_eigenvector1}
\Delta_{\sigma}(S_{\alpha}(\omega))=\e^{\beta\omega}\,S_{\alpha}(\omega)\,.
\end{align}
The above identity means that the operators $S_{\alpha}(\omega)$ form a basis of eigenvectors of $\Delta_\sigma$. Next, we define the conditional expectation onto the algebra $\cF(\cL)$ of fixed points of $\cL$ with respect to the Gibbs state $\sigma=\sigma^\beta$ as follows \cite{Kastoryano2014}: 
\begin{align}\label{Davies_cond}
E^{\operatorname{D},\beta}:=\lim_{t\to \infty}\e^{t\cL^{\operatorname{D},\beta}}\,.
\end{align}

Some results regarding the fixed-point algebra associated to this conditional expectation are contained in Appendix \ref{sec:condexp}. In particular, we prove the following theorem which is of independent interest.

\begin{theorem}\label{theo_equal_cond}
 Define the algebra $\cM=\{S_\alpha\}'$, $E^{\operatorname{D},\beta}$ as above and $E_{\sigma}$ as in \Cref{eq_cond_Petz} with respect to the inclusion $\cM\subset\cB(\cH)$. Then both conditional expectations coincide.
\end{theorem}

\section{Weak approximate tensorization of the relative entropy}\label{sec:strong-quasi-tensorization}

This section is devoted to the main results of this article, namely approximate tensorization inequalities for the relative entropy. 

\begin{definition}
	Let $\cM\subset \cN_1,\,\cN_2\subset \cN$ be finite-dimensional von Neumann algebras and $E^{\cM},\,E_1 ,\, E_2$ associated conditional expectations onto $\cM$, resp. $\cN_1,\,\cN_2$. These conditional expectations are said to satisfy a \textit{weak approximate tensorization} with constants $c \geq1$ and $d\geq 0$, denoted by AT$(c,d)$, if, for any state $\rho\in\cD(\cN)$:
	\begin{align}\tag{$\operatorname{AT(c,d)}$}\label{at}
	D(\rho\|E^\cM_*(\rho))\le c\,\big(D(\rho\|E_{1*}(\rho))+D(\rho\|E_{2*}(\rho))\big)+d
	\end{align}
	The approximate tensorization is said to be \textit{strong} if $d=0$.
\end{definition}

\begin{remark}
	One can easily get similar inequalities for $k\ge 2$ algebras $\cM\subset \cN_1,\dots \cN_k\subset \cN$ by simply averaging over each inequality for two $k_1\ne k_2\in[k]$. Denoting by $c$ and $d$ as the maximal constants we get by considering two algebras $\cN_{k_1}$ and $\cN_{k_2}$ pairwise, we would thus obtain
	\begin{align}\label{at_multiple_k}
	D(\rho\|E^\cM_*(\rho))\le \frac{2c}{k}\,\sum_{j=1}^k\,D(\rho\|E_{j*}(\rho))+d\,.
	\end{align}
	For sake of clarity, we will restrict to the case $k=2$ in the rest of the article.
	\end{remark}

% \sout{We prove below a technical lemma that constitutes our first basic step towards results on \textit{weak} approximate tensorization, with a multiplicative and an additive constant. Subsequently, we estimate the additive term in the next subsections by employing several different techniques.} 
The first technical result presented in this section is Lemma \ref{propAT}, derived from the so-called \textit{multivariate trace inequalities} \cite{Sutter2017}. It takes the form 
	\begin{align*}
D(\rho\|E^\cM_*(\rho))\le D(\rho\|E_{1*}(\rho))+D(\rho\|E_{2*}(\rho))  + \xi (E_{1*}(\rho), \,  E_{2*}(\rho), \, E^\cM_*(\rho))\, ,
\end{align*}
where $\xi (E_{1*}(\rho), \,  E_{2*}(\rho), \, E^\cM_*(\rho))$ is an additive error term that we subsequently estimate via different approaches in the subsequent \Cref{sec:changemeas,subsec:ApproxTensor_Pinching,sec:clustering}: Lemma \ref{propAT} directly yields a generalization of a result of \cite{gao2017strong} for conditional expectations with respect to non-tracial states in Corollary \ref{corollaruweak}. Moreover, using a noncommutative change of measure argument \cite{bardet2021group}, we provide in \Cref{HSAT} some first estimates of the strong and weak constants $c$ and $d$ in AT($c,d$) in terms of the maximal and minimal eigenvalues of a common invariant state of the three conditional expectations involved. 

Next, in \Cref{theo_AT_pinching}, we use a different technique involving Pinching maps onto certain subspaces that appear in a block-diagonal decomposition of $\mathcal{M}$ (this setting is properly introduced in Section \ref{subsec:ApproxTensor_Pinching}) to obtain the inequality:
	\begin{align*}
D(\rho\|E^\cM_*(\rho))\le \frac{1}{1-c_1} \left( D(\rho\|E_{1*}(\rho))+D(\rho\|E_{2*}(\rho))  \right) + \xi_2 (E_{1*}(\rho), \,  E_{2*}(\rho), \, E^\cM_*(\rho))\, ,
\end{align*}
 where $\xi_2 (E_{1*}(\rho), \,  E_{2*}(\rho), \, E^\cM_*(\rho))$ strongly depends on the Pinching map with respect to  $E^\cM_*(\rho)$ and it is subsequently estimated in Proposition \ref{propboundsd1d2}. Furthermore, the multiplicative error term above can be interpreted as arising from a condition of clustering of correlations for the state $E_*^\cM(\rho)$ (see Section \ref{sec:clustering}). 

\subsection{A technical lemma}
% \sout{In this section, we prove our main results concerning \textit{approximate tensorization of the relative entropy} (also known in the literature as \textit{quasi-factorization} \cite{cesi2001quasi}, \cite{[CLP18a]}, \cite{BardetCapelLuciaPerezGarciaRouze-HeatBath1DMLSI-2019}). 
% In particular, we relate the weak and strong constants to properties of the subalgebras.} 

In the next result, we derive a bound on the difference between $D(\rho\|E^\cM_*(\rho))$ and the sum of the relative entropies $D(\rho\|E_{i*}(\rho))$, which is our key tool in finding constants $c$ and $d$ for which \ref{at} is satisfied. The result is inspired by the work of \cite{cesi2001quasi,[D02]} and makes use of the multivariate trace inequalities introduced in \cite{Sutter2017}:

\begin{lemma}\label{propAT}
	Let $\cM\subset \cN_1 ,\,\cN_2\subset \cN$ be finite-dimensional von Neumann algebras and $E^{\cM},E_1 ,\, E_2$ their corresponding conditional expectations. Then the following inequality holds for any $\rho\in\cD(\cN)$, writing $\rho_j:=E_{j*}(\rho)$ and $\rho_\cM:=E^\cM_{*}(\rho)$:
	\begin{align}\label{mainequation}
	D(\rho\|\rho_\cM)\le \,D(\rho\|\rho_1)+D(\rho\|\rho_2)+\,\ln \left\lbrace \int_{-\infty}^\infty\,\tr\left[\rho_{1}\, \rho_{\cM}^{\frac{-1-it}{2}}\, \rho_{2}\, \rho_{\cM}^{\frac{-1+it}{2}}   \right]\,\beta_0(t)\,dt \right\rbrace \,,
	\end{align}
	with the probability distribution function
	\begin{equation*}
	\beta_0(t)= \frac{\pi}{2} (\cosh(\pi t)+ 1)^{-1}\,.
	\end{equation*}
\end{lemma}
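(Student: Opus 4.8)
The plan is to cancel the common entropy term and reduce \eqref{mainequation} to a single operator trace inequality. First I would write out the three relative entropies; the terms $\tr[\rho\ln\rho]$ combine to give
\begin{align*}
D(\rho\|\rho_\cM)-D(\rho\|\rho_1)-D(\rho\|\rho_2)=\tr\big[\rho\,(\ln\rho_1+\ln\rho_2-\ln\rho_\cM-\ln\rho)\big]\,.
\end{align*}
Setting $H:=\ln\rho_1+\ln\rho_2-\ln\rho_\cM$, which is self-adjoint, the asserted bound is equivalent to
\begin{align*}
\tr[\rho\,H]-\tr[\rho\ln\rho]\le \ln\Big\{\int_{-\infty}^\infty \tr\big[\rho_1\,\rho_\cM^{\frac{-1-it}{2}}\,\rho_2\,\rho_\cM^{\frac{-1+it}{2}}\big]\,\beta_0(t)\,dt\Big\}\,.
\end{align*}
I would assume $\rho$ of full rank so that every logarithm is well defined (hence so are $\rho_1,\rho_2,\rho_\cM$ on the relevant supports), the general statement following by lower semicontinuity of the relative entropy.

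For the left-hand side I would invoke the Gibbs variational principle, i.e. the nonnegativity of $D(\rho\,\|\,\e^{H}/\tr[\e^{H}])$: for every self-adjoint $H$,
\begin{align*}
\tr[\rho\,H]-\tr[\rho\ln\rho]\le \ln\tr\big[\e^{H}\big]=\ln\tr\big[\exp(\ln\rho_1-\ln\rho_\cM+\ln\rho_2)\big]\,.
\end{align*}
Taking logarithms, it then remains to establish the purely operator-theoretic inequality
\begin{align*}
\tr\big[\exp(\ln\rho_1-\ln\rho_\cM+\ln\rho_2)\big]\le \int_{-\infty}^\infty \tr\big[\rho_1\,\rho_\cM^{\frac{-1-it}{2}}\,\rho_2\,\rho_\cM^{\frac{-1+it}{2}}\big]\,\beta_0(t)\,dt\,,
\end{align*}
which, combined with the previous display, yields \eqref{mainequation}. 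Crucially, $\e^{H}\neq\rho_1\rho_2\rho_\cM^{-1}$ unless the three operators commute, so this step is exactly where the noncommutativity must be absorbed into an integral.

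This last inequality is the three-operator instance of the multivariate trace inequality of \cite{Sutter2017}, specialised to the positive operators $\rho_1,\rho_\cM^{-1},\rho_2$. The mechanism—complex (Stein--Hirschman) interpolation on the strip $\{z:0\le\re z\le1\}$ of an analytic operator family, whose boundary Poisson kernel is precisely the universal weight $\beta_0(t)=\tfrac{\pi}{2}(\cosh(\pi t)+1)^{-1}$—is carried out there, so I would simply apply their theorem with $n=3$. The hard part will be to extract this \emph{particular} form, with the two factors $\rho_1,\rho_2$ kept at unit power and the inverse factor $\rho_\cM^{-1}$ carrying the whole $it$-twist split symmetrically as $\rho_\cM^{(-1\mp it)/2}$, rather than the more symmetric version in which each operator is raised to $1+it$. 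This orientation is what makes the bound sharp in the commuting case: when $\rho_1,\rho_2,\rho_\cM$ commute the integrand is independent of $t$ and equals $\tr[\rho_1\rho_2\rho_\cM^{-1}]$, so that $\int\beta_0=1$ collapses \eqref{mainequation} to the quasi-factorization bound of \cite{cesi2001quasi}.

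Finally I would check that the integrand is nonnegative, so that the outer logarithm is meaningful: abbreviating $X_t:=\rho_\cM^{(-1-it)/2}$, one has $\rho_\cM^{(-1+it)/2}=X_t^{*}$ and
\begin{align*}
\tr\big[\rho_1\,X_t\,\rho_2\,X_t^{*}\big]=\tr\big[(\rho_1^{1/2}X_t\rho_2^{1/2})(\rho_1^{1/2}X_t\rho_2^{1/2})^{*}\big]\ge 0
\end{align*}
uniformly in $t$, which together with $\int_{-\infty}^\infty\beta_0(t)\,dt=1$ ensures that the right-hand side of \eqref{mainequation} is finite and real.
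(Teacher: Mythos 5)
Your proposal is correct and follows essentially the same route as the paper: the entropy difference is rewritten as $-D(\rho\|M)$ with $M=\exp(\ln\rho_1+\ln\rho_2-\ln\rho_\cM)$ and bounded via the nonnegativity of the relative entropy (your Gibbs variational principle), after which $\tr[M]$ is controlled by the rotated three-operator Golden--Thompson inequality with the Poisson weight $\beta_0$. The only cosmetic difference is that the paper cites Lieb's triple-matrix theorem and the rotated integral representation of the map $\mathcal{T}_{\rho_\cM}$ from \cite{Sutter2017} as two separate steps, whereas you invoke the combined $n=3$ multivariate trace inequality of \cite{Sutter2017} directly.
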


\begin{proof}
	 The first step of the proof consists in showing the following bound:
	\begin{equation}\label{eq:step-31}
	D(\rho\|\rho_\cM) \leq D(\rho\|\rho_1)+D(\rho\|\rho_2)+ \ln\tr [M]\,,
	\end{equation}
	where
	$  M = \exp \left[ - \ln \rho_\cM + \ln \rho_1  + \ln \rho_2 \right] $.
	Indeed, 
% 	given the conditional expectation of the statement of the theorem,  it follows that:
	\begin{align*}
	\ds D(\rho\|\rho_\cM)- D(\rho\|\rho_1)- D(\rho\|\rho_2) &=\tr \left[ {\rho} \left( - \ln {\rho} \underbrace{ - \ln \rho_\cM + \ln \rho_1+\ln \rho_2}_{\ln M} \right) \right] \\
	& = - D(\rho \| M).
	\end{align*}
	Moreover, since $\tr[M]\neq 1$ in general, from the non-negativity of the relative entropy of two states it follows that:
	\begin{equation*}
	D(\rho \| M) \geq  -\log \tr[M].
	\end{equation*}
	
	In the next step, we bound the error term making use of \cite[Theorem 7]{Lieb1973}  and \cite[Lemma 3.4]{Sutter2017}, concerning Lieb's extension of Golden-Thompson inequality and Sutter, Berta and Tomamichel's rotated expression for Lieb's pseudo-inversion operator using multivariate trace inequalities, respectively: Let us recall that Theorem 7 of \cite{Lieb1973} states that for observables $f, g$ and $h$, we have
	\begin{align*}
 \tr \left[ \operatorname{exp}\left( - f + g + h  \right) \right] \leq  \tr \left[ \operatorname{e}^g \mathcal T_{\operatorname{e}^f} (\operatorname{e}^h)\right],
	\end{align*}
	where $\mathcal T_{f} $ is given by:
	\begin{equation*}
	\mathcal T_{f} (h) := \int_0^\infty  (f + t)^{-1} h (f + t)^{-1} dt \, .
	\end{equation*}
	An alternative definition of this superoperator in terms of multivariate trace inequalities was provided in Lemma 3.4 of \cite{Sutter2017}, namely
		\begin{equation*}
	\mathcal T_{f} (h) = \int_{- \infty}^\infty  \beta_0 (t) \, f^{\frac{-1-it}{2}} h f^{\frac{-1+it}{2}} dt \, ,
	\end{equation*}
		with $\beta_0$ as in the statement of the lemma. Now, we apply both results to inequality (\ref{eq:step-31}), to obtain
	\begin{align*}
	\tr [M] =& \tr \left[ \operatorname{exp}\left( - \ln  {\rho_\cM} +
	\ln {\rho_1} +
	\ln {\rho_2}  \right) \right] \leq  \int_{- \infty}^\infty \tr \left[ \rho_1  \, \rho_\cM^{\frac{-1-it}{2}}  \rho_2 \, \rho_\cM^{\frac{-1+it}{2}} \right]\,\beta_0 (t) \, dt\,,
	\end{align*}
which concludes the proof of the lemma.

\end{proof}

Note that, if a constant $d>0$  is such that 
\begin{equation*}
\ln \int_{-\infty}^\infty\,\tr\left[\rho_{1}\rho_{\cM}^{\frac{-1-it}{2}}\rho_{2}\rho_{\cM}^{\frac{-1+it}{2}}   \right]\,\beta_0(t)\,dt \leq d
\end{equation*}
for every $\rho \in \cD(\cN)$, then inequality \eqref{mainequation} constitutes a result of approximate tensorization AT($1,d$). Using this observation, we obtain an arguably more direct proof of a result appearing in \cite{gao2017strong}, that we generalize to the case of non-tracial states. Indeed, the proof of \cite{gao2017strong} required the introduction of so-called amalgamated $\mathbb{L}_p$ spaces, a technical tool that we do not require. 
% Moreover, this represents a first explicit approximate tensorization, where we directly bound the last term of (\ref{mainequation}) by a quantity that characterizes the conditional expectations involved.
\begin{corollary}\label{corollaruweak}
	With the notations of \Cref{propAT}, define the constant $$d:=\sup_{\rho\in\cD(\cN_{2})}\inf\big\{\ln(\lambda)|\,E_{1*}(\rho)\le \lambda\eta\,\text{ for some }\,\eta\in\cD(\cM)  \big\}\equiv \sup_{\rho\in\cD(\cN_{2})}\inf_{\eta\in\cD(\cM)}\,D_{\max}(E_{1*}(\rho)\|\eta)  \,.$$ Then the following weak approximate tensorization $\operatorname{AT}(1,d)$ holds:
	\begin{align*}
	D(\rho\|\rho_\cM)\le D(\rho\|\rho_1)+D(\rho\|\rho_2)+ d\,.
	\end{align*}
\end{corollary}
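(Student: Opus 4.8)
The plan is to start from the bound \eqref{mainequation} of \Cref{propAT} and show that its last term is at most $d$; equivalently, writing $\beta_0$ for the stated density, I would prove that the error integral $\mathcal E:=\int_{-\infty}^\infty \tr[\rho_1\rho_\cM^{\frac{-1-it}{2}}\rho_2\rho_\cM^{\frac{-1+it}{2}}]\,\beta_0(t)\,dt$ satisfies $\mathcal E\le \e^{d}$, so that $\ln\mathcal E\le d$ and the claim follows by substitution into \eqref{mainequation}. Throughout I would first assume $\rho\in\cD_+(\cN)$, so that $\rho_1=E_{1*}(\rho)$, $\rho_2=E_{2*}(\rho)$ and $\rho_\cM=E^\cM_*(\rho)$ are faithful on their algebras and all complex powers are well defined, recovering the general case at the end by lower semicontinuity of the relative entropy in $\rho$.

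The crucial observation is that $\rho_2$ may be replaced by $E_{1*}(\rho_2)$ inside the integrand at no cost, and this is where the inclusion $\cM\subset\cN_1$ enters. Indeed, the operator $A_t:=\rho_\cM^{\frac{-1+it}{2}}\rho_1\rho_\cM^{\frac{-1-it}{2}}$ lies in $\cN_1$, being a product of $\rho_1\in\cN_1$ with complex powers of $\rho_\cM\in\cM\subseteq\cN_1$. For any $A\in\cN_1$ the defining property of the conditional expectation gives $\tr[A\,E_{1*}(\rho_2)]=\tr[E_1(A)\,\rho_2]=\tr[A\,\rho_2]$, where the first equality is the duality between $E_1$ and $E_{1*}$ and the second uses $E_1(A)=A$. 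Applying this with $A=A_t$ and using cyclicity, I obtain for each $t$ that $\tr[\rho_1\rho_\cM^{\frac{-1-it}{2}}\rho_2\rho_\cM^{\frac{-1+it}{2}}]=\tr[\rho_1\rho_\cM^{\frac{-1-it}{2}}E_{1*}(\rho_2)\rho_\cM^{\frac{-1+it}{2}}]$.

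Next I would invoke the definition of $d$: since $\rho_2\in\cD(\cN_2)$, by lower semicontinuity of $D_{\max}$ and compactness of $\cD(\cM)$ there is a state $\eta\in\cD(\cM)$ with $E_{1*}(\rho_2)\le \e^{d}\eta$. Because the map $X\mapsto \rho_\cM^{\frac{-1-it}{2}}X\rho_\cM^{\frac{-1+it}{2}}=BXB^*$ (with $B=\rho_\cM^{\frac{-1-it}{2}}$) is positive and $\rho_1\ge 0$, the integrand is monotone in $X$ for the positive-semidefinite order, so $\tr[\rho_1\rho_\cM^{\frac{-1-it}{2}}E_{1*}(\rho_2)\rho_\cM^{\frac{-1+it}{2}}]\le \e^{d}\,\tr[\rho_1\rho_\cM^{\frac{-1-it}{2}}\eta\rho_\cM^{\frac{-1+it}{2}}]$ for every $t$. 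It remains to collapse the last term: now the sandwiched operator $\rho_\cM^{\frac{-1-it}{2}}\eta\rho_\cM^{\frac{-1+it}{2}}$ lies in $\cM$, and since $E_1\circ E^\cM=E^\cM$ gives $E^\cM_*\circ E_{1*}=E^\cM_*$ and hence $E^\cM_*(\rho_1)=\rho_\cM$, I may replace $\rho_1$ by $\rho_\cM$ when pairing against any element of $\cM$. Using $\rho_\cM\rho_\cM^{\frac{-1-it}{2}}=\rho_\cM^{\frac{1-it}{2}}$ and cyclicity, this integrand equals $\tr[\eta]=1$ for every $t$, so integrating against the probability density $\beta_0$ yields $\mathcal E\le \e^{d}\int_{-\infty}^\infty\beta_0(t)\,dt=\e^{d}$, as desired.

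The main obstacle is the second step: recognizing that the inclusion $\cM\subset\cN_1$ allows one to insert $E_{1*}$ on the $\rho_2$-factor without changing the trace is exactly what converts the otherwise intractable error term into the quantity $E_{1*}(\rho_2)$ that $d$ controls. The remaining estimates—the positive-semidefinite monotonicity bound and the final collapse through $E^\cM_*(\rho_1)=\rho_\cM$—are routine, and the only technical care needed is the faithfulness reduction to $\rho\in\cD_+(\cN)$ together with the continuity argument that extends the inequality to all $\rho\in\cD(\cN)$.
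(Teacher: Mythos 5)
Your route is the same as the paper's: move $E_{1*}$ onto the $\rho_2$-factor using duality and the inclusion $\cM\subset\cN_1$, invoke the definition of $d$ to replace $E_{1*}(\rho_2)$ by $\e^{d}\eta$ with $\eta\in\cD(\cM)$, and collapse the remaining trace to $\tr[\eta]=1$ before integrating against $\beta_0$. Steps three through five are fine. The weak point is your justification of the key first step: you assert that $A_t:=\rho_\cM^{\frac{-1+it}{2}}\rho_1\rho_\cM^{\frac{-1-it}{2}}$ lies in $\cN_1$ ``being a product of $\rho_1\in\cN_1$ with complex powers of $\rho_\cM\in\cM\subseteq\cN_1$''. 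Neither membership holds in the generality the corollary is designed for (conditional expectations with respect to possibly non-tracial states): writing $\cN_1=\bigoplus_i\cB(\cH_i)\otimes\Id_{\cK_i}$, one has $E_{1*}(\rho)=\bigoplus_i\tr_{\cK_i}[P_i\rho P_i]\otimes\tau_i$ for fixed full-rank states $\tau_i$, so $\rho_1\in\cD(\cN_1)=E_{1*}(\cD(\cH))$ but $\rho_1\notin\cN_1$ unless every $\tau_i$ is completely mixed, and likewise $\rho_\cM\notin\cM$. Taken at face value your reasoning would also yield $\tr[\rho_1\rho_2]=\tr[\rho_1\,E_{1*}(\rho_2)]$, which is false in general, so the argument as written does not establish $E_1[A_t]=A_t$.

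The identity you need is nevertheless true, for a subtler reason that also underlies the paper's own (rather terse) first line: since $E^\cM_*=E_{1*}\circ E^\cM_*$, the state $\rho_\cM$ is $E_{1*}$-invariant and therefore carries the same factors $\tau_i$ as $\rho_1$ in the block decomposition of $\cN_1$; in the product $\rho_\cM^{\frac{-1+it}{2}}\rho_1\rho_\cM^{\frac{-1-it}{2}}$ these combine to $\tau_i^{\frac{-1+it}{2}}\tau_i\,\tau_i^{\frac{-1-it}{2}}=\Id_{\cK_i}$ precisely because the two exponents of $\rho_\cM$ sum to $-1$. Only then is $A_t\in\cN_1$, so that $E_1[A_t]=A_t$ and $\tr[A_t\rho_2]=\tr[A_t\,E_{1*}(\rho_2)]$. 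The same repair is required for your later claim that $\rho_\cM^{\frac{-1-it}{2}}\eta\,\rho_\cM^{\frac{-1+it}{2}}\in\cM$, which you use to replace $\rho_1$ by $\rho_\cM=E^\cM_*(\rho_1)$ in the final collapse. With these two memberships properly justified, the rest of your argument is correct and coincides with the paper's proof; your added remark on reducing to faithful $\rho$ by lower semicontinuity is a harmless refinement the paper omits.
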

\begin{proof} 
	We focus on the last term on the right-hand side of (\ref{mainequation}). First, note that:
\begin{equation*}
	\tr \left[ \rho_{1}\,\rho_\cM^{\frac{-1-it}{2}}\rho_{2}\,\rho_\cM^{\frac{-1+it}{2}} \right] = \tr \left[ \rho\,E_1^*\left( \rho_\cM^{\frac{-1-it}{2}}\rho_{2}\,\rho_\cM^{\frac{-1+it}{2}} \right) \right] = \tr \left[ \rho\,\rho_\cM^{\frac{-1-it}{2}}E_{1*}(\rho_{2})\,\rho_\cM^{\frac{-1+it}{2}} \right].
	\end{equation*}	
	 We have by definition of $d$ that there exists a state $\eta\in\cD(\cM)$ such that for any $t\in\RR$:
	\begin{align*}
	\tr \left[ \rho\,\rho_\cM^{\frac{-1-it}{2}}E_{1*}(\rho_{2})\,\rho_\cM^{\frac{-1+it}{2}} \right] \le\,\e^{d} \tr \left[ \rho\,\rho_\cM^{\frac{-1-it}{2}}\eta\,\rho_\cM^{\frac{-1+it}{2}} \right]=\e^{d}\tr[\rho X_\cM]\,,
	\end{align*}
	for some density $X_\cM\in\cM$ given by $\rho_\cM^{\frac{-1-it}{2}}\eta\,\rho_\cM^{\frac{-1+it}{2}}$. Since $\cM\subset \cN$, $\tr[\rho X_\cM]=\tr[\rho_\cM\,X_\cM]=\tr[\eta]=1$. The result follows.
\end{proof}

\begin{remark}
	In \cite{gao2019relative}, the authors showed that, for doubly stochastic conditional expectations (i.e. $E_{i*}=E_i$, $E^\cM_*=E^\cM$), the following equation holds: Given the following block decomposition of the algebras $\cN_2$ and $\cM$,
	\begin{align*}
	\cN_2\equiv \bigoplus_{l\in I_{\cN_2}}\,\mathbb{M}_{m_l}\otimes \Id_{t_l}\,\qquad\qquad	\cM\equiv \bigoplus_{k\in I_{\cM}}\,\mathbb{M}_{n_k}\otimes \Id_{s_k}\,,
	\end{align*} 
	\begin{align*}
	D(\cN_2\|\cM):= \sup_{\rho\in\cD(\cN_2)} \inf_{\eta\in\cD(\cM)}D_{\max}(\rho\| \eta)\equiv \max_{l\in I_{\cN_2}}\ln \Big(\sum_{k\in I_{\cM}}  \min(a_{kl},\,n_k)\,s_k/t_l \Big)\,,
	\end{align*}
	where $a_{kl}$ denotes the number of copies of the block $\mathbb{M}_{n_k}$ contained in the block $\mathbb{M}_{m_l}$. In the context of lattice spin systems, this typically corresponds to 
	the	infinite temperature regime.
\end{remark} 

\subsection{Approximate tensorization via noncommutative change of measure}\label{sec:changemeas}

\Cref{corollaruweak} states a correction to exact tensorization with a unique weak constant. We expect this result to be relevant for doubly stochastic conditional expectations, where this additive term is purely quantum. However, the weak constant $d$ is suboptimal in general. In this section and the following one, we provide tools to improve the latter at the cost of replacing the optimal strong constant by $c>1$. This intuition is inspired by the classical setting, where the weak constant can be removed at the cost of a worsening of the strong constant \cite{cesi2001quasi,[D02]}.

Given a state $\sigma$ that is invariant for the conditional expectations $E^\cM, E_1$ and $E_2$, we define the doubly stochastic conditional expectations ${E}^{(0),\cM}, E_1^{(0)}$ and $E_2^{(0)}$ onto the same fixed-point algebras $\cM\subset \cN_1,\cN_2\subset \cN $. Then, the following proposition is a direct consequence of a recent noncommutative change of measure argument in  \cite{junge2019stability} under the assumption that strong approximate tensorization for the relative entropy holds for ${E}^{(0),\cM}, E_1^{(0)}$ and $E_2^{(0)}$. 

\begin{proposition}\label{HSAT}
	As in \Cref{corollaruweak}, we define the constant 
	\[d:=\sup_{\rho\in\cD(\cN_{2})}\inf\big\{\ln(\lambda)|\,E_{1*}^{(0)}(\rho)\le \lambda\eta\,\text{ for some }\,\eta\in\cD(\cM)  \big\}\equiv \sup_{\rho\in\cD(\cN_{2})}\inf_{\eta\in\cD(\cM)}\,D_{\max}(E^{(0)}_{1*}(\rho)\|\eta)  \,.\]
Let us assume that $\operatorname{AT}(1,d)$  holds for the doubly stochastic conditional expectations, i.e. for every $\rho \in \cD(\cH)$
\begin{equation}\label{eq:InfiniteTemperatureAT}
D( \rho \|E^{(0),\cM}_{*}(\rho)) \leq D(\rho\|E^{(0)}_{1*}(\rho)) + D(\rho \|E^{(0)}_{2*}(\rho)) +d \, .
\end{equation}
	Then, the following result of $\operatorname{AT}(c,d')$  with $c=\frac{\lambda_{\max}(\sigma)}{\lambda_{\min}(\sigma)}$ and $d'= \lambda_{\max}(\sigma)\,d_\cH\,d$ holds:
	\begin{align}\label{eq:NoncommChangMeasAT}
	D(\rho\|E^\cM_{*}(\rho))\le \frac{\lambda_{\max}(\sigma)}{\lambda_{\min}(\sigma)}\,\big(D(\rho\|E_{1*}(\rho))+D(\rho\|E_{2*}(\rho))\big)+\lambda_{\max}(\sigma)\,d_\cH\,d\,.
	\end{align}
In particular, if $\operatorname{AT}(1,0)$  holds for the doubly stochastic conditional expectations ${E}^{(0),\cM}, E_1^{(0)}$ and $E_2^{(0)}$, then the conditional expectations $E^\cM, E_1$ and $E_2$ satisfy $\operatorname{AT}(c,0)$  with $c=\frac{\lambda_{\max}(\sigma)}{\lambda_{\min}(\sigma)}$.
\end{proposition}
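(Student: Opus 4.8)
The plan is to treat the noncommutative change-of-measure result of \cite{junge2019stability} as the engine, reduce the statement to its hypotheses, and then carry out the bookkeeping that produces the advertised constants. First I would record the structural fact that makes a single reference state usable for all three projections simultaneously: since $\sigma\in\cD_+(\cM)$ and $\cM\subset\cN_1,\cN_2$, the state $\sigma$ is invariant under $E^\cM,E_1,E_2$ at once, and each of $\cM,\cN_1,\cN_2$ also carries a doubly stochastic (tracial) conditional expectation $E^{(0),\cM}=E^{(0),\cM}_*$, $E^{(0)}_i=E^{(0)}_{i*}$ projecting with respect to the maximally mixed state $\Id/d_\cH$. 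With these in hand, the assumption \eqref{eq:InfiniteTemperatureAT} is \emph{exactly} the assertion that $\operatorname{AT}(1,d)$ holds for the tracial triple $(E^{(0),\cM},E^{(0)}_1,E^{(0)}_2)$, i.e.\ the infinite-temperature instance of the inequality; this is precisely the input the change-of-measure machinery requires, so no further work is needed to meet its hypotheses.

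The substantive step is the comparison, furnished by \cite{junge2019stability}, between a $\sigma$-weighted relative entropy $D(\rho\|E_*\rho)$ and its tracial analogue $D(\rho\|E^{(0)}_*\rho)$. The passage between the two reference geometries is implemented by the modular maps $\Gamma_\sigma$, and the resulting distortions are controlled purely by the spectrum of $\sigma$. I would invoke this as a transfer principle applied to the whole inequality \eqref{eq:InfiniteTemperatureAT}: the multiplicative (Dirichlet-type) terms incur the condition number $\lambda_{\max}(\sigma)/\lambda_{\min}(\sigma)$, which is the standard change-of-measure factor for relative-entropy/Dirichlet comparisons, while the additive defect $d$ — the quantity measuring failure of the commuting-square condition at infinite temperature — is rescaled by $\lambda_{\max}(\sigma)\,d_\cH=\e^{D_{\max}(\sigma\|\Id/d_\cH)}$, the natural $D_{\max}$-type cost of replacing the maximally mixed reference by $\sigma$. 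Feeding \eqref{eq:InfiniteTemperatureAT} through this transfer therefore yields exactly \eqref{eq:NoncommChangMeasAT} with $c=\lambda_{\max}(\sigma)/\lambda_{\min}(\sigma)$ and $d'=\lambda_{\max}(\sigma)\,d_\cH\,d$. As a consistency check, when $\sigma=\Id/d_\cH$ one has $\lambda_{\max}(\sigma)=\lambda_{\min}(\sigma)=1/d_\cH$, so $c=1$ and $d'=d$, and the conclusion collapses back onto the hypothesis, as it must.

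I expect the main obstacle to be getting the two constants attributed correctly rather than proving any single inequality from scratch: one must ensure that the entropic terms pick up only the condition number while the commuting-square defect $d$ alone carries the heavier $\lambda_{\max}(\sigma)\,d_\cH$ factor, with no cross-contamination producing spurious extra additive terms on the left-hand side. This is exactly where the noncommutativity bites — the modular rotation $\Gamma_\sigma$ does not commute with the algebra structure of $\cM,\cN_1,\cN_2$, so an elementary two-sided eigenvalue sandwich of the form $\lambda_{\min}d_\cH\,D(\rho\|E^{(0)}_*\rho)\le D(\rho\|E_*\rho)\le\lambda_{\max}d_\cH\,D(\rho\|E^{(0)}_*\rho)$ fails (it is already violated at $\rho=\sigma$), and one genuinely needs the finer noncommutative comparison of \cite{junge2019stability} to route the distortions into a purely multiplicative condition-number contribution on the right and a purely additive $\lambda_{\max}(\sigma)\,d_\cH$ contribution on the defect. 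Apart from this attribution, the argument is a direct application, which is why the result is stated as a corollary of the cited change-of-measure principle rather than requiring the separate machinery of \Cref{propAT} and \Cref{corollaruweak}.
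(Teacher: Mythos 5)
Your proposal is correct and follows essentially the same route as the paper's proof: invoke the noncommutative change-of-measure (Holley--Stroock) comparison of Proposition 4.2 in \cite{junge2019stability}, apply the infinite-temperature $\operatorname{AT}(1,d)$ in the tracial picture, and transfer back, with the condition number $\lambda_{\max}(\sigma)/\lambda_{\min}(\sigma)$ landing on the entropic terms and $\lambda_{\max}(\sigma)\,d_\cH$ on the defect. The only point worth sharpening is your final paragraph: the mechanism actually used \emph{is} a two-sided eigenvalue sandwich, namely $\frac{1}{\lambda_{\max}(\sigma)d_\cH}D(\rho\|E_*(\rho))\le D_{\operatorname{Lin}}(X\|E^{(0)}_*(X))\le\frac{1}{\lambda_{\min}(\sigma)d_\cH}D(\rho\|E_*(\rho))$ taken at the modular-transformed operator $X=\Gamma_\sigma^{-1}(\rho)$ rather than at $\rho$ itself --- which is precisely why the hypothesis must hold for arbitrary positive $X\in\cB(\cH)$ and why the Lindblad relative entropy $D_{\operatorname{Lin}}$, with its $-\tr[X]+\tr[Y]$ correction for non-normalized arguments, is the right functional.
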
 

We defer the proof of this result to the Appendix \ref{subsec:proof_change_measure}, as it merely follows the lines of \cite{junge2019stability}.

\subsection{Approximate tensorization via Pinching map}\label{subsec:ApproxTensor_Pinching}
\Cref{HSAT} states an approximate tensorization inequality with the advantage over 
\Cref{corollaruweak} that the weak constant $d$ vanishes when the doubly stochastic conditional expectations projecting onto the same subalgebras form a commuting square. However, the multiplicative constant typically explodes when increasing the size of the system. In the following theorem, we take care of this issue by employing a pinching argument in place of the change of measure argument laid in \Cref{HSAT}. 

% \textcolor{red}{We gain an interpretation of the strong constant in terms of clustering of correlations (see the next subsection), at the cost of losing that fact that the weak constant vanish in the same case as before.}

Before stating the result, let us fix some notations. As before, we are interested in proving (weak) approximate tensorization results for the quadruple of algebras $\cM\subset \cN_1\,,\,\cN_2\subset\cn$. As a subalgebra of $\cB(\cH)$ for some Hilbert space $\cH$, $\cM$ bears the following block diagonal decomposition: given $\cH=\bigoplus_{i\in I_\cM}\cH_i\otimes \cK_i$:
	\begin{align}\label{eq_decomp}
	\cM\equiv\bigoplus_{i\in I_\cM}\,\cB(\cH_i)\otimes \Id_{\cK_i}\,,\qquad\text{ so that }\qquad\forall \rho\in\cD(\cN)\,,\, \rho_\cM:=\sum_{i\in I_\cM}\,\tr_{\cK_i}[P_i\rho P_i]\,\otimes\tau_i\,,
	\end{align}
where $P_i$ corresponds to the projection onto the $i$-th diagonal block in the decomposition of $\cM$, and each $\tau_i$ is a full-rank state on $\cK_i$. We further make the observation that, since the restrictions of the conditional expectations $E_1$, $E_2$ and $E^\cM$ on $\cB(\cH_i\otimes\cK_i)$ only act non-trivially on the factor $\cB(\cK_i)$, there exist conditional expectations ${E}_j^{(i)}$ and $({E}^{\cM})^{(i)}$ acting on $\cB( \cK_i)$ and such that 
\begin{equation}\label{eq_decom_cond}
 E_j|_{\cB(\cH_i\otimes\cK_i)}:=\id_{\cB(\cH_i)}\otimes {E}_j^{(i)}\,,\quad\text{resp.}\quad E^\cM|_{\cB(\cH_i\otimes\cK_i)}:=\id_{\cB(\cH_i)}\otimes ({E}^\cM)^{(i)}\,.
\end{equation}
In order to get another form of approximate tensorization, we wish to compare the state $\rho$ with a classical-quantum state according to the decomposition given by $\cM$. To this end we introduce the Pinching map with respect to each $\cH_i$: define $\rho_{\cH_i}\equiv \tr_{\cK_i}[P_i\,\rho\,P_i]$. Then each $\rho_{\cH_i}$ can be diagonalized individually:
\[\rho_{\cH_i}\equiv\sum_{\lambda^{(i)}\in\operatorname{Spec}(\rho_{\cH_i})}\,\lambda^{(i)}\,\proj{\lambda^{(i)}}\,.\]
The Pinching map we are interested in is then:
\[\cP_{\rho_\cM}(X)\equiv \sum_{i\in I_\cM}\,\sum_{\lambda^{(i)}\in\operatorname{Spec}(\rho_{\cH_i})}\,\left(\proj{\lambda^{(i)}}\otimes \ind_{\cK_i}\right)\,X\,\left(\proj{\lambda^{(i)}}\otimes \ind_{\cK_i}\right)\,,\qquad X\in\cB(\cH)\,.\]
Remark that we have for all $\rho\in\cD(\cN)$:
\[\tr_{\cH_i}[P_i\,\rho\,P_i]=\tr_{\cH_i}[P_i\,\cP_{\rho_\cM}(\rho)\,P_i]\,.\]

\begin{theorem}\label{theo_AT_pinching}
 Assume 
	\begin{align}\label{cond_L1_clustering}
	&c_1:=\max_{i\in I_\cM}\|E_{1}^{(i)}\circ E_{2}^{(i)}-(E^\cM)^{(i)}:\,\mathbb{L}_1(\tau_i)\to\mathbb{L}_\infty\|<1\,.
	\end{align}
	Then, the following inequality holds:
	\begin{align}\label{eqgeneral}
	D(\rho\|\rho_\cM)\le \frac{1}{(1-{c_1})}\,\big(D(\rho\|\rho_1)+D(\rho\|\rho_2)+D_{\max}\big(E_{1*}\circ E_{2*}(\rho)\|E_{1*}\circ E_{2*}(\eta)\big)+\,c_1 D(\eta\|\cP_{\rho_\cM}(\rho))\big)\,,
	\end{align}
	for any $\eta\in\cD(\cN)$ such that $\eta=\cP_{\rho_\cM}(\eta)$ and $\tr_{\cK_i}[P_i\,\eta\,P_i]=\rho_{\cH_i}$.
	In particular, any state $\eta$ of the form $\eta:= \sum_{i\in I_\cM}\,\rho_{\cH_i}\otimes \tau_i'$, for an arbitrary family of subnormalized states $\tau_i'$, satisfies these conditions.

	Alternatively, we can get 
	\begin{align}\label{eqchanged}
	D(\rho\|\rho_\cM)\le \frac{1}{(1-{c_1})}\,\big(D(\rho\|\rho_1)+D(\rho\|\rho_2)\big)+D\big(\rho\|\cP_{\rho_\cM}(\rho) )\,.
	\end{align}
	Consequently, \ref{at} holds with 
	\begin{align}
	 & c:=\frac{1}{(1-{c_1})}\,, \nonumber\\
	 & d:= \frac{1}{(1-{c_1})}\left(\underset{\rho\in\cD(\cN)}{\sup}\,\underset{\eta\in\cD(\cN)}\inf\,D_{\max}\big(E_{1*}\circ E_{2*}(\rho)\|E_{1*}\circ E_{2*}(\eta)\big)+\,c_1 D(\eta\|\cP_{\rho_\cM}(\rho))\big)\right)\,, \label{eq_theo_AT_pinching}
	\end{align}
where the infimum in the second line runs over $\eta$ such that $\eta=\cP_{\rho_\cM}(\eta)$ and $\tr_{\cK_i}[P_i\,\eta\,P_i]=\rho_{\cH_i}$.
\end{theorem}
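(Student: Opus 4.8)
The plan is to start from \Cref{propAT} and bound its error term $\ln\{\int_{-\infty}^\infty\tr[\rho_1\rho_\cM^{\frac{-1-it}{2}}\rho_2\rho_\cM^{\frac{-1+it}{2}}]\,\beta_0(t)\,dt\}$, upgrading the crude additive estimate of \Cref{corollaruweak} into a multiplicative one. First I would rewrite the integrand exactly as in the proof of \Cref{corollaruweak}: using the module property of $E_1$ and $\rho_\cM\in\cM\subset\cN_1$, one has $\tr[\rho_1\rho_\cM^{\frac{-1-it}{2}}\rho_2\rho_\cM^{\frac{-1+it}{2}}]=\tr[\rho\,\rho_\cM^{\frac{-1-it}{2}}E_{1*}E_{2*}(\rho)\,\rho_\cM^{\frac{-1+it}{2}}]$, so that $E_{1*}E_{2*}(\rho)$ is to be compared with $\rho_\cM=E^\cM_*(\rho)$. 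The essential point is that the defect $\Delta_*:=E_{1*}E_{2*}-E^\cM_*$ annihilates $\rho_\cM$, since $\cM\subset\cN_1,\cN_2$ forces $E_{j*}(\rho_\cM)=\rho_\cM$; hence, writing $E_{1*}E_{2*}(\rho)=\rho_\cM+\Delta_*(\rho)$ and using $\rho_\cM^{\frac{-1-it}{2}}\rho_\cM\rho_\cM^{\frac{-1+it}{2}}=\Id$, the integrand equals $1+\tr[\rho\,\rho_\cM^{\frac{-1-it}{2}}\Delta_*(\rho)\rho_\cM^{\frac{-1+it}{2}}]$.

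Next I would insert the reference $\eta$. Replacing $E_{1*}E_{2*}(\rho)$ by $E_{1*}E_{2*}(\eta)$ inside the positive sandwich costs a multiplicative factor $\exp D_{\max}(E_{1*}E_{2*}(\rho)\|E_{1*}E_{2*}(\eta))$, which produces the $D_{\max}$ correction; after $\ln(1+x)\le x$ and $\int\beta_0=1$ the task reduces to bounding the covariance $\mathrm{Def}(\eta):=\int\tr[\eta\,\Delta(C_t)]\,\beta_0(t)\,dt$, where $C_t:=\rho_\cM^{\frac{-1+it}{2}}\rho\,\rho_\cM^{\frac{-1-it}{2}}$ and $\Delta:=E_1E_2-E^\cM$. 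Here the Pinching and the constant $c_1$ enter. Since $\eta=\cP_{\rho_\cM}(\eta)$ and $\rho_\cM$ are both block-diagonal in the eigenbases of the $\rho_{\cH_i}$, the trace only sees diagonal matrix elements, decoupling the problem over the blocks $i\in I_\cM$ and the eigenvalues $\lambda^{(i)}$; on each fibre $\Delta$ acts as $\id_{\cH_i}\otimes\Delta^{(i)}$ with $\Delta^{(i)}=E_1^{(i)}E_2^{(i)}-(E^\cM)^{(i)}$. Crucially both sides can now be centred: $\Delta^{(i)}(\Id)=0$ lets me replace $\langle\lambda^{(i)}|C_t|\lambda^{(i)}\rangle$ by the deviation of the conditional operator $\rho_{i,\lambda^{(i)}}:=\langle\lambda^{(i)}|\rho|\lambda^{(i)}\rangle$ from $\lambda^{(i)}\tau_i$, while $\tr[\tau_i\Delta^{(i)}(\cdot)]=0$ lets me center the $\eta$-side. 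This turns $\mathrm{Def}(\eta)$ into a genuinely \emph{quadratic} form in these fibre deviations, which is what distinguishes the argument from \Cref{corollaruweak}.

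On each fibre I would then apply the definition of $c_1$ through the $\mathbb{L}_1(\tau_i)\to\mathbb{L}_\infty$ norm together with Hölder's inequality, using that conjugation by the unitary $\tau_i^{it/2}$ preserves the trace norm, to reach $\mathrm{Def}(\eta)\le c_1\sum_{i,\lambda^{(i)}}\lambda^{(i)}\,\|\hat\omega_{i,\lambda^{(i)}}-\tau_i\|_1\,\|\hat\rho_{i,\lambda^{(i)}}-\tau_i\|_1$ in terms of the normalised conditional states. An AM--GM split followed by the quantum Pinsker inequality $\|\cdot\|_1^2\le 2D(\cdot\|\cdot)$ converts the two squared trace norms into relative entropies, and I would recognise $\sum_{i,\lambda^{(i)}}\lambda^{(i)}D(\hat\rho_{i,\lambda^{(i)}}\|\tau_i)=D(\cP_{\rho_\cM}(\rho)\|\rho_\cM)$ together with the analogous $\eta$-contribution; this is the step that makes the bound linear, rather than square-root, in the relative entropies. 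Finally I would invoke the pinching identity $D(\cP_{\rho_\cM}(\rho)\|\rho_\cM)=D(\rho\|\rho_\cM)-D(\rho\|\cP_{\rho_\cM}(\rho))$ and absorb the resulting $2c_1D(\rho\|\rho_\cM)$ onto the left-hand side of the inequality from \Cref{propAT}, which is precisely what produces the multiplicative constant $\tfrac{1}{1-2c_1}$; dropping the nonnegative $-c_1D(\rho\|\cP_{\rho_\cM}(\rho))$ and reading off the supremum/infimum over $\eta$ gives $c$ and $d$, while re-running the estimate with $\eta=\cP_{\rho_\cM}(\rho)$ yields the alternative form \eqref{eqchanged}.

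I expect the main obstacle to be the centring in the previous step: arranging that $\mathrm{Def}(\eta)$ is truly quadratic in the fibre deviations so that Pinsker delivers a term \emph{linear} in $D(\rho\|\rho_\cM)$ (this is exactly the mechanism that upgrades the additive constant of \Cref{corollaruweak} to the multiplicative $\tfrac{1}{1-2c_1}$), and, hand in hand with it, interfacing the Pinching—which classicalises the $\cH_i$ factors so the blocks decouple and diagonal matrix elements can be extracted—with the $\mathbb{L}_1(\tau_i)\to\mathbb{L}_\infty$ norm, which lives on the $\cK_i$ factors. The bookkeeping that produces precisely the factor $2$ in $1-2c_1$, through the recombination of $D(\cP_{\rho_\cM}(\rho)\|\rho_\cM)$ with $D(\rho\|\rho_\cM)$, and that tracks the exact form of the $\eta$-dependent corrections, is the other place where care is required.
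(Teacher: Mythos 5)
Your route to \eqref{eqgeneral} is essentially the paper's own proof: start from \Cref{propAT}, insert $\eta$ at the cost of the $D_{\max}$ term, centre both arguments using $\Delta(\Id)=0$ and trace preservation, decompose over the blocks $i$ and the eigenvalues $\lambda^{(i)}$ via the pinching, apply H\"older with the $\mathbb{L}_1(\tau_i)\to\mathbb{L}_\infty$ norm defining $c_1$, then Young and Pinsker, and finally absorb the $D(\rho\|\rho_\cM)$ contribution into the left-hand side. (The careful accounting actually yields the stronger constant $1/(1-c_1)$, which of course implies the stated $1/(1-2c_1)$; and the harmless transposition $E_1\circ E_2$ versus $E_2\circ E_1$ in your $\mathrm{Def}(\eta)$ is absorbed by the fact that the two maps are KMS-adjoints of one another, so their $\mathbb{L}_1\to\mathbb{L}_\infty$ norms agree.)

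The one step that does not work as you describe it is the passage to \eqref{eqchanged}. Setting $\eta=\cP_{\rho_\cM}(\rho)$ in \eqref{eqgeneral} only kills the term $c_1 D(\eta\|\cP_{\rho_\cM}(\rho))$; it leaves the term $D_{\max}\big(E_{1*}\circ E_{2*}(\rho)\|E_{1*}\circ E_{2*}(\cP_{\rho_\cM}(\rho))\big)$ in place, which need not vanish, and it produces no additive $D(\rho\|\cP_{\rho_\cM}(\rho))$ on the right-hand side. The correct move is to apply \eqref{eqgeneral} to the \emph{state} $\cP_{\rho_\cM}(\rho)$ rather than to $\rho$: for a state that is already pinched, the choice $\eta=\cP_{\rho_\cM}(\rho)$ makes both correction terms vanish, and one then recovers \eqref{eqchanged} from the chain rule $D(\rho\|\rho_\cM)=D(\rho\|\cP_{\rho_\cM}(\rho))+D(\cP_{\rho_\cM}(\rho)\|\rho_\cM)$ together with the commutation $\cP_{\rho_\cM}\circ E^\cM_*=E^\cM_*\circ\cP_{\rho_\cM}$ and the data-processing inequality applied to $\cP_{\rho_\cM}$, which bounds $D(\cP_{\rho_\cM}(\rho)\|E_{j*}(\cP_{\rho_\cM}(\rho)))\le D(\rho\|\rho_j)$. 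With that correction your argument is complete.
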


\begin{proof}
	The proof starts similarly to that of Corollary \ref{corollaruweak}. We once again simply need to bound the integral on the right hand side of (\ref{mainequation}). By considering $\eta$ as in the statement of the theorem and writing for the moment $\tilde{d}:=D_{\max}\big(E_{1*}\circ E_{2*}(\rho)\|E_{1*}\circ E_{2*}(\eta)\big)$,  we obtain
	\begin{align*}
	\tr\Big[ \rho_{1}\,\rho_\cM^{\frac{-1-it}{2}}\,\rho_{2}\,\rho_\cM^{\frac{-1+it}{2}}\Big]=		\tr\Big[ \rho\,\rho_\cM^{\frac{-1-it}{2}}\,E_{1*}(\rho_{2})\,\rho_\cM^{\frac{-1+it}{2}}\Big]\le \e^{\tilde{d}}  \,	\tr\Big[ \rho\,\rho_\cM^{\frac{-1-it}{2}}\,E_{1*}\circ E_{2*}(\eta)\,\rho_\cM^{\frac{-1+it}{2}}\Big]\,.
	\end{align*}
	To simplify the notation, let us write: $\eta_{12}:=E_{1*}\circ E_{2*}(\eta)$. Now, note that the following holds:
		\begin{equation*}
	\tr \left[   \left( \rho  - \rho_\cM \right)  \rho_\cM^{\frac{-1-it}{2}}   \left( \eta_{12}   - \rho_\cM \right)  \rho_\cM^{\frac{-1+it}{2}}  \right] 
	= \tr \left[ \rho \, \rho_\cM^{\frac{-1-it}{2}}	\,  \eta_{12} \, \rho_\cM^{\frac{-1+it}{2}} \right] -1 -1 + 1,
	\end{equation*}
	since $ E^\cM_*$, $E_{1*}$ and $E_{2*}$ are conditional expectations in the Schr\"{o}dinger picture and, thus, trace preserving. Therefore,
	\begin{align*}
	& \ln  \int_{-\infty}^{+\infty} \e^{\tilde{d}}  \, \tr \left[ \rho \, \rho_\cM^{\frac{-1-it}{2}} \, \eta_{12} \, \rho_\cM^{\frac{-1+it}{2}} \right] \,\beta_0 (t) \,dt \, \\
	&\phantom{adasdasdad}\phantom{adasdasdad}\phantom{adasdasdad}= \ln \int_{-\infty}^{+\infty} \e^{\tilde{d}}  \, \left( \tr \left[   \left( \rho   - \rho_\cM \right)  \rho_\cM^{\frac{-1-it}{2}}   \left( \eta_{12}   - \rho_\cM \right) \rho_\cM^{\frac{-1+it}{2}}  \right]  + 1  \right)\,\beta_0 (t) \,dt \\
	&\phantom{adasdasdad}\phantom{adasdasdad}\phantom{adasdasdad} \leq  \tilde{d} + \int_{-\infty}^{+\infty}  \tr \left[   \left( \rho   - \rho_\cM \right)  \rho_\cM^{\frac{-1-it}{2}}   \left( \eta_{12}  -\rho_\cM \right) \rho_\cM^{\frac{-1+it}{2}}  \right] \, \beta_0 (t)\,dt\,,
	\end{align*}
	where we have used that $	\ln(x +1)\le x$ for positive real numbers. 	 Defining $X:=\Gamma_{\rho_\cM}^{-1}(\rho)$ and $Y_t:=\rho_\cM^{\frac{-1-it}{2}}\,\eta\, \rho_\cM^{\frac{-1+it}{2}}$, we note that
	\begin{equation}
	E_1 \circ E_2 [Y_t] = \rho_\cM^{\frac{-1-it}{2}}\,\eta_{12}\, \rho_\cM^{\frac{-1+it}{2}},
	\end{equation}
and we can rewrite the previous expression as
	\begin{align}
	& \int_{-\infty}^{+\infty}  \tr \left[   \left(\rho   - \rho_\cM \right) \rho_\cM^{\frac{-1-it}{2}}   \left( \eta_{12}  - \rho_\cM\right) \rho_\cM^{\frac{-1+it}{2}}  \right]  \, \beta_0 (t)\,dt\nonumber \\
	&\phantom{adasdasdad}\phantom{adasdasdad}\phantom{adasdasdad}=  \int_{-\infty}^{+\infty} \tr \left[    \left( X  - E^\cM[X] \right)  \Delta_{\rho_\cM}^{-it/2}   \left( \eta_{12}   - \rho_\cM  \right)  \right] \, \beta_0 (t)\,dt \nonumber\\
	&\phantom{adasdasdad}\phantom{adasdasdad}\phantom{adasdasdad}=  \int_{-\infty}^{+\infty}  \tr \left[    \left(X  - E^\cM[X]  \right)  \rho_\cM^{\frac{1}{2}}    \left( E_1 \circ E_2 [Y_t]   -E^\cM [Y_t] \right)  \rho_\cM^{\frac{1}{2}}  \right] \, \beta_0 (t)\,dt\nonumber  \\
	&\phantom{adasdasdad}\phantom{adasdasdad}\phantom{adasdasdad}= \int_{- \infty}^{+ \infty} \left\langle  X  - E^\cM[X]  , \ E_1 \circ E_2 [Y_t]   -E^\cM [Y_t]  \right\rangle_{\rho_\cM}\, \beta_0 (t)\,dt\, ,
	\end{align}		
thus obtaining the following inequality
	\begin{align*}
	\ln\,	\int_{-\infty}^{\infty}	\e^{\tilde{d}}  \, 	\tr\Big[ \rho_{1}\,\rho_\cM^{\frac{-1-it}{2}}\,\eta_{12}\,\rho_\cM^{\frac{-1+it}{2}}\Big]\,\beta_0(t)\,dt\le \tilde{d}+\int_{-\infty}^\infty \,\langle X-E^\cM[X]\,,E_1\circ E_2[Y_t]-E^\cM[Y_t]\rangle_{\rho_\cM}\,\beta_0(t)\,dt .
	\end{align*}
	
	 Now, we focus on the integrand on the right-hand side of the above inequality. Denote for any $A\in\cB(\cH)$, 
	\[A^{(\lambda,i)}:=\left(\proj{\lambda^{(i)}}\otimes \ind_{\cK_i}\right)P_i\,A\,P_i\left(\proj{\lambda^{(i)}}\otimes \ind_{\cK_i}\right)\,.\]
	We also write $A^{(\lambda,i)}=\proj{\lambda^{(i)}}\otimes A^{(\lambda,i)}$ by a slight abuse of notation. Then
	\begin{align*}
	\langle X-E^\cM[X]\,,&\, E_1\circ E_2[Y_t]-E^\cM[Y_t]\rangle_{\rho_\cM}\\
	&= \sum_{i\in I_\cM}\,\sum_{\lambda^{(i)}\in\operatorname{Spec}(\rho_{\cH_i})}\,\lambda^{(i)}\,\langle X^{(\lambda,i)}-E^\cM[X^{(\lambda,i)}],\,E_1\circ E_2[Y_t^{(\lambda,i)}]-E^\cM[Y_t^{(\lambda,i)}]\rangle_{\tau_i}\,.
	\end{align*}
	
	Next, by Hölder's inequality each summand in the right-hand side above is upper bounded by
	\begin{align*}
	& \|(\id-(E^\cM)^{(i)})[X^{(\lambda,i)}] \|_{\mathbb{L}_1(\tau_i)}\,\|(E_1^{(i)}\circ E_2^{(i)}-(E^\cM)^{(i)}) [Y_t^{(\lambda,i)}] \|_\infty\nonumber\\
	&\phantom{adasdasdad}\phantom{adasdasdad}\phantom{adasdasdad} \le c_1\,  \|(\id-(E^\cM)^{(i)})[X^{(\lambda,i)}] \|_{\mathbb{L}_1(\tau_i)}\,\| (\id-(E^\cM)^{(i)})[Y_t^{(\lambda,i)}]\|_{\mathbb{L}_1(\tau_i)}\nonumber\\
	&\phantom{adasdasdad}\phantom{adasdasdad}\phantom{adasdasdad} = c_1\,\|\rho^{(\lambda,i)}-E^\cM_{*}(\rho^{(\lambda,i)})\|_1\,\|\tau_i'-\tau_i\|_{1}\label{eqchange}\\
	&\phantom{adasdasdad}\phantom{adasdasdad}\phantom{adasdasdad} \le \frac{c_1}2\,\left(\|\rho^{(\lambda,i)}-E^\cM_{*}(\rho^{(\lambda,i)})\|_1^2+\|\tau_i'-\tau_i\|_{1}^2\right)\,,\nonumber
	\end{align*}
where we use Young's inequality in the last line. Using Pinsker's inequality and summing over the indices $i$ and $\lambda^{(i)}$, we find that
	\begin{align*}
	\ln\int_{-\infty}^\infty   \tr\left[ \rho\,\rho_\cM^{\frac{-1-it}{2}}E_{1*}\circ E_{2*}(\cP_\cM(\rho))\,\rho_\cM^{\frac{-1+it}{2}}    \right]  \beta_0(t)\,dt\le \tilde{d} +c_1\,D(\rho\|\rho_\cM)+ c_1\,D(\eta\|\cP_{\rho_\cM}(\rho))\,.
	\end{align*}
    \Cref{eqgeneral} follows after rearranging the term. In order to obtained \Cref{eqchanged}, we exploit that $\rho_\cM$ is a fixed point of $\cP_{\rho_\cM}$ and therefore
    \[D(\rho\|\rho_\cM)=D(\rho\|\cP_{\rho_\cM}(\rho))+D(\cP_{\rho_\cM}(\rho)\|\rho_\cM)\,.\]
    We can then apply \Cref{eqgeneral} to $\cP_{\rho_\cM}(\rho)$ and remark that the weak constant vanishes. The result follows after remarking that $\cP_{\rho_\cM}\circ E_*^\cM=E_*^\cM\circ\cP_{\rho_\cM}$ and applying the data-processing inequality to the map $\cP_{\rho_\cM}$.
\end{proof}

\begin{remark}
	In the case of a classical evolution over a classical system, taking $\eta=\cP_{\rho_\cM}(\rho)$ shows that $d=0$ in \Cref{eq_theo_AT_pinching}, and thus we get back the strong approximate tensorization of \cite{cesi2001quasi}. In \Cref{classicalglauber}, we will see that this remains also true for classical evolution over quantum systems. The estimation of the constant $c$ under a condition of clustering of correlations is discussed in the next section. 
% 	\textcolor{red}{add a comment about the fact that if the doubly stochastic cond exp. satisfy AT(1,0) then the additive constant can be taken to vanish, then it's a clear improvement of Proposition 2, here we can also briefly mention that 3.10 is basically what we use with Daniel to get MLSI. Or we defer this to the last discussion section}\textcolor{blue}{Now, should we have a remark on the comparison with the weak constant in \Cref{HSAT}?}
\end{remark}

The next proposition provides a short analysis of the weak constant in \Cref{theo_AT_pinching}. We note that the interpretation of this term as a deviation to the classical case is direct from the pinching argument, which explicitly ``pinches" on a classical basis. However, as opposed to \Cref{HSAT}, we were unable to prove that the weak constant necessarily vanishes when the doubly stochastic conditional expectations form a commuting square.

\begin{proposition}\label{propboundsd1d2}
	With the notations of \Cref{propAT} and \Cref{theo_AT_pinching},
	\begin{equation*}
	 \sup_{\rho\in\cD(\cN)}\,D_{\max}\big(E_{1*}\circ E_{2*}(\rho)\| E_{1*}\circ E_{2*}(\eta)\big)\leq d_1 + d_2
	\end{equation*}
 where
	\begin{align*}
	 &d_1:=\sup_{\rho\in\cD(\cN)}\,D_{\max}\big(E_{1*}\circ E_{2*}(\rho)\| E_{1*}\circ E_{2*}(\cP_\cM(\rho))\big)\,,\\
	&d_2:=\,\max_{i\in I_\cM}\sup_{\rho^{(i)}\in\cD(P_i\cN P_i)}\,I_{\max}\big( \cH_i:\cK_i  \big)_{\rho^{(i)}}\,, 
	\end{align*}
and where $\cP_\cM:=\sum_{i\in I_\cM}P_i(\cdot)P_i$.
%     $E_j|_{\cB(\cH_i\otimes\cK_i)}\equiv \id_{\cB(\cH_i)} \otimes  E_{j}^{(i)}$ and $E^\cM|_{\cB(\cH_i\otimes\cK_i)}=\tr_{\cK_i}((\tau_i\otimes \Id_{\cH_i})(\cdot))\otimes\Id_{\cK_i}\equiv  \id_{\cB(\cH_i)}\otimes(E^\cM)^{(i)} $.
    
    Furthermore, given $i\in I_\cN$, denote by $I^{(i)}_\cM$ the set of indices corresponding to the minimal projectors in $\cM$ contained in the $i$-th block of $\cN$. Moreover, for each of the blocks $i$ of $\cN$, of corresponding minimal projector $P^{\cN}_i$, decompose $P^\cN_i\cM P^\cN_i$ as follows: letting $P_i^\cN\cH:= \bigoplus_{j\in I^{(i)}_\cM} \,\cH^{(i)}_{j}\otimes \cK^{(i)}_j$,
	\begin{align*}
	P_i^\cN\cM P_i^\cN:=\bigoplus_{j\in I^{(i)}_\cM} \cB(\cH^{(i)}_j)\otimes \Id_{\cK_j^{(i)}}\,.
	\end{align*}
	Then,
	\begin{align*}
	&d_1\le \max_{i\in I_\cN}\ln(|I^{(i)}_\cM|)\,\\
	&d_2\le   2\max_{i\in I_\cN}\max_{j\in I^{(i)}_\cM}\min\left\lbrace \ln (d_{\cH^{(i)}_j}),\ln(d_{\cK^{(i)}_j}) \right\rbrace \,.
	\end{align*}
\end{proposition}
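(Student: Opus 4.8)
The plan is to establish the three displayed inequalities in turn: first the additive splitting $\sup_\rho D_{\max}(\cdot\|\cdot)\le d_1+d_2$, and then the two dimensional bounds, on $d_1$ and on $d_2$.

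For the splitting I would insert the intermediate state $E_{1*}\circ E_{2*}(\cP_\cM(\rho))$ and use the triangle inequality for the max-relative entropy, $D_{\max}(\alpha\|\gamma)\le D_{\max}(\alpha\|\beta)+D_{\max}(\beta\|\gamma)$ (immediate from $\alpha\le\e^{\lambda_1}\beta$ and $\beta\le\e^{\lambda_2}\gamma$). The first term is exactly $\le d_1$ by definition. For the second I would choose $\eta:=\sum_{i\in I_\cM}\rho_{\cH_i}\otimes\tau_i'$, which meets the two constraints stated below \eqref{eqgeneral} for any subnormalized $\tau_i'$; monotonicity of $D_{\max}$ under the CPTP map $E_{1*}\circ E_{2*}$ reduces the second term to $D_{\max}(\cP_\cM(\rho)\|\eta)$. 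Since both arguments are block-diagonal with respect to the decomposition \eqref{eq_decomp} of $\cM$, this factorizes as $\max_{i\in I_\cM}D_{\max}(P_i\rho P_i\|\rho_{\cH_i}\otimes\tau_i')$; choosing each $\tau_i'$ optimally turns the $i$-th term into $I_{\max}(\cH_i:\cK_i)_{P_i\rho P_i}$, which by scale-invariance of $I_{\max}$ in its first argument and $P_i\rho P_i/\tr[P_i\rho]\in\cD(P_i\cN P_i)$ is at most $d_2$.

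For the bound on $d_1$ the key structural fact I would prove is that $E_{1*}\circ E_{2*}$ commutes with $\cP_\cM$: every central projection $P_i$ of $\cM$ lies in $\cM\subset\cN_1\cap\cN_2$, so the module property of Proposition \ref{propositioncondexp}(ii) gives $E_j\circ\cP_\cM=\cP_\cM\circ E_j$ for $j=1,2$, whence $E_2\circ E_1$ commutes with $\cP_\cM$ and dualizing yields the claim. Writing $\omega:=E_{1*}\circ E_{2*}(\rho)$, this turns $d_1$ into $\sup_\rho D_{\max}(\omega\|\cP_\cM(\omega))$. Because $\sigma\in\cM\subset\cN_1$, Proposition \ref{propositioncondexp}(iii) forces $\omega\in\cN_1$; as $\mathcal Z(\cN)\subset\cN'\subset\cN_1'$, the state $\omega$ commutes with the centre of $\cN$ and hence is block-diagonal in the blocks $\{P_i^\cN\}_{i\in I_\cN}$, and so is $\cP_\cM(\omega)$. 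Thus $D_{\max}$ factorizes over these blocks, and within the $i$-th one $\cP_\cM$ acts as a pinching onto the $|I_\cM^{(i)}|$ minimal central projections of $P_i^\cN\cM P_i^\cN$; the standard pinching inequality then bounds the $i$-th term by $\ln|I_\cM^{(i)}|$, and the maximum over $i\in I_\cN$ finishes this part.

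The bound on $d_2$ I would deduce from the standard dimensional estimate $I_{\max}(A:B)_\xi\le 2\min\{\ln d_A,\ln d_B\}$ (sharp on maximally entangled states), applied blockwise: identifying through $P_i^\cN\cM P_i^\cN=\bigoplus_j\cB(\cH_j^{(i)})\otimes\Id_{\cK_j^{(i)}}$ the factors $(\cH_i,\cK_i)$ of a block of $\cM$ with the refined data $(\cH_j^{(i)},\cK_j^{(i)})$, each $\sup_{\rho^{(i)}}I_{\max}(\cH_i:\cK_i)$ is at most $2\min\{\ln d_{\cH_j^{(i)}},\ln d_{\cK_j^{(i)}}\}$, and taking maxima gives the asserted inequality. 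I expect the main difficulty to be the two structural lemmas behind the $d_1$ bound — the commutation of $E_{1*}\circ E_{2*}$ with $\cP_\cM$ and the block-diagonality of $\omega$ — together with the bookkeeping that matches the coarse block decomposition of $\cM$ with the refinement induced by the blocks of $\cN$ (which is transparent in the main application $\cN=\cB(\cH)$, where $\cN$ has a single block).
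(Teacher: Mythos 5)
Your argument is correct and follows essentially the same route as the paper's: the triangle inequality for $D_{\max}$ around the intermediate state $E_{1*}\circ E_{2*}(\cP_\cM(\rho))$, data processing, blockwise factorization of $D_{\max}$ identifying the second term with $I_{\max}(\cH_i:\cK_i)$, the pinching inequality for $d_1$, and the dimensional bound on $I_{\max}$ (Lemma B.7 of \cite{berta2011quantum}) for $d_2$. The only cosmetic difference is in the $d_1$ bound, where the paper applies the pinching inequality to $\rho$ directly (blockwise in the blocks of $\cN$, giving $\rho\le \max_{i\in I_\cN}|I^{(i)}_\cM|\,\cP_\cM(\rho)$) and then pushes this operator inequality through the positive map $E_{1*}\circ E_{2*}$, which sidesteps your two structural lemmas (the commutation of $E_{1*}\circ E_{2*}$ with $\cP_\cM$ and the block-diagonality of $\omega$); both routes are valid.
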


The proof of this result is deferred to Appendix \ref{subsec:proof_estimation_c,d}.

 \subsection{Clustering of correlations}\label{sec:clustering}

In this section we shift slightly our focus and study the multiplicative constant of the previous results, instead of the additive one. More specifically, we provide an interpretation of the multiplicative constant appearing in the last section in terms of certain notions of clustering of correlations. The latter play a particularly relevant role when applied in the context of quantum spin lattices \cite{capel2020modified}.  

The constant $c_1:=\max_{i\in I_\cM}\|E_{1}^{(i)}\circ E_{2}^{(i)}-(E^\cM)^{(i)}:\,\mathbb{L}_1(\tau_i)\to\mathbb{L}_\infty\|$ appearing in \Cref{theo_AT_pinching}
provides a bound on the following covariance-type quantity: For any $i\in I_\cM$ and any $X, Y\in \mathbb{L}_1(\tau_i)$, 
\begin{align}
\operatorname{Cov}_{\CC\Id_{\cK_i},\tau_i}(E_1^{(i)}[X],E_2^{(i)}[Y])&:=\langle  E_1^{(i)}[X]-(E^\cM)^{(i)}[X],\,E_2^{(i)}[Y]-(E^\cM)^{(i)}[Y]\rangle_{\tau_i}\nonumber\\
&\le \,c_1\,\|X\|_{\mathbb{L}_1(\tau_i)}\,\|Y\|_{\mathbb{L}_1(\tau_i)}\,.\label{condL1sclust}
\end{align}
We call the above property \textit{conditional $\mathbb{L}_1$ clustering of correlations}, and denote it by $\operatorname{cond\mathbb{L}}_1(c_1)$.  Conversely, one can show by duality of $\mathbb{L}_p$-norms that if $\operatorname{cond\mathbb{L}}_1(c_1')$ holds for some positive constant $c_1'$, then 
$c_1\le c_1'$: for all $i\in I_\cM$
\begin{align*}
\|E_{1}^{(i)}\circ E_{2}^{(i)}-(E^\cM)^{(i)}:\,\mathbb{L}_1(\tau_i)\to\mathbb{L}_\infty\|&=\sup_{X\in \mathbb{L}_1(\tau_i)}\,\|E_{1}^{(i)}\circ E_{2}^{(i)}[X]-(E^\cM)^{(i)}[X]\|_\infty\\
&=\sup_{X,Y\in \mathbb{L}_1(\tau_i)}\,\langle Y,\,E_1^{(i)}\circ E_2^{(i)}[X]-(E^\cM)^{(i)}[X]\rangle_{\tau_i}\\
&=\sup_{X,Y\in \mathbb{L}_1(\tau_i)}\,\langle E_1^{(i)}[Y]-(E^\cM)^{(i)}[Y],\, E_2^{(i)}[X]-(E^\cM)^{(i)}[X]\rangle_{\tau_i}\\
&\le c_1'\,.
\end{align*}

In \cite{Kastoryano2014}, the authors introduced a different notion of clustering of correlation in order to show the positivity of the spectral gap of Gibbs samplers\footnote{Here, we formulate it in our general framework of finite-dimensional $*$-algebras.}.
\begin{definition}\label{defL2strong}
	We say that $\cM\subset \cN_1,\cN_2\subset \cN$ satisfies \textit{strong $\mathbb{L}_2$ clustering of correlations} with respect to the state $\sigma\in\cD(\cM)$ with constant $c_{2}>0$ if for all $X,Y\in\cN$,
	\begin{align}\label{L2_clust}
	\operatorname{Cov}_{\cM,\sigma}(E_1[X],E_2[Y])\le \,c_{2}\,    \|X\|_{\mathbb{L}_2(\sigma)}\,\|Y\|_{\mathbb{L}_2(\sigma)}\,.
	\end{align}
	Equivalently, $\|E_1\circ E_2-E^\cM:\,\mathbb{L}_2(\sigma)\to \mathbb{L}_2(\sigma)\|\le c_2$.
\end{definition}

 \Cref{defL2strong} does not depend on the state $\sigma\in \cD(\cM)$ chosen. 
This is the content of the next theorem, whose proof is presented in Appendix \ref{subsec:proof_clustering}.

\begin{lemma}\label{theo_L2clust}
	Let $\cM\subset \cN_1,\cN_2\subset\cN\subset \cB(\cH)$ be von Neumann subalgebras of the algebra $\cB(\cH)$ so that $\cN_1 \cap \cN_2 \neq \emptyset$. Then, for any two states $\sigma,\sigma'\in\cD(\cM)$:
	\begin{equation*}
	\sup_{X\in\cN}\,\frac{\operatorname{Cov}_{\cM,\,\sigma}(E_1[X],E_2[X])}{\|X\|_{\mathbb{L}_2(\sigma)}^2} = \sup_{X\in\cN}\,\frac{\operatorname{Cov}_{\cM,\,\sigma'}(E_1[X],E_2[X])}{\|X\|_{\mathbb{L}_2(\sigma')}^2}
	\end{equation*}
%	Morevover, \Cref{L2_clust} is equivalent to $\operatorname{cond\mathbb{L}_2}(c_2)$, with the same constant $c_2>0$.
\end{lemma}

\begin{remark}
As a consequence of the previous theorem, we realize that the condition assumed in \cite{Kastoryano2014} of strong $\mathbb{L}_2$ clustering of correlation with respect to one invariant state, to prove positivity of the spectral gap for the Davies dynamics, would be analogous to assuming strong $\mathbb{L}_2$ clustering of correlation with respect to any invariant state. 
% If we could reduce the problem mentioned above to finding a  bound on $c_1(\sigma)$ for only one $\sigma$, we would also be able to prove positivity of the MLSI for the Davies dynamics from here, although from the discussion above we know that it is not possible in general. 
\end{remark}

It is easy to see that the above notion of strong $\mathbb{L}_2$ clustering of correlation implies that of a \textit{conditional $\mathbb{L}_2$ clustering}, denoted by $\operatorname{cond\mathbb{L}_2}(c_2)$, simply defined by replacing the $\mathbb{L}_1$ norms by $\mathbb{L}_2$ norms in \Cref{condL1sclust}, or equivalently by assuming that 
\[
\max_{i\in I_\cM}\|E_1^{(i)}\circ E_2^{(i)}-(E^\cM)^{(i)}:\,\mathbb{L}_2(\tau_i)\to \mathbb{L}_2(\tau_i)\|\le c_2\,.      \]

One can ask whether the converse holds. We prove it under the technical assumption that the composition of conditional expectations $E_1\circ E_2$ cancels off-diagonal terms in the decomposition of $\cM$:
\begin{equation}\label{eq_d1=0}
E_{1*}\circ E_{2*}=E_{1*}\circ E_{2*}\circ\cP_\cM\,. 
\end{equation}
This is for instance the case when $\cM\subset\cN_1,\cN_2\subset\cN$ forms a commuting square.

\begin{proposition}\label{theo_transference}
Assume that \Cref{eq_d1=0} holds. Then:
 \begin{enumerate}
  \item $d_1=0$ in \Cref{propboundsd1d2} and
  \item strong $\LL_2$ clustering is equivalent to conditional $\LL_2$ clustering.
 \end{enumerate}
\end{proposition}
The proof for this result is also deferred to Appendix \ref{subsec:proof_clustering}.

We conclude this section by noting a crutial difference between $\mathbb{L}_2$ and $\mathbb{L}_1$ clusterings: similarly to \Cref{defL2strong}, one could define a notion of \textit{strong $\mathbb{L}_1$ clustering of correlation} with respect to a state $\sigma\in\cD(\cM)$:
\begin{align*}
\|E_1\circ E_2-E^\cM:\,\mathbb{L}_1(\sigma)\to \mathbb{L}_\infty(\cN)\|\equiv c_1(\sigma)\,<\infty.
\end{align*}
This would in particular imply $\operatorname{cond\mathbb{L}_1}(c_1(\sigma))$. With this notion, and from an argument very similar to that of the proof of \Cref{theo_AT_pinching}, we could show the following bound on the error term in \Cref{propAT}:
	\begin{align*}
	\ln \left\lbrace \int_{-\infty}^\infty\,\tr\left[\rho_{1}\rho_{\cM}^{\frac{-1-it}{2}}\rho_{2}\rho_{\cM}^{\frac{-1+it}{2}}   \right]\,\beta_0(t)\,dt \right\rbrace&\le  	 2\,\|E_{1}\circ E_2-E^{\cM}:\,\mathbb{L}_1(\rho_\cM)\to\mathbb{L}_\infty\|\,D(\rho\|\rho_\cM)\,\\
	&\equiv 2 \,c_1(\rho_\cM)\,D(\rho\|\rho_\cM)\,.
	\end{align*}

From this, one would conclude a strong approximate tensorization result if one could find a uniform bound on $c_1(\sigma)$ for any $\sigma\in\cD(\cM)$. However, and as opposed to the case of strong $\mathbb{L}_2$ clustering, the constant $c_1(\sigma)$ depends on the state $\sigma$, and can in particular diverge: this is the case whenever there exists $i\in I_\cM$ such that $\dim(\cH_i)<\infty$, and for a state $\sigma:=|\psi\rangle\langle\psi|_{\cH_i}\otimes \tau_i$ that is pure on $\cH_i$. This justifies our choice of $\operatorname{cond\mathbb{L}}_1$ as the better notion of $\mathbb{L}_1$ clustering in the quantum setting. After the submission of this manuscript, new insights into this particular problem were shed in \cite{gao2021spectral}. We defer a discussion of their results to \Cref{sec:conclusion}.

\section{Applications}\label{sec:applications}

This section is devoted to two applications of the results of last section. In \Cref{subsec:application_MLSI}, we show the usefulness of Theorem \ref{theo_AT_pinching} in the context of modified logarithmic Sobolev inequalities. Then, we derive new entropic uncertainty relations in \Cref{subsec:uncertainty_relations}.

\subsection{Modified logarithmic Sobolev inequalities for biased bases}\label{subsec:application_MLSI}

Take $\cH=\C^l$ and assume that the algebra $\cN_1$ is the diagonal onto some orthonormal basis $|e^{(1)}_k\rangle$, whereas $\cN_2$ is the diagonal onto the basis $|e^{(2)}_k\rangle$. Moreover, choose $\cM$ to be the trivial algebra $\CC\Id_\ell$. Hence for each $i\in\{1,2\}$, $E_i$ denotes the Pinching map onto the diagonal $\operatorname{span}\left\lbrace |e_k^{(i)}\rangle\langle e_k^{(i)}|\right\rbrace$ and $E^\cM=\frac{\Id}{\ell}\tr[\cdot]$. Then, for any $X\ge0$:	 
	\begin{align*}
	\|(E_1\circ E_2-E^\cM)(X)\|_\infty&=\Big\|\sum_{k,k'}|e^{(1)}_k\rangle\langle e^{(1)}_k|e^{(2)}_{k'}\rangle\langle e^{(2)}_{k'}| X| e^{(2)}_{k'}\rangle\langle e^{(2)}_{k'}|e^{(1)}_k\rangle\langle e^{(1)}_k|-\frac{1}{\ell}\,|e^{(1)}_k\rangle\langle e^{(1)}_k|\,\langle e^{(2)}_{k'}|X|e^{(2)}_{k'}\rangle\Big\|_\infty\\
	&=\Big\|\sum_{k,k'}\,\big( |\langle e^{(1)}_k|e^{(2)}_{k'}\rangle |^2-\frac{1}{\ell} \big)|e^{(1)}_k\rangle\langle e^{(1)}_k|\,\langle e^{(2)}_{k'}| X| e^{(2)}_{k'}\rangle\Big\|_\infty\\
	&=\max_{k}\,\sum_{k'}\,\Big| |\langle e^{(1)}_k|e^{(2)}_{k'}\rangle |^2-\frac{1}{\ell} \Big| \quad\langle  e^{(2)}_{k'}| X| e^{(2)}_{k'}\rangle\\
	&\le \eps\,\frac{1}{\ell}\tr[X]
	\end{align*}
	where $\eps:=\ell \,\max_{k,k'}\,\Big| |\langle e^{(1)}_k|e^{(2)}_{k'}\rangle |^2-\frac{1}{\ell} \Big|$. Hence
	\begin{align*}
	\|(E_1\circ E_2-E^\cM):\mathbb{L}_1(\ell^{-1}\Id)\to \mathbb{L}_\infty\|\le \eps \,,  
	\end{align*}
	so that by choosing $\eta=\rho=\cP_{\rho_\cM}(\rho)$ in Theorem \ref{theo_AT_pinching}, as long as $\eps<1$, for any $\rho\in \cD(\CC^\ell)$, AT($(1-\eps)^{-1},0$) holds:
	\begin{align}\label{our-bound}
	D(\rho\| \ell^{-1}\Id)\le \frac{1}{1-\eps}\,(D(\rho\|E_{1*}(\rho))+D(\rho\|E_{2*}(\rho)))\,.
	\end{align}

This result is related to Example 4.5 of \cite{nick2019scoopingpaper}. There, the author obtains an inequality that can be rewritten in the following form:
	\begin{align}\label{nick-bound}
	D(\rho\| \ell^{-1}\Id)\le 4 \left(  \ln_{1-\delta} \left( \frac{2}{3 \ell + 5}   \right) + 1   \right)   \,(D(\rho\|E_{1*}(\rho))+D(\rho\|E_{2*}(\rho)))\,,
	\end{align}
where $\delta$ here is related with $\varepsilon$ in our example by:
\begin{equation*}
\delta \geq  \frac{1}{\ell} (1- \varepsilon).
\end{equation*}
%In Figure  \ref{fig}, we plot both multiplicative terms with respect to $\varepsilon$ for different dimensions (note that, in that case, ours is independent of the dimension). We can see that the result of AT($c,d$) obtained in this paper provides a tighter bound than those of \cite{nick2019scoopingpaper} for any dimension whenever $\varepsilon$ is small, though our bound explodes close to $\varepsilon=0.5$. However, since $0 \leq \varepsilon \leq 1/ \ell$, this situation is not a problem for $\ell \geq 3$.

%\begin{figure}
%\begin{center}
%\includegraphics[scale=0.6]{plot6.png}
%\end{center}
%\caption{This is a comparison of the multiplicative terms of \ref{our-bound} and \ref{nick-bound}, repectively. In blue, we plot that of \ref{our-bound}, whereas in dashed pink we show some examples of the one in \ref{nick-bound} for dimensions $\ell=2, 5$ and $10$, respectively. Note that for $\varepsilon$ far from $0.5$ the one obtained in this paper is much tighter.}
%\label{fig}
%\end{figure}

The approximate tensorization derived in (\ref{our-bound}) can be used to find exponential convergence in relative entropy  of the primitive quantum Markov semigroup $\e^{t\cL}$, where 
\begin{align*}
\cL(X):=E_1(X)+E_2(X)-2X\,.
\end{align*}
Indeed, for any state $\rho\in\cD(\cH)$, denoting by $\rho_t$ the evolved state $\e^{t\cL}(\rho)$ up to time $t$, the fact that $D(\rho_t\|\ell^{-1}\Id)\le \e^{-\alpha t}D(\rho\|\ell^{-1}\Id)$ holds for some $\alpha >0$ is equivalent to the so-called modified logarithmic Sobolev inequality. Let us recall that $\cL$ is said to satisfy a positive \textit{modified logarithmic Sobolev inequality} (MLSI for short) if there exits a constant $\alpha >0$ such that the following inequality holds for every $\rho \in \mathcal{D}(\mathcal{H})$:
\begin{equation*}
     \alpha D(\rho\|\ell^{-1}\Id) \leq - \tr[ \cL(\rho) ( \ln \rho - \ln \ell^{-1}\Id)] \, .
\end{equation*}
In such a case, the optimal $\alpha$ for which the previous inequality holds is called the \textit{modified logarithmic Sobolev constant}. In this particular setting, by \cite[Lemma 3.4]{junge2019stability}, the MLSI for $\cL$ can be written as
\begin{align*}
  \alpha D(\rho\|\ell^{-1}\Id)\le  D(\rho\|E_{1*}(\rho))+D(E_{1*}(\rho)\|\rho)+D(\rho\|E_{2*}(\rho))+D(E_{2*}(\rho)\|\rho)\,.
\end{align*}
By positivity of the relative entropy, it suffices to prove the existence of a constant $\alpha>0$ such that 
\begin{align*}
\alpha D(\rho\|\ell^{-1}\Id)\le D(\rho\|E_{1*}(\rho))+D(\rho\|E_{2*}(\rho))\,.
\end{align*}
This last inequality is equivalent to (\ref{our-bound}) for $\alpha=1-\eps$. Therefore, Theorem \ref{theo_AT_pinching} yields as a consequence the fact that the generator $\cL$ defined above satisfies a MLSI of constant bounded by $1-\varepsilon$. 

\subsection{Tightened entropic uncertainty relations}\label{subsec:uncertainty_relations}

Given a function $f\in\mathbb{L}_2(\mathbb{R})$ and its Fourier transform $\cF[f]$ with $\|f\|_{\mathbb{L}_2(\mathbb{R})}=\|\cF[f]\|_{\mathbb{L}_2(\mathbb{R})}=1$, Weyl proved in \cite{weyl1950theory} the following uncertainty relation:
\begin{align}\label{uncertainty}
V(|f|^2)\,V(|\cF[f]|^2)\ge \frac{1}{16\pi^2}\,,
\end{align}
where, given a probability distribution function $g$, $V(g)$ denotes its variance. The uncertainty inequality means that $|f|^2$ and $|\cF[f]|^2$ cannot both be concentrated arbitrarily close to their corresponding means. An entropic strenghthening of (\ref{uncertainty}) was derived independently by Hirschmann \cite{hirschman1957note} and Stam \cite{stam1959some}, and tightened later on by Beckner \cite{beckner1975inequalities}:
\begin{align*}
H(|f|^2)+H(|\cF[f]|^2)\ge \ln\frac{\e}{2}\,,
\end{align*}
where $H(g):=-\int_{\mathbb{R}} \,g(x)\ln g(x)\,dx$ stands for the differential entropy functional. In the quantum mechanical setting, this inequality has the interpretation that the total amount of uncertainty, as quantified by the entropy, of non-commuting observables (i.e. the position and momentum of a particle) is uniformly lower bounded by a positive constant independently of the state of the system. For an extensive review of entropic uncertainty relations for classical and quantum systems, we refer to the recent survey \cite{coles2017entropic}. 

More generally, given two POVMs $\mathbf{X}:= \{X_x\}_{x}$ and $\mathbf{Y}:=\{Y_y\}_{y}$ on a quantum system $A$, and in the presence of side information $M$ that might help to better predict the outcomes of $\mathbf{X}$ and $\mathbf{Y}$, the following state-dependent tightened bound was found in \cite{frank2013extended} (see also \cite{berta2010uncertainty} for the special case of measurements in two orthonormal bases and \cite{maasen1988uncertainty} for the case without memory): for any bipartite state $\rho_{AM}\in\cD(\cH_A\otimes \cH_M)$,
\begin{align}\label{uncert}
S(X|M)_{(\Phi_{\mathbf{X}}\otimes \id_M)(\rho)}+S(Y|M)_{(\Phi_{\mathbf{Y}}\otimes \id_M)(\rho)}\ge- \ln c'+S(A|M)_\rho\,,
\end{align}
with $c'=\max_{x,y}\{\tr(X_x\,Y_x)\}$, where $\Phi_{\mathbf{Z}}$ denotes the quantum-classical channel corresponding to the measurement $\mathbf{Z}\in\{\mathbf{X},\mathbf{Y}\}$:
\begin{align*}
\Phi_{\mathbf{Z}}(\rho_A):=\sum_{z}\,\tr(\rho_A Z_z)\,|z\rangle\langle z|_Z\,.
\end{align*}

The above inequality has been recently extended to the setting where the POVMs are replaced by two arbitrary quantum channels in \cite{gao2018uncertainty}. In this section, we restrict ourselves to the setting of \cite{berta2010uncertainty}, so that the measurement channels reduce to the Pinching maps of \Cref{subsec:application_MLSI}. First of all, we notice that the relation (\ref{uncert}) easily follows from \Cref{corollaruweak}:

\begin{example}\label{ex:example3}
Take $\cH_{AM}=\cH_A \otimes \cH_M$ a bipartite system and, as in the case of \Cref{subsec:application_MLSI}, assume that the algebra $\cN_1$ is the diagonal onto some orthonormal basis $|e^{(\mathcal{X})}_x\rangle$ in $\cH_A $, whereas $\cN_2$ is the diagonal onto the basis $|e^{(\mathcal{Y})}_y\rangle$ also in $\cH_A $. Moreover, choose $\cM$ to be the algebra $\CC\Id_\ell \otimes \mathcal{B}(\cH_M)$. Hence for each alphabet $\mathcal{Z}\in\{\mathcal{X},\mathcal{Y}\}$, $E_\mathcal{Z}$ denotes the Pinching map onto the diagonal $\operatorname{span}\left\lbrace |e_z^{(\mathcal{Z})}\rangle\langle e_z^{(\mathcal{Z})}|\right\rbrace$, which we tensorize with the identity map in $M$, and $E^\cM \otimes \id_M=\frac{1}{d_A}\Id_A \otimes\tr_A[\cdot] $. Then, for every $\rho_{AM} \in \mathcal{D}(\cH_{AM})$,
\begin{align*}
S(X|M)_{(E_{\mathcal{X}}\otimes \id_M)(\rho_{AM})} & = - D\left( (E_{\mathcal{X}}\otimes \id_M)(\rho_{AM}) \Big| \Big| \frac{\Id_A}{d_A} \otimes \rho_M \right) +\ln d_A \\
& = D(\rho_{AM} || (E_{\mathcal{X}}\otimes \id_M)(\rho_{AM})) - D(\rho_{AM} || (E^{\mathcal{M}}\otimes \id_M)(\rho_{AM})) +\ln d_A,
\end{align*}
where the last equality is derived from \cite[Lemma 3.4]{junge2019stability}. Hence, since
\begin{equation*}
D(\rho_{AM} || (E^{\mathcal{M}}\otimes \id_M)(\rho_{AM})) = - S(A| M)_{\rho_{AM}} + \ln d_A ,
\end{equation*}
by virtue of Corollary \ref{corollaruweak} we have
\begin{align*}
S(X|M)_{(E_{\mathcal{X}}\otimes \id_M)(\rho_{AM})} &+ S(Y|M)_{(E_{\mathcal{Y}}\otimes \id_M)(\rho_{AM})} \\
& \geq  D(\rho_{AM} || (E_{\mathcal{X}}\otimes \id_M)(\rho_{AM}))+  D(\rho_{AM} || (E_{\mathcal{Y}}\otimes \id_M)(\rho_{AM}))\\
& \; \; \; \;  - 2 D(\rho_{AM} || (E^{\mathcal{M}}\otimes \id_M)(\rho_{AM})) + 2 \ln d_A\\
& \geq S(A|M)_{\rho_{AM}} - d + \ln d_A ,
\end{align*}
where 
\begin{equation*}
d:=\sup_{\rho\in\cD(\cN_{2})}\inf\big\{\ln(\lambda)|\,E_{\mathcal{X}} \otimes \id_M (\rho)\le \lambda\eta\,\text{ for some }\,\eta\in\cD(\cM)  \big\} .
\end{equation*}

Now, taking into account the computations of \Cref{subsec:application_MLSI}, notice that 
\begin{equation*}
d= \ln \left(  d_A \, \underset{x, \, y}{\text{max}}\, |\langle e^{(\mathcal{X})}_x|e^{(\mathcal{Y})}_{y}\rangle |^2  \right) \, , 
\end{equation*}
obtaining thus expression (\ref{uncert}).
\end{example}

 However, close to the completely mixed state, this inequality is not tight  whenever $\mathbf{X}$ and $\mathbf{Y}$ are not mutually unbiased bases (i.e. $\exists x\in\cX,y\in\mathcal{Y}$ such that $|\langle X^x|Y^y\rangle|^2>\frac{1}{d_A}$). Here, we derive the following strengthening of \Cref{uncert} when $d_M=1$ as a direct consequence of  \Cref{theo_AT_pinching}:
\begin{corollary}\label{cor:2pinching}
	Given a finite alphabet $\mathcal{Z}\in \{\mathcal{X},\mathcal{Y}\}$, let $E_{\mathcal{Z}}$ denote the Pinching channels onto the orthonormal basis $\{|e^{(\mathcal{Z})}_z\rangle\}_{z\in\mathcal{Z}}$ corresponding to the measurement $\mathbf{Z}$. Assume further that $c_1= d_A\max_{x,y}\big| |\langle e^{(\mathcal{X})}_x|e^{(\mathcal{Y})}_y\rangle|^2-\frac{1}{d_A}   \big|<1$.	Then the following strenghtened entropic uncertainty relation holds for any state $\rho\in\cD(\cH_{A})$, 
	\begin{align}
	S(X)_{E_{\mathcal{X}}(\rho)}+S(Y)_{E_{\mathcal{Y}}(\rho)}\ge  (1+c_1)\,S(A)_\rho+(1-c_1)\ln d_A\,.
	\end{align}
\end{corollary}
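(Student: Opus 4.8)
The plan is to specialize \Cref{theo_AT_pinching} to the setting described in \Cref{example2}, but now allowing a nontrivial state $\rho$ rather than only the maximally mixed one, and to convert the resulting relative-entropy inequality into an entropic statement. Since $\cM=\CC\Id_\ell$ is trivial, its block decomposition \eqref{eq_decomp} has a single block with $\cH_i$ one-dimensional, so $\cP_{\rho_\cM}$ is the full Pinching onto the eigenbasis of $\rho_\cM=\ell^{-1}\Id$, which can be taken to be any orthonormal basis; the constant $c_1$ coincides exactly with the quantity $\eps$ computed in \Cref{example2}, giving the stated hypothesis $c_1<\tfrac12$. The natural move is to invoke \eqref{eqchanged}, which for trivial $\cM$ reads
\begin{align*}
D(\rho\|\ell^{-1}\Id)\le \frac{1}{1-2c_1}\,\big(D(\rho\|E_{\mathcal{X}*}(\rho))+D(\rho\|E_{\mathcal{Y}*}(\rho))\big)+D(\rho\|\cP_{\rho_\cM}(\rho))\,.
\end{align*}

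Next I would rewrite each term entropically. Because the Pinching maps $E_{\mathcal{X}},E_{\mathcal{Y}}$ are self-adjoint doubly stochastic channels onto diagonal subalgebras, one has the identity (used already in \Cref{ex:example3}, via \cite[Lemma 3.4]{junge2019stability})
\[
D(\rho\|E_{\mathcal{Z}*}(\rho))=S(E_{\mathcal{Z}}(\rho))-S(\rho)=S(Z)_{E_{\mathcal{Z}}(\rho)}-S(A)_\rho\,,
\]
since $E_{\mathcal{Z}}(\rho)$ is the classical distribution of outcomes and Pinching does not change the diagonal. Similarly $D(\rho\|\ell^{-1}\Id)=\ln d_A-S(A)_\rho$. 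The one genuinely new ingredient is the extra term $D(\rho\|\cP_{\rho_\cM}(\rho))$: here I would choose the eigenbasis of $\rho$ itself as the Pinching basis for $\cP_{\rho_\cM}$, which is legitimate because for trivial $\cM$ any orthonormal basis realizes the fixed-point condition $\tr_{\cK_i}[P_i\eta P_i]=\rho_{\cH_i}$ trivially. With that choice $\cP_{\rho_\cM}(\rho)=\rho$, so $D(\rho\|\cP_{\rho_\cM}(\rho))=0$ and the additive term drops out entirely.

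Substituting these three identities into \eqref{eqchanged} and multiplying through by $1-2c_1$ gives
\[
(1-2c_1)(\ln d_A-S(A)_\rho)\le \big(S(X)_{E_{\mathcal{X}}(\rho)}-S(A)_\rho\big)+\big(S(Y)_{E_{\mathcal{Y}}(\rho)}-S(A)_\rho\big)\,.
\]
Collecting the $S(A)_\rho$ terms on the left and rearranging yields
\[
S(X)_{E_{\mathcal{X}}(\rho)}+S(Y)_{E_{\mathcal{Y}}(\rho)}\ge (1+2c_1)\,S(A)_\rho+(1-2c_1)\ln d_A\,,
\]
which is exactly the claimed bound. The main obstacle I anticipate is purely bookkeeping rather than conceptual: one must verify carefully that for trivial $\cM$ the Pinching basis may be taken to be the eigenbasis of $\rho$ so that the weak term vanishes, and that the identification $c_1=\eps$ from \Cref{example2} is valid in the $\mathbb{L}_1(\tau_i)\to\mathbb{L}_\infty$ operator norm appearing in \eqref{cond_L1_clustering}. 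Once those two points are pinned down, the derivation is a direct substitution.
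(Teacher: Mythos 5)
Your proposal is correct and follows essentially the same route as the paper: both apply \Cref{theo_AT_pinching} in the setting of \Cref{example2} with trivial $\cM$ (where the Pinching $\cP_{\rho_\cM}$ is the identity, so the weak terms vanish and one gets $\operatorname{AT}((1-2c_1)^{-1},0)$), and then rewrite $D(\rho\|E_{\mathcal{Z}*}(\rho))=S(Z)_{E_{\mathcal{Z}}(\rho)}-S(A)_\rho$ and $D(\rho\|\ell^{-1}\Id)=\ln d_A-S(A)_\rho$ to obtain the entropic form. The only cosmetic differences are that you invoke \eqref{eqchanged} rather than \eqref{eqgeneral} with $\eta=\rho=\cP_{\rho_\cM}(\rho)$, and you derive the entropy identities directly instead of citing \Cref{ex:example3}.
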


\begin{proof}
Following the first lines of Example \ref{ex:example3} for $d_M=1$, we have
\begin{equation*}
S(X)_{E_{\mathcal{X}}(\rho)}+  S(Y)_{E_{\mathcal{Y}}(\rho)} = D(\rho || E_{\mathcal{X}}(\rho)) + D(\rho || E_{\mathcal{Y}}(\rho)) - 2 D(\rho || E^{\mathcal{M}}(\rho)) + 2 \ln d_A ,
\end{equation*}
where $E^\cM=\frac{\Id}{\ell}\tr[\cdot]$. Then, by virtue of Theorem \ref{theo_AT_pinching}, 
\begin{align*}
 D(\rho || E_{\mathcal{X}}(\rho)) &+ D(\rho || E_{\mathcal{Y}}(\rho)) - 2 D(\rho || E^{\mathcal{M}}(\rho)) \\
 & \geq (-1 - c_1) D(\rho || E^{\mathcal{M}}(\rho)) - D_\text{max} (E_\mathcal{X} \circ E_\mathcal{Y} (\rho) \|  E_\mathcal{X} \circ E_\mathcal{Y} (\eta) ) - c_1 D (\eta \| \mathcal{P}_{\rho_\mathcal{M}}(\rho) ) \, 
\end{align*}
for any $\eta = \mathcal{P}_{\rho_\mathcal{M}}(\rho)$, and by further choosing $\eta=\rho$, the last two terms above vanish. Thus, we have:
\begin{equation*}
S(X)_{E_{\mathcal{X}}(\rho)}+  S(Y)_{E_{\mathcal{Y}}(\rho)} \geq  (-1 - c_1) \, D(\rho || E^{\mathcal{M}}(\rho)) + 2 \ln d_A .
\end{equation*}
To conclude, just notice that 
\begin{equation*}
D(\rho || E^{\mathcal{M}}(\rho)) = - S(A)_{\rho} + \ln d_A ,
\end{equation*}

\end{proof}

%	\begin{align*}
%S(X|M)_{(E_{\mathcal{X}}\otimes \id_M)(\rho)}+S(Y|M)_{(E_{\mathcal{Y}}\otimes \id_M)\rho}  \ge (1+2c_1)\,S(A|M)_\rho+(1-2c_1)\ln d_{A}-\ln\Big(\frac{1+c_1}{1-c_1}\Big) \,,
%	\end{align*}

Analogously, we can study the case for three different orthonormal bases (see \cite{berta2010uncertainty}). For that, let us recall that given $ \cN_1, \cN_2, \cN_3 \subset \cN$ von Neumann subalgebras and $\cM \subset \cN_1 \cap \cN_2 \cap \cN_3$, if we consider their associated conditional expectations $E_i$ with respect to a state $\sigma$, and for each pair $(\cN_i, \cN_j)$ a result of AT($c_{ij}, d_{ij}$) holds, then for every $\rho \in \cD(\cN)$:
\begin{equation}
D(\rho || E_*^\cM (\rho)) \leq \frac{2}{3} \, \underset{i, \, j \in \qty{1,2,3}}{\text{max}} \qty{c_{ij} } \left( \, D(\rho || E_{1*} (\rho) ) + D(\rho || E_{2*} (\rho) ) + D(\rho || E_{3*} (\rho) ) \, \right) + \frac{d_{12} + d_{13} + d_{23}}{3}.
\end{equation}

\begin{corollary}\label{cor:3pinching}
	Given a finite alphabet $I \in \{\mathcal{X},\mathcal{Y}, \mathcal{Z}\}$, and using the same notation that in Corollary \ref{cor:2pinching}, assume that
	 $$c_1= d_A \underset{I, \, J \in \qty{\mathcal{X}, \mathcal{Y}, \mathcal{Z}}}{\text{max}} \left| |\langle e^{(I)}_i|e^{(J)}_j\rangle|^2-\frac{1}{d_A}   \right|<1 .$$	
	 Then the following strenghtened entropic uncertainty relation holds for any state $\rho\in\cD(\cH_{A})$, 
	\begin{align*}
	S(X)_{E_{\mathcal{X}}(\rho)}+S(Y)_{E_{\mathcal{Y}}(\rho)}+S(Z)_{E_{\mathcal{Z}}(\rho)}\ge  (2+c_1)\,S(A)_\rho+(1-c_1)\ln d_A\,.
	\end{align*}
\end{corollary}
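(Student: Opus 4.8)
The plan is to mimic exactly the structure of the proof of \Cref{cor:2pinching}, now exploiting the three-algebra averaging inequality stated just before the corollary. First I would recall the identity, valid for any Pinching channel $E_\mathcal{I}$ onto an orthonormal basis and the trivial intersection algebra $E^\cM=\frac{\Id}{d_A}\tr[\cdot]$, that
\begin{equation*}
S(I)_{E_\mathcal{I}(\rho)}=D(\rho\|E_\mathcal{I}(\rho))-D(\rho\|E^\cM(\rho))+\ln d_A\,,
\end{equation*}
which follows from the chain rule \cite[Lemma 3.4]{junge2019stability} as in \Cref{ex:example3}. Summing this over the three alphabets $\mathcal{X},\mathcal{Y},\mathcal{Z}$ gives
\begin{equation*}
S(X)_{E_\mathcal{X}(\rho)}+S(Y)_{E_\mathcal{Y}(\rho)}+S(Z)_{E_\mathcal{Z}(\rho)}=\sum_{\mathcal{I}}D(\rho\|E_\mathcal{I}(\rho))-3\,D(\rho\|E^\cM(\rho))+3\ln d_A\,.
\end{equation*}

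Next I would invoke the three-algebra inequality displayed immediately above \Cref{cor:3pinching}. Under the hypothesis $c_1<\tfrac12$, each pairwise approximate tensorization holds in the strong form $\operatorname{AT}((1-2c_1)^{-1},0)$ with the common constant $c_1$ (by the computation of \Cref{example2}, choosing $\eta=\rho=\cP_{\rho_\cM}(\rho)$, so each $d_{ij}=0$), whence $\max_{i,j}c_{ij}=(1-2c_1)^{-1}$ and the averaged bound becomes
\begin{equation*}
D(\rho\|E^\cM(\rho))\le\frac{2}{3(1-2c_1)}\sum_{\mathcal{I}}D(\rho\|E_\mathcal{I}(\rho))\,.
\end{equation*}
Rearranging this gives $\sum_{\mathcal{I}}D(\rho\|E_\mathcal{I}(\rho))\ge\tfrac{3}{2}(1-2c_1)\,D(\rho\|E^\cM(\rho))$, which I substitute into the sum-of-entropies identity to obtain
\begin{equation*}
\sum_{\mathcal{I}}S(I)_{E_\mathcal{I}(\rho)}\ge\Big(\tfrac{3}{2}(1-2c_1)-3\Big)D(\rho\|E^\cM(\rho))+3\ln d_A\,.
\end{equation*}

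Finally I would substitute the closed form $D(\rho\|E^\cM(\rho))=-S(A)_\rho+\ln d_A$ and collect terms. The coefficient of $-S(A)_\rho$ becomes $\tfrac{3}{2}(1-2c_1)-3=-\tfrac{3}{2}-3c_1$, and the constant is $(\tfrac{3}{2}(1-2c_1)-3+3)\ln d_A$. Here I expect the main obstacle to be reconciling the bookkeeping with the exact coefficients $(2+2c_1)$ and $(1-2c_1)$ claimed in the statement: a direct averaging of the two-algebra bound yields the factor $\tfrac{2}{3}$ in front of the multiplicative constant, and one must verify whether the stated coefficients follow from that averaging or rather from applying \Cref{eqchanged} of \Cref{theo_AT_pinching} directly to the three-fold sum (which would sharpen $\tfrac{2}{3}\max c_{ij}$ to the cleaner per-term constant). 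I would therefore check the arithmetic of the averaging step carefully—and, if the averaged constant does not reproduce $(2+2c_1)$, instead apply \Cref{eqchanged} directly with $\eta=\rho=\cP_{\rho_\cM}(\rho)$ to the sum of three relative entropies, paralleling the two-basis argument, so that the multiplicative loss enters only through the single factor $(1-2c_1)^{-1}$ and the clean coefficients emerge after substituting $D(\rho\|E^\cM(\rho))=-S(A)_\rho+\ln d_A$.
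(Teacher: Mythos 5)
Your proposal is correct and follows the intended route: the paper states \Cref{cor:3pinching} without proof, as a direct analogue of \Cref{cor:2pinching} obtained from the three-algebra averaging inequality displayed just above it, which is exactly what you do. The coefficient discrepancy you flag at the end resolves in your favour and does not require switching strategies: your averaged bound $\sum_{\mathcal{I}}D(\rho\|E_{\mathcal{I}}(\rho))\ge\tfrac{3}{2}(1-2c_1)\,D(\rho\|E^{\cM}(\rho))$ gives $\sum_{\mathcal{I}}S(I)_{E_{\mathcal{I}}(\rho)}\ge(\tfrac{3}{2}+3c_1)\,S(A)_\rho+(\tfrac{3}{2}-3c_1)\ln d_A$, and since $S(A)_\rho\le\ln d_A$ and $c_1<\tfrac{1}{2}$ this is \emph{stronger} than the claimed inequality, the difference being $(\tfrac{1}{2}-c_1)(\ln d_A-S(A)_\rho)\ge0$. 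The stated coefficients $(2+2c_1)$ and $(1-2c_1)$ correspond to the cruder estimate $\sum_{\mathcal{I}}D(\rho\|E_{\mathcal{I}}(\rho))\ge(1-2c_1)\,D(\rho\|E^{\cM}(\rho))$, which already follows from the two-basis bound of \Cref{example2} applied to a single pair together with nonnegativity of the remaining relative entropy; either way the corollary is established, so your fallback is unnecessary except as a sanity check on the bookkeeping.
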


\section{Lattice spin systems with commuting Hamiltonians}\label{sec:dabies}

 In this section, we further control the strong and weak constants appearing in \Cref{theo_AT_pinching} in the context of lattice spin systems, and compare them with previous conditions in the classical and quantum literature. The main result presented in this section is \Cref{prop_classical_inftyT}, where we show that the classical Glauber dynamics embedded in a quantum system satisfies a strong approximate tensorization AT$(1,0)$ at infinite temperature and presents an approximate tensorization AT$(c,0)$ with small multiplicate constant when the temperature is high enough.  This result is contained in  Section \ref{classicalglauber}. 
 
 %Proposition \ref{stabilizer}, in which a result of approximate tensorization with a particular interpretation of the multiplicative constant in terms of some clustering of correlations (see Section \ref{sec:clustering}) is provided; and
 
 %Section \ref{subsubsec:L1-clustering-correlations} and

% Beforehand, we introduce some basic notions on heat-bath and Davies generators, followed by some preliminary discussion for the validity of such results in a more general regime.

 Given a finite lattice $\Lambda \subset\subset \ZZ^d$, we define the tensor product Hilbert space $\cH:=\cH_\Lambda\equiv\bigotimes_{k\in\Lambda}\cH_k$, where for each $k\in\Lambda$, $\cH_k\simeq \CC^\ell$, $\ell\in\NN$. Then, let $\Phi:\Lambda\to \cB_{\operatorname{sa}}(\cH_{\Lambda}) $ be an $r$-local potential, i.e. for any $j\in \Lambda$, $\Phi(j)$ is self-adjoint and supported on a ball of radius $r$ around site $j$. We assume further that $\| \Phi(j) \|\le K$ for some constant $K<\infty$. The potential $\Phi$ is said to be a \textit{commuting potential} if for any $i,j\in\Lambda$, $[\Phi(i),\Phi(j)]=0$.  Given such a local, commuting potential, the Hamiltonian on a subregion $A\subseteq \Lambda$ is defined as
\begin{align}
H_A=\sum_{j\in A}\,\Phi(j)\,.
\end{align}
Next, the corresponding Gibbs state corresponding to the region $A$ and at inverse temperature $\beta$ is defined as
\begin{align}
\sigma_A=\frac{\e^{-\beta H_A}}{\tr[\e^{-\beta H_A}]}\,.
\end{align}
Note that this is in general not equal to the state $\tr_B[\sigma_\Lambda]$.

We begin by introducing Davies semigroups on lattice spin systems. These are the most studied examples of Markovian dynamics studied in this context, together with heat-bath generators defined through Petz recovery maps \cite{BardetCapelLuciaPerezGarciaRouze-HeatBath1DMLSI-2019,[BK16],Temme2014}. Thanks to \Cref{theo_equal_cond}, we know that the conditional expectations arising from both dynamics coincide. Hence, for the rest of the paper, all the results presented will be independent of the choice of underlying dynamics.  

\subsection{Davies generators on lattice spin systems}

Consider the setting introduced in Section \ref{subsec:conditional_expectations} and, in particular, the Hamiltonian modelling the system-bath interaction. As mentioned before, the evolution on the system can be approximated by a quantum Markov semigroup whose generator is of the following form:
\begin{align}\label{totalgenerator}
\cL^{\operatorname{D},\beta}_\Lambda(X)=i[H_\Lambda,X]+\sum_{k\in\Lambda}\,\cL^{\operatorname{D},\beta}_k(X)\,,
\end{align}
where 
\begin{align}\label{lindblad}
\cL^{\operatorname{D},\beta}_k(X)=\sum_{\omega,\alpha}\,\chi^{\beta}_{\alpha,k}(\omega)\,\Big(  S_{\alpha,k}^*(\omega)XS_{\alpha,k}(\omega)-\frac{1}{2}\,\big\{  S_{\alpha,k}^*(\omega)S_{\alpha,k}(\omega),X \big\}   \Big)\,.
\end{align}
Similarly, define the generator $\cL^\beta_A$ by restricting the sum in \Cref{totalgenerator} to the sublattice $A$:
\begin{align}\label{localgenerator}
\cL^{\operatorname{D},\beta}_A(X)=i[H_A,X]+\sum_{k\in A}\,\cL^{\operatorname{D},\beta}_k(X)\,.
\end{align}
Note that $\cL^{\operatorname{D},\beta}_A$ acts non-trivially on the boundary of $A$, denoted by $A_\partial:=\{k\in\Lambda:\,d(k,A)\le r\}$. Then, for any region $A\subset \Lambda$, we define the conditional expectation onto the algebra $\cN_A$ of fixed points of $\cL_A$ with respect to the Gibbs state $\sigma=\sigma_\Lambda$ as follows \cite{Kastoryano2014}: given an adequate decomposition $\cH_\Lambda:=\bigoplus_{i\in I_{\cN_A}}\,\cH_i^A\otimes \cK_i^A$ of the total Hilbert space $\cH_\Lambda$ of the lattice spin system,
\begin{align}\label{Daviescond}
& \cN_A=\bigoplus_{i\in I_{\cN_A}}\,\cB(\cH_i^A)\otimes \Id_{\cK_i^A}\,, \nonumber \\
& E^{\operatorname{D},\beta}_A[X]:=\lim_{t\to \infty}\e^{t\cL^{\operatorname{D},\beta}_A}(X)\equiv \sum_{i\in I_{\cN_A}}\, \tr_{\cK^A_i}(P^{A}_i (\Id_{\cH_i^A}\otimes \sigma_i^A)\,X  P^{A}_i) \otimes \Id_{\cK^A_i}\,,
\end{align}
for some fixed full-rank states $\sigma_i^A$ on $\cK_i^A$. It was shown in Lemma 11 of \cite{Kastoryano2014} that the generator of the Davies semigroups corresponding to a local commuting potential is \textit{frustration-free}. This means that the state $\sigma$ is in the kernels of all $\cL_A^{\operatorname{D},\beta}$, $A\subseteq \Lambda$. Therefore, the conditional expectations $E^{\operatorname{D},\beta}_A$ are all defined with respect to $\sigma$.

In the next section, we study the weak approximate tensorization of the conditional expectations $E_A^{D,\beta}\equiv E_A^\beta$ in the case of a classical Hamiltonian. We start with the following simple observation for commuting Hamiltonians.

 \begin{proposition}\label{prop_cond_ABdisjoint}
  Let $A,B\subset\Lambda$ be two regions separated by at least a distance $2r$, that is such that $A_\partial\cap B_\partial=\emptyset$. Then $\cN_A$ and $\cN_B$ form a commuting square, that is,
  \begin{equation}\label{eq_prop_cond_ABdisjoint1}
   E_A^{\beta}\circ E_B^{\beta}=E_B^{\beta}\circ E_A^{\beta} = E_{A\cup B}^{\beta}\,.
  \end{equation}
  Consequently, for all $\rho\in\cD(\cH_\Lambda)$,
  \begin{equation}\label{eq_prop_cond_ABdisjoint2}
   D\big(\rho\|E_{A\cup B*}^{\beta}(\rho)\big)\leq D\big(\rho\|E_{A*}^{\beta}(\rho)\big)+D\big(\rho\|E_{B*}^{\beta}(\rho)\big)\,.
  \end{equation}
 \end{proposition}

 \begin{proof}
 Remark that by definition of the map $\cL_A^{D,\beta}$, it only acts non-trivially on $A_\partial$ and as identity on $(A_\partial)^c$. Consequently, as $E_A=\lim_{t\to\infty} e^{t\cL_A^{D,\beta}}$, this property carries over to the conditional expectation and we have $E_A=E_A\otimes \ind_{\cH_{A_\partial^c}}$ by slight abuse of notations. Similarly, $E_B=E_B\otimes \ind_{\cH_{B_\partial^c}}$. This shows the result since $A_\partial\cap B_\partial=\emptyset$.
 \end{proof}

% \begin{proposition}\label{prop_cond_ABdisjoint}
%   Let $A,B\subset\Lambda$ be two regions separated by at least a distance $2r$, that is such that $A_\partial\cap B_\partial=\emptyset$. Then $\cN_A$ and $\cN_B$ form a commuting square, that is,
%   \begin{equation}\label{eq_prop_cond_ABdisjoint1}
%   E_A^{\beta}\circ E_B^{\beta}=E_B^{\beta}\circ E_A^{\beta} = E_{A\cup B}^{\beta}\,.
%   \end{equation}
%   Consequently, for all $\rho\in\cD(\cH_\Lambda)$,
%   \begin{equation}\label{eq_prop_cond_ABdisjoint2}
%   D\big(\rho\|E_{A\cup B*}^{\beta}(\rho)\big)\leq D\big(\rho\|E_{A*}^{\beta}(\rho)\big)+D\big(\rho\|E_{B*}^{\beta}(\rho)\big)\,.
%  \end{equation}
%  \end{proposition}
% \begin{proof}
% Remark that by definition of the map $\cL_A^{D,\beta}$, it only acts non-trivially on $A_\partial$ and as identity on $(A_\partial)^c$. Consequently, as $E_A=\lim_{t\to\infty} e^{t\cL_A^{D,\beta}}$, this property carries over to the conditional expectation and we have $E_A=E_A\otimes \ind_{\cH_{A_\partial^c}}$ by slight abuse of notations. Similarly, $E_B=E_B\otimes \ind_{\cH_{B_\partial^c}}$. This shows the result since $A_\partial\cap B_\partial=\emptyset$.
% \end{proof}

\subsection{Classical Hamiltonian over quantum systems}\label{classicalglauber}

In this section, we investigate the case of a quantum lattice spin system undergoing a classical Glauber dynamics, whose framework was already studied in \cite{Cubitt2015}. These semigroups correspond to Davies generators whose Hamiltonian is classical, that is, diagonal in a product basis of $\cH_\Lambda$. In order to make the connection with the classical Glauber dynamics over a classical system (i.e. initially diagonal in the product basis), we introduce the generator more explicitly: consider a lattice spin system over $\Gamma=\ZZ^d$ with classical configuration space $S=\{+1,-1\}$, and, for each $\Lambda\subset \Gamma$, denote by $\Omega_\Lambda=S^\Lambda$ the space of configurations over $\Lambda$. Next, given a classical finite-range, translationally invariant potential $\{J_A\}_{A\in\Gamma}$ and a boundary condition $\tau\in\Omega_{\Lambda^c}$, define the Hamiltonian over $\Lambda$ as

\begin{align*}
H_\Lambda^\tau(\sigma)=-\sum_{A\cap\Lambda\ne 0}\,J_A(\sigma\times\tau),~~~~~~\forall\sigma\in\Omega_\Lambda\,.
\end{align*}
The classical Gibbs state corresponding to such Hamiltonian is then given by 
\begin{align*}
\mu_\Lambda^\tau(\sigma)=(Z_{\Lambda}^\tau)^{-1}\,\exp\big( -H_{\Lambda}^\tau(\sigma)\big)\,,
\end{align*}
Next, define the Glauber dynamics for a potential $J$ as the Markov process on $\Omega_\Lambda$ with the generator 
\begin{align*}
(L_\Lambda f)(\sigma)=\sum_{x\in\Lambda}\,c_{J}(x,\sigma)\nabla_xf(\sigma)\,,
\end{align*}
where $\nabla_xf(\sigma)=f(\sigma^x)-f(\sigma)$ and $\sigma^x$ is the configuration obtained by flipping the spin at position $x$. The numbers $c_J(x,\sigma)$ are called transition rates and must satisfy the following assumptions: 
\begin{itemize}
	\item[1.] There exist $c_m,c_M$ such that $0<c_m\le c_J(x,\sigma)\le c_M<\infty$ for all $x,\sigma$. 
	\item[2.] $c_J(x,.)$ depends only on spin values in $b_r(x)$.
	\item[3.] For all $k\in\Gamma$, $c_J(x,\sigma')=c_J(x+k,\sigma)$ id $\sigma'(y)=\sigma(y+k)$ for all $y$.
	\item[4.] Detailed balance: for all $x\in\Gamma$, and all $\sigma$ 
	\begin{align*}
	\exp\left(-\sum_{A\ni x}J_A(\sigma)\right)c_J(x,\sigma)=c_J(x,\sigma^x)\exp\left(   -\sum_{A\ni x}J_A(\sigma^x)\right)\,.
	\end{align*}
\end{itemize} 
These assumptions constitute sufficient conditions for the corresponding Markov process to have the Gibbs states over $\Lambda$ as stationary points. Next, we introduce the notion of a quantum embedding of the aforementioned classical Glauber dynamics. This is the Lindbladian of corresponding Lindblad operators given by 
\begin{align}\label{Lindbladops}
L_{x,\eta}:=\sqrt{c_J(x,\eta)}\,|\eta^x\rangle\langle \eta|\otimes \Id\,,~~~\forall x\in\Lambda,\,\eta\in\Omega_{b_x(r)}\,.
\end{align}
It was shown in \cite{Cubitt2015} that such a dynamics is KMS-symmetric with respect to the state $\mu_\Lambda^\tau$ as embedded into the computational basis. Moreover, the set of fixed points in the Schr\"{o}dinger picture corresponds to the convex hull of the set of Gibbs states over $\Lambda$, $\{\mu_\Lambda^\tau|\tau\in\Omega_{\Lambda^c}\}$. In the Heisenberg picture, this implies that the fixed-point algebras $\cF(\cL_A)$ are expressed as 
\begin{align}\label{fixedpoints}
\cF(\cL_A):=\bigoplus_{\omega\in\Omega_{\partial A}}\,|\omega\rangle\langle\omega|_{\partial A}\otimes \Id_{A}\otimes \cB(\cH_{A_\partial^c})\,.
\end{align}
Equivalently, 
\begin{align}\label{classcond}
E_{A*}(\rho)=\sum_{\omega\in\Omega_{\partial A}}\,|\omega\rangle\langle\omega|_{\partial A}\otimes \sigma^\omega_{A}\otimes \tr_A(\langle \omega|\rho|\omega\rangle)
\equiv\sum_{\omega\in\Omega_{\partial A}}\,|\omega\rangle\langle\omega|_{\partial A}\otimes E_{A*}^\omega(\langle\omega|\rho|\omega\rangle) \,,
\end{align}
where $\sigma^\omega_A$ denotes the Gibbs state $\mu^\omega_A$ embedded into the computational basis.

With this expression at hand we can prove that classical Hamiltonians over quantum systems satisfy the same approximate tensorization than in the classical case.

\begin{theorem}\label{prop_classical_inftyT}
 Let $A,B\subset\Lambda$. Then, at $\beta=0$, $\cN_A$ and $\cN_B$ form a commuting square, that is,
 \begin{equation}\label{eq_prop_classical_inftyT}
  E_A^{\beta=0}\circ E_B^{\beta=0}=E_B^{\beta=0}\circ E_A^{\beta=0} = E_{A\cup B}^{\beta=0}\,
 \end{equation}
  and consequently, for all $\rho\in\cD(\cH_\Lambda)$,
 \begin{equation}\label{eq_prop_classical_inftyT2}
  D\big(\rho\|E_{A\cup B*}^{\beta=0}(\rho)\big)\leq D\big(\rho\|E_{A*}^{\beta=0}(\rho)\big)+D\big(\rho\|E_{B*}^{\beta=0}(\rho)\big)\,.
 \end{equation}
 At finite temperature $\beta>0$, $\operatorname{AT}(c,0)$ holds with
 \begin{equation}\label{eq_prop_classical_inftyT3}
  c=\frac1{1-c_1}\qquad\text{where}\quad c_1=\sup_{\omega\in\Omega_{\partial (A\cup B)}}\,\|E_A^\omega\circ E_B^\omega-E_{A\cup B}^\omega\,:\,\mathbb L_1(\mu_{A\cup B}^\omega)\to\mathbb L_\infty\|\,.
 \end{equation}
\end{theorem}

\begin{proof}
% that is
% \[\cP_A(\rho)=\sum_{\omega\in A}\,\langle\omega|\rho|\omega\rangle\,.\]
% Notice that for all subsets $A,B\subset\Lambda$, 
% \begin{equation}\label{eq_proof_classical}
% \cP_A\circ\cP_B=\cP_{A\cup B}\,,\quad E_A\circ\cP_{A_\partial}=\cP_{A_\partial}\circ E_A=E_A\,,\quad E_A\circ\cP_B=\cP_B\circ E_A\,.    
% \end{equation}
\Cref{eq_prop_classical_inftyT} is a direct consequence of the definition of the conditional expectations at $\beta=0$: i.e. $E_{A}^{\beta=0}=\Id_A\otimes \tr_A$. In order to prove that AT$(c,0)$ holds at positive temperature, we use our main result on approximate tensorization based on Pinching techniques, namely Theorem \ref{theo_AT_pinching}. More specifically, for every $\rho \in \cD (\cH_{\Lambda})$, we denote $\rho_{\mathcal{M}}:= E_{A \cup B^*} (\rho)$ and apply \Cref{eqgeneral} to $\eta=\cP_{\rho_\cM}(\rho)$. Thus, we only need to check that $D_{\max}\big(E_{A*}\circ E_{B*}(\rho)\|E_{A*}\circ E_{B*}(\eta)\big)=0$. We denote by $\cP_A$ the pinching map on the computational basis on a subset $A$ of $\Lambda$. By a simple computation we see that $\cP_{(A\cup B)_\partial}\circ\cP_{\rho_\cM}=\cP_{(A\cup B)_\partial}$ and 
\[E_{A*}\circ E_{B*}= E_{A*}\circ E_{B*}\circ \cP_{(A\cup B)_\partial}\,,\]
so that $E_{A*}\circ E_{B*}(\rho)=E_{A*}\circ E_{B*}(\cP_{\rho_\cM}(\rho))$, which completes the proof.
\end{proof}

 In \Cref{prop_classical_inftyT}, we have shown that strong approximate tensorization AT($1,0$) holds at infinite temperature for classical Hamiltonians.
 However, let us remark that it is not clear (and we strongly believe the opposite) that this remains true for non-classical commuting Gibbs states. A first idea to support this intuition has been shown in \Cref{prop_cond_ABdisjoint}. We leave a thorough study of this fact for future work.
 
\section{Outlook}\label{sec:conclusion}

In this paper, we introduce and study an extension of the celebrated strong subadditivity of the entropy: given algebras $\cN=\cN_1\cap\cN_2$, $\cN_1,\cN_2\subseteq\cM$, with corresponding conditional expectations $E_1:\cM\to \cN_1$, $E_2:\cM\to \cN_2$ and $E_\cN:\cM\to \cN$, there exist constants $c\ge 1$ and $d\ge 0$ such that 
\begin{align}\label{approximatetens}
    D(\rho\|E_*^\cM(\rho))\le c\,\Big(D(\rho\|E_{1*}(\rho))+D(\rho\|E_{2*}(\rho))\Big)+d\qquad \forall\rho\,.
\end{align}
In analogy with its classical analogue, we dubbed this inequality \textit{approximate tensorization} of the relative entropy.

Since the first submission of this paper, \eqref{approximatetens} has found several extensions and applications in the fields of quantum information theory and many body quantum systems: first, the inequality was used to derive the first proof of the positivity of the modified logarithmic Sobolev inequality constant independently of the system size for Gibbs states of nearest neighbour commuting Hamiltonians on a regular lattice \cite{capel2020modified}. For this specific class of Gibbs states, the authors showed that the analysis can indeed be reduced to the case of states $\rho$ for which the additive error term in \Cref{theo_AT_pinching} vanishes, hence providing a direct application to our main result. 

More recently, \cite{gao2021spectral} (as well as a new version of \cite{nick2019scoopingpaper}) proved a strong approximate tensorization result with multiplicative constant depending on the $\mathbb{L}_2$ clustering of the conditional expectations as well as the dimension of the system. Their approximate tensorization was then used to find asymptotically tight exponential entropic decay to equilibrium for various models of noise including quantum Markov semigroups generated by classical graph Laplacians, approximate $k$-designs, or the quantum Kac master equation. In their extension of \eqref{approximatetens}, the noisy system can also be coupled to an arbitrarily large noiseless environment. Although providing a tight approximate tensorization result in the sense that $d=0$ and that it reduces to the exact tensorization in the commuting square setting, their bound however still provides a poor control of the multiplicative constant in the context of Gibbs samplers. We expect that both methods combined will prove useful in proving the uniform positivity of the MLSI constant for generic quantum Gibbs samplers in the near future. Indeed, these techniques, together with a version of Theorem \ref{theo_AT_pinching}, will be used soon to derive positivity of a MLSI for Davies generators in 1D systems \cite{BardetCapelGaoLuciaPerezGarciaRouze-Davies1DMLSI-2021}.

\paragraph{Acknowledgements} IB was supported by Region Ile- de-France in the framework of DIM SIRTEQ. AC was partially supported by a La Caixa-Severo Ochoa grant (ICMAT Severo Ochoa project SEV-2011-0087, MINECO) and acknowledges support from MINECO (grant MTM2017-88385-P), from Comunidad de Madrid (grant QUITEMAD-CM, ref. P2018/TCS-4342) and from ICMAT Severo Ochoa project SEV-2015-0554 (MINECO).  CR is grateful to Federico Pasqualotto for useful discussions, and acknowledges financial support from the TUM university Foundation Fellowship. CR and AC acknowledge funding by the Deutsche Forschungsgemeinschaft (DFG, German Research Foundation) under Germanys Excellence Strategy EXC-2111 390814868. This project has received funding from the European Research Council (ERC) under the European Union’s Horizon 2020 research and innovation programme (grant agreement No 648913).

\bibliography{library}

\appendix

\section{Conditional expectations on fixed-points of Markovian evolution}\label{sec:condexp}

In this section, we consider conditional expectations arising from Petz recovery maps and from Davies generators as introduced in \Cref{sect_twoexamples}. The main result, \Cref{theo_equal_cond}, states that the corresponding conditional expectations coincide. 

\subsection{Conditional expectations generated by a Petz recovery map}

Here, we further discuss the notion of conditional expectations coming from the Petz recovery map. The discussion is largely inspired by some results in \cite{carlen2017recovery}. 

Let $\sigma$ be a faithful density matrix on the finite-dimensional algebra $\cN$ and let $\cM\subset\cN$ be a subalgebra. We denote by $E_\tau$  the conditional expectation onto $\cM$ with respect to the completely mixed state (i.e. $E_\tau$ is self-adjoint with respect to the Hilbert-Schmidt inner product). Let us recall the notations and notions introduced in Section \ref{subsec:conditional_expectations} regarding the adjoint of the Petz recovery map and the conditional expectation constructed from it. We show below the form that these concepts take for a bipartite system.

\begin{example}\label{ex_bipartite}
 Our main example is the case of a bipartite system $AB$. In this case, $\cN=\cB(\cH_{AB})$ and $\cM=\Id_{\cH_A}\otimes\cB(\cH_{B})$. Let $\sigma=\sigma_{AB}$ be a faithful density matrix on $AB$. The partial trace with respect to $\cH_A$ is an example of a conditional expectation $E_\tau$ which is not compatible with $\sigma_{AB}$, in general. With this choice, we obtain:
 \begin{align*}
  & \sigma_\cM=\sigma_B\,,\\
  & \mathcal A_{\sigma_{AB}}(X)=\sigma_{B}^{-\frac12}\,\tr_A[\sigma_{AB}^{\frac12}\,X\,\sigma_{AB}^{\frac12}]\,\sigma_{B}^{-\frac12}\,,\qquad\forall X\in\cB(\cH_{AB})\, , \\
  & \cR_{\sigma_{AB}}(\rho_B)=\sigma_{AB}^{\frac12}\sigma_{B}^{-\frac12}\,\rho_B\,\sigma_{B}^{-\frac12}\sigma_{AB}^{\frac12}\,,\qquad\forall \rho_{AB}\in\cD(\cH_{AB})\,,
 \end{align*}
 where here we identify an operator $X_B$ with $\Id_A\otimes X_B$ for sake of simplicity. An important remark is that, in general, $E_{\sigma_{AB}*}$ is not a recovery map.
 %because there is no reason it can be written as a map from $\cH_B$ to $\cH_{AB}$ that reconstructs $A$ from $B$.
\end{example}

We are now ready to state a first technical proposition, whose content is mostly contained in \cite{carlen2017recovery}.

\begin{proposition}\label{prop_condexp}
 Let $\rho$ be a density matrix on $\cN$. Then the following assertions are equivalent:
\begin{enumerate}
 \item $D(\rho\|\sigma)=D(\rho_\cM\|\sigma_\cM)$;
 \item $\rho=\cR_\sigma(\rho_\cM)$;
 \item $\rho=E_{\sigma*}(\rho)$;
  \item $D(\rho\|E_{\sigma*}(\rho))=0$.
 \item $D(\rho\|\sigma)=D(E_{\sigma*}(\rho)\|\sigma)$;
\end{enumerate}
 \end{proposition}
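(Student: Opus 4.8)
The plan is to establish the five assertions as a single chain of equivalences, $(1)\Leftrightarrow(2)\Leftrightarrow(3)\Leftrightarrow(4)\Leftrightarrow(5)$, handling the two genuinely entropic links $(1)\Leftrightarrow(2)$ and $(4)\Leftrightarrow(5)$ by invoking recovery and chain-rule results, and the two structural links $(2)\Leftrightarrow(3)$ and $(3)\Leftrightarrow(4)$ via the fixed-point description of $E_\sigma$. The equivalence $(3)\Leftrightarrow(4)$ is immediate: since $E_{\sigma*}(\rho)$ is a state, faithfulness of the relative entropy gives $D(\rho\|E_{\sigma*}(\rho))=0$ iff $\rho=E_{\sigma*}(\rho)$.

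For $(1)\Leftrightarrow(2)$, I would apply the equality case of the data-processing inequality to the channel $E_\tau$, which is completely positive, unital and (being $\mathrm{HS}$-self-adjoint) trace preserving. Because $E_\tau(\rho)=\rho_\cM$ and $E_\tau(\sigma)=\sigma_\cM$, data processing reads $D(\rho\|\sigma)\ge D(\rho_\cM\|\sigma_\cM)$, and by Petz's theorem equality holds precisely when both $\rho$ and $\sigma$ are recovered by the Petz map $\cR_\sigma$. Since $\cR_\sigma(\sigma_\cM)=\sigma^{1/2}\sigma_\cM^{-1/2}\sigma_\cM\sigma_\cM^{-1/2}\sigma^{1/2}=\sigma$ holds automatically, the sole remaining condition is $\rho=\cR_\sigma(\rho_\cM)$, which is exactly $(2)$; this is the input imported from \cite{carlen2017recovery}.

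The central link is $(2)\Leftrightarrow(3)$, where the map $\mathcal A_\sigma$ must be tied to its fixed-point projection $E_\sigma$. The key computation is that the Hilbert--Schmidt dual of $\mathcal A_\sigma$ is $\mathcal A_{\sigma*}=\cR_\sigma\circ E_\tau$ (one verifies $\langle Y,\mathcal A_\sigma(X)\rangle_{\mathrm{HS}}=\langle\cR_\sigma(E_\tau(Y)),X\rangle_{\mathrm{HS}}$, using that $\mathcal A_\sigma(X)\in\cM$ and the module property of $E_\tau$). Hence $\mathcal A_{\sigma*}(\rho)=\cR_\sigma(\rho_\cM)$, so that $(2)$ asserts exactly that $\rho$ is a fixed point of $\mathcal A_{\sigma*}$. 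Dualizing $E_\sigma=\lim_{n\to\infty}\mathcal A_\sigma^{n}$ gives $E_{\sigma*}=\lim_{n\to\infty}\mathcal A_{\sigma*}^{n}$; the implication $(2)\Rightarrow(3)$ is then immediate, since a fixed point of $\mathcal A_{\sigma*}$ is fixed by every power and thus by the limit. For $(3)\Rightarrow(2)$, I would telescope the limit to get $E_\sigma\circ\mathcal A_\sigma=E_\sigma$, dualize to $\mathcal A_{\sigma*}\circ E_{\sigma*}=E_{\sigma*}$, and conclude that $E_{\sigma*}(\rho)=\rho$ forces $\mathcal A_{\sigma*}(\rho)=\mathcal A_{\sigma*}(E_{\sigma*}(\rho))=E_{\sigma*}(\rho)=\rho$.

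Finally, for $(4)\Leftrightarrow(5)$, I would first note that $\sigma$ is invariant under $E_{\sigma*}$: indeed $\mathcal A_{\sigma*}(\sigma)=\cR_\sigma(\sigma_\cM)=\sigma$, so $E_{\sigma*}(\sigma)=\sigma$. Since $E_\sigma$ is a conditional expectation compatible with $\sigma$, the Pythagorean chain rule for $\sigma$-preserving conditional expectations applies, namely $D(\rho\|\sigma)=D(\rho\|E_{\sigma*}(\rho))+D(E_{\sigma*}(\rho)\|\sigma)$ (this is the relative-entropy analogue of \eqref{eq_proof_HS}, cf.\ \cite[Lemma 3.4]{junge2019stability}). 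Subtracting the common term shows that $(5)$ is equivalent to $D(\rho\|E_{\sigma*}(\rho))=0$, i.e.\ to $(4)$. I expect the main obstacle to be making $(2)\Leftrightarrow(3)$ fully rigorous---in particular verifying the dual identity $\mathcal A_{\sigma*}=\cR_\sigma\circ E_\tau$ and the fact that the range of $E_{\sigma*}$ coincides with the fixed-point set of $\mathcal A_{\sigma*}$---together with correctly citing the equality-in-data-processing statement and the chain rule that drive $(1)\Leftrightarrow(2)$ and $(4)\Leftrightarrow(5)$.
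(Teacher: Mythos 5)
Your proposal is correct and follows essentially the same route as the paper: $(1)\Leftrightarrow(2)$ via Petz's equality condition in data processing, $(3)\Leftrightarrow(4)$ by faithfulness of the relative entropy, $(4)\Leftrightarrow(5)$ via the chain rule of \cite[Lemma 3.4]{junge2019stability}, and $(2)\Leftrightarrow(3)$ through the identification of fixed points of $\mathcal A_\sigma$ with those of $E_\sigma$. The only cosmetic difference is that you run $(2)\Leftrightarrow(3)$ in the Schr\"odinger picture via the identity $\mathcal A_{\sigma*}=\cR_\sigma\circ E_\tau$, whereas the paper conjugates by $\Gamma_\sigma$ and works with $X=\sigma^{-1/2}\rho\,\sigma^{-1/2}$ in the Heisenberg picture; the two are equivalent.
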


Remark that $(1)\Leftrightarrow(2)$ is Petz condition for equality in the data processing inequality. The equivalence $(3)\Leftrightarrow(4)$ is obvious, and $(4)\Leftrightarrow(5)$ is a consequence of Lemma 3.4 in \cite{junge2019stability}:
\begin{equation}\label{very_nice_lemma}
D(\rho\|\sigma)-D(E_{\sigma*}(\rho)\|\sigma)=D(\rho\|E_{\sigma*}(\rho))\,. 
\end{equation}
We shall now give a direct proof of $(2)\Leftrightarrow(3)$. 

\begin{proof}[Proof of \Cref{prop_condexp}]
We only prove $(2)\Leftrightarrow(3)$. Note that for $X\in\cN$, by definition $X=\mathcal A_\sigma(X)$ iff $X=E_\sigma(X)$. Then let $\rho$ be a density matrix on $\cN$ and define $X=\sigma^{-\frac12}\,\rho\,\sigma^{-\frac12}$. We have:
\begin{align*}
 \rho=\cR_\sigma(\rho_\cM)
 & \Leftrightarrow X=\mathcal A_\sigma(X) \\
 & \Leftrightarrow X=E_\sigma(X) \\
 & \Leftrightarrow \rho= \sigma^{\frac12}\,X\,\sigma^{\frac12} = \sigma^{\frac12}\,E_\sigma(X)\,\sigma^{\frac12}=E_{\sigma*}(\sigma^{\frac12}\,X\,\sigma^{\frac12})=E_{\sigma*}(\rho)\,.
\end{align*}
where in the last line we use property 3 in Proposition \ref{prop_condexp}.
\end{proof}

It would be interesting to compare the two notions of ``conditional'' relative entropies $D(\rho\|\sigma)-D(\rho_\cM\|\sigma_\cM)$ (introduced in \cite{[CLP18],[CLP18a],BardetCapelLuciaPerezGarciaRouze-HeatBath1DMLSI-2019}) and $D(\rho\|E_{\sigma*}(\rho))$. This is the content of the following proposition.

\begin{proposition}\label{prop_compar_cond}
 For any state $\eta\in\cD(\cN)$ such that $E_{\sigma*}(\eta)=\eta$ and any state $\rho\in\cD(\cN)$, we have
 \begin{equation}\label{eq_prop_compar_cond1}
  D(\rho\|\sigma)-D(\rho_\cM\|\sigma_\cM)=D(\rho\|\eta)-D(\rho_\cM\|\eta_\cM)\,,
 \end{equation}
i.e. the difference of relative entropies does not depend on the choice of the invariant state for $E_\sigma$. Consequently,
\begin{equation}\label{eq_prop_compar_cond2}
 D(\rho\|\sigma)-D(\rho_\cM\|\sigma_\cM)\leq D(\rho\|E_{\sigma*}(\rho))\,.
\end{equation}
\end{proposition}

\begin{proof}
 \Cref{eq_prop_compar_cond2} is a direct consequence of \Cref{eq_prop_compar_cond1} when applied to $\eta=E_{\sigma*}(\rho)$, so we focus on the first equation (remark that it can be seen as a counterpart of \Cref{very_nice_lemma} for the difference of relative entropies). To this end, we need the following state $\sigma_{\tr}$ defined in \cite{BarEID17} and heavily exploited in \cite{bardet2018hypercontractivity}:
 \[\sigma_{\tr}=E_{\sigma*}\Big(\frac{\mathds 1}{d_\cH}\Big)\,.\]
 It has the property that for all $X\in\cF(\mathcal A_\sigma)$, $[X,\sigma_{\tr}]=0$ (see Lemma 3.1 in \cite{BarEID17}). Then it is enough to prove that for all $\eta\in\cD(\cN)$ such that $E_{\sigma*}(\eta)=\eta$, we have:
 \[D(\rho\|\sigma_{\tr})-D(\rho_\cM\|(\sigma_{\tr})_\cM)=D(\rho\|\eta)-D(\rho_\cM\|\eta_\cM)\,.\]
 Now any such $\eta$ can be written $\eta=X\sigma_{\tr}$ with $X\in\cF(\mathcal A_\sigma)$. Remark that by definition of $\cF(\mathcal A_\sigma)$, $X\in\cM$  so that $E_\tau(\eta)=X E_\tau(\sigma_{\tr})$. Using the commutation between $X$ and $\sigma_{\tr}$ and developping the RHS of the previous equation we get the result.
\end{proof}

\subsection{Davies semigroups}\label{subsec:clustering-of-correlations}

Here we consider the conditional expectation associated to the Davies dynamics that was presented in Section \ref{sect_twoexamples}. Our first result is a characterization of the fixed-point algebra in the Davies case.
\begin{proposition}\label{prop_fixedpoint_Davies}
 One has
 \begin{equation}\label{eq_prop_fixedpoint_Davies}
  \cF(\cL^{\operatorname{D},\beta})=\{\sigma^{it}\,S_\alpha\,\sigma^{-it}\,;\,\forall\,t\geq0,\,\forall \alpha\}'\,,
 \end{equation}
 where the notation $\{ \cdot \}'$ denotes the centralizer of the set. 
\end{proposition}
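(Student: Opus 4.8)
The plan is to establish \eqref{eq_prop_fixedpoint_Davies} in two stages. First I would rewrite the generating set $\{\sigma^{it}\,S_\alpha\,\sigma^{-it}:t\ge0,\,\alpha\}$ in terms of the Fourier components $S_\alpha(\omega)$, reducing its commutant to $\{S_\alpha(\omega):\alpha,\omega\}'$; then I would identify this latter commutant with the fixed-point algebra $\cF(\cL^{\operatorname{D},\beta})$ by means of the Dirichlet form of the generator.

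For the first stage I would use $\sigma^{it}\,S_\alpha\,\sigma^{-it}=\Delta_\sigma^{it}(S_\alpha)$ together with the decomposition $S_\alpha=\sum_\omega S_\alpha(\omega)$ of \eqref{eq!} and the modular eigenvalue relation \eqref{eq_eigenvector1}, which give
\[\sigma^{it}\,S_\alpha\,\sigma^{-it}=\sum_\omega \e^{i\beta\omega t}\,S_\alpha(\omega)\,.\]
An operator $X$ lies in the commutant iff $\sum_\omega \e^{i\beta\omega t}\,[X,S_\alpha(\omega)]=0$ for every $\alpha$ and every $t\ge0$. Since the Bohr frequencies $\omega$ are distinct, the functions $t\mapsto\e^{i\beta\omega t}$ are linearly independent on $[0,\infty)$; differentiating the identity repeatedly at $t=0$ yields a Vandermonde system in the distinct numbers $\{i\beta\omega\}$, which forces $[X,S_\alpha(\omega)]=0$ for each individual $\omega$. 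Hence $\{\sigma^{it}\,S_\alpha\,\sigma^{-it}:t\ge0,\,\alpha\}'=\{S_\alpha(\omega):\alpha,\omega\}'$. Because the system--bath interaction $\sum_\alpha S_\alpha\otimes B_\alpha$ is self-adjoint, the couplings may be indexed so that $\{S_\alpha\}=\{S_\alpha^*\}$, and the elementary identity $S_\alpha(\omega)^*=S_{\alpha^*}(-\omega)$ then shows that the set $\{S_\alpha(\omega):\alpha,\omega\}$ is itself self-adjoint; its commutant is therefore a genuine von Neumann algebra, as the left-hand side of \eqref{eq_prop_fixedpoint_Davies} must be.

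For the second stage I would prove $\cF(\cL^{\operatorname{D},\beta})=\{S_\alpha(\omega):\alpha,\omega\}'$ by a double inclusion. The inclusion $\supseteq$ is a direct computation: if $X$ commutes with every $S_\alpha(\omega)$ and (by self-adjointness of the generating set) with every $S_\alpha^*(\omega)$, then in each summand of \eqref{eq_lindblad} one has $S_\alpha^*(\omega)\,X\,S_\alpha(\omega)=S_\alpha^*(\omega)S_\alpha(\omega)\,X$ and $\{S_\alpha^*(\omega)S_\alpha(\omega),X\}=2\,S_\alpha^*(\omega)S_\alpha(\omega)\,X$, so every summand cancels and $\cL^{\operatorname{D},\beta}(X)=0$. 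For the reverse inclusion I would use that the KMS condition \eqref{eq_KMS}, combined with \eqref{eq_eigenvector1}, makes $\cL^{\operatorname{D},\beta}$ self-adjoint and negative semidefinite for the KMS inner product $\langle\cdot,\cdot\rangle_\sigma$, and that its Dirichlet form admits a representation as a weighted sum of squared commutators,
\[-\langle X,\,\cL^{\operatorname{D},\beta}(X)\rangle_\sigma=\sum_{\alpha,\omega} w_\alpha(\omega)\,\langle [S_\alpha(\omega),X],\,[S_\alpha(\omega),X]\rangle_\sigma\,,\]
with strictly positive weights $w_\alpha(\omega)$ assembled from $\chi_\alpha^\beta(\omega)$ and the modular factors. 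Since $\cL^{\operatorname{D},\beta}(X)=0$ forces the left-hand side to vanish and every summand is nonnegative, each commutator $[S_\alpha(\omega),X]$ must vanish, i.e. $X\in\{S_\alpha(\omega):\alpha,\omega\}'$. Combining the two stages yields \eqref{eq_prop_fixedpoint_Davies}.

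I expect the main obstacle to be the sum-of-squares representation of the Dirichlet form. One must symmetrise the dissipative terms of \eqref{eq_lindblad} by using $\chi_\alpha^\beta(-\omega)=\e^{-\beta\omega}\chi_\alpha^\beta(\omega)$ to pair the frequency $\omega$ with $-\omega$, and then insert the modular weights coming from $\Gamma_\sigma$ in $\langle\cdot,\cdot\rangle_\sigma$ so that the surviving cross terms recombine exactly into commutator norms with manifestly positive coefficients. This bookkeeping is precisely where the positivity of the weights $w_\alpha(\omega)$ and, simultaneously, the self-adjointness of $\cL^{\operatorname{D},\beta}$ are pinned down; once it is carried out, both the negative semidefiniteness and the reverse inclusion follow immediately.
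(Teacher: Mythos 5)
Your argument is correct, and its first stage (passing from the commutant of $\{\sigma^{it}S_\alpha\sigma^{-it}\}$ to that of $\{S_\alpha(\omega)\}$ by expanding via \eqref{eq!}, differentiating repeatedly at $t=0$ and inverting the resulting Vandermonde system) is exactly the argument the paper gives. The one genuine difference is your second stage: the paper simply \emph{recalls} the identity $\cF(\cL^{\operatorname{D},\beta})=\{S_\alpha(\omega)\}'$ as a known fact about Davies generators (it is standard, cf.\ \cite{Kastoryano2014}), whereas you prove it, with the easy inclusion by direct cancellation in \eqref{eq_lindblad} and the reverse inclusion via the sum-of-squares representation of the Dirichlet form. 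That representation is indeed the crux and is where strict positivity of the $\chi_\alpha^\beta(\omega)$ enters --- an assumption the paper also uses implicitly when it recalls the fixed-point characterization --- so your proof is self-contained where the paper's is not, at the cost of the symmetrization bookkeeping you describe (which is carried out in the cited literature and works as you sketch it). No gap; you have simply proved slightly more than the paper does.
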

\begin{proof}
	We recall that $\cF(\cL^{\operatorname{D,\beta}})=\{S_\alpha(\omega)\}'$. Hence, since $\sigma^{it}S_\alpha \sigma^{-it}$ can be expressed as a linear combination of the $S_\alpha(\omega)$'s by \Cref{eq!}, it directly follows that
	\begin{align*}
	\cF(\cL^{\operatorname{D,\beta}})\subseteq \{\sigma^{it}\,S_\alpha\,\sigma^{-it}\,;\,t\geq0\}'
	\end{align*}
	To prove the opposite direction, we let $X\in  \{\sigma^{it}\,S_\alpha\,\sigma^{-it}\,;\,t\geq0\}'$. This means in particular that, for all $t\in\RR$, and all $\alpha$:
	\begin{align}\label{eq:sumomega}
	[X, \sigma^{it}S_{\alpha}\sigma^{-it}]=\sum_{\omega}\,\e^{it\omega}\,[X,S_{\alpha}(\omega)]=0\,.
	\end{align}
	Since the equation holds for all $t\in\RR$, we can differentiate it $N\equiv|\{\omega\}|$ times at $0$  to get that, for any $0\le n\le N-1$:
	\begin{align*}
	\sum_{\omega}\,\omega^n\,[X,S_\alpha(\omega)]=0\,.
	\end{align*}
Using an arbitrary labelling of the $N$ distinct frequencies $\omega_1,...,\omega_N$, the resulting $N$ linear equations can be rewritten as
\begin{align*}
\begin{pmatrix}
1 & 1& 1 & \dots & 1 \\ 
\omega_1& \omega_2 &\omega_3 & \dots & \omega_N \\
\omega_1^2&\omega_2^2 &\omega_3^2 & \dots & \omega_N^2 \\
\hdotsfor{5} \\
\omega_1^{N-1} & \omega_2^{N-1} &\omega_3^{N-1} & \dots &\omega_N^{N-1}
\end{pmatrix}\,\begin{pmatrix}
[X,S_\alpha(\omega_1)]\\
[X,S_\alpha(\omega_2)]\\
[X,S_\alpha(\omega_3)]\\
\hdotsfor{1}\\
[X,S_\alpha(\omega_N)]
\end{pmatrix}=0
\end{align*}
Since all the frequencies $\omega_i$ are distinct, their Vandermonde matrix is invertible. Hence, $[X,S_\alpha(\omega)]=0$ for all $\omega$, so that $X\in\cF(\cL^{\operatorname{D},\beta})$.
\end{proof}
Combining this result with a result from \cite{carlen2017recovery}, we can finally show the result stated in \Cref{theo_equal_cond} that the conditional expectations in the Davies and the Petz cases coincide.
% \begin{theorem}\label{coro_equal_cond}
%  Define the algebra $\cM=\{S_\alpha\}'$, $E^{\operatorname{D},\beta}$ as above and $E_{\sigma}$ as in \Cref{eq_cond_Petz} with respect to the inclusion $\cM\subset\cB(\cH)$. Then both conditional expectations coincide.
% \end{theorem}

\begin{proof}[Proof of \Cref{theo_equal_cond}]
 First, we remark that both conditional expectations are self-adjoint with respect to the $\sigma$-KMS inner product. Therefore, by uniqueness of the conditional expectation, it is enough to prove that $\cF(\cL^{\operatorname{D},\beta})=\cF(A_{\sigma})$. The analysis of the algebra $\cF(A_{\sigma})$ was carried out in \cite{carlen2017recovery}\footnote{Compared to \cite{carlen2017recovery}, the role of $\rho$ and $\sigma$ is exchanged. The result nevertheless stays the same, as can be readily checked from their proof.}. In particular, they proved (Theorem 3.3) that $\cF(A_{\sigma})$ is the largest $*$-sub-algebra of $\cM$ left-invariant by the modular operator. From this characterization, it is easy to see that $\cF(\cL^{\operatorname{D},\beta})\subseteq \cF(\cA_\sigma)$: indeed $\cF(\cL^{\operatorname{D},\beta})=\{S_\alpha(\omega)\}'\subseteq \{S_\alpha\}'\equiv \cM$. Moreover, for any $X\in \cF(\cL^{\operatorname{D},\beta})$
 \begin{align*}
 [\Delta_\sigma(X),S_\alpha(\omega)]&=\sigma\,X\,\sigma^{-1}\,S_\alpha(\omega)-S_\alpha(\omega)\sigma X\sigma^{-1}\\
 &=\e^{-\beta\omega }\,\big( \sigma XS_{\alpha}(\omega)\sigma^{-1}-\sigma\,S_\alpha(\omega)X\sigma^{-1}  \big)\\
 &=\e^{-\beta\omega }\,\sigma\,[X,S_\alpha(\omega)]\,\sigma^{-1}\\
 &=0\,.
 \end{align*} 
 It remains to show that any $*$-sub-algebra $\mathcal{V}$ of $\cM$ which is invaraint by $\Delta_\sigma$ is contained in $\cF(\cL^{\operatorname{D},\beta})$. This directly follows from (\ref{eq_prop_fixedpoint_Davies}): since for any $X\in\mathcal{V}$, $\Delta(X)\in\mathcal{V}$, we have that
 \begin{align*}
 [X,\sigma^{it}\,S_\alpha\,\sigma^{-it}]&=X\,\sigma^{it}\,S_\alpha\,\sigma^{-it}-\sigma^{it}\,S_\alpha\,\sigma^{-it}X\\
 &=\sigma^{it}\Delta_{\sigma}^{-it}(X)S_\alpha\sigma^{-it}-\sigma^{it}S_\alpha \Delta_{\sigma}^{-it}(X)\sigma^{-it}\\
 &=\sigma^{it}\,[\Delta_{\sigma}^{-it}(X),S_\alpha]\,\sigma^{-it}\\
 &=0\,
 \end{align*}
 and the result follows.
%%  Similarly, we prove using \Cref{prop_fixedpoint_Davies} that $\cF(\cL^{\operatorname{D},\beta})'$ is the smallest $*$-algebra \emph{containing} $\cM'$ and left invariant by the modular operator. \\
% Indeed, let first check that if $\cB$ is a $*$-algebgra left invariant, than so is its commutant $\cB'$. Take $X\in\cB$ and $Y\in\cB'$. Then
% \begin{align*}
%  [X,\sigma Y \sigma^{-1}] &=\sigma\,[\sigma^{-1}X\sigma,Y]\,\sigma^{-1}=0\,,
% \end{align*}
%where in the last identity line we use that by assumption $\sigma^{-1}X\sigma\in\cB$ and $Y\in\cB'$.\\
%We now prove our claim. Remark first that $\cM'\subset\cF(\cL^{\operatorname{D},\beta})'$ by taking $t=0$ in (\ref{eq_prop_fixedpoint_Davies}). Then, if $\cB$ is a $*$-algebra containing $\cM'=\{S_\alpha\}''$ and left invariant by the modular operator, it is also left invariant by the modular group $t\mapsto\sigma^{it}\cdot\sigma^{-it}$ and therefore must contain all operators $\sigma^{it}\,S_\alpha\,\sigma^{-it}$. That is, it contains $\cF(\cL^{\operatorname{D},\beta})'$.
\end{proof}

\section{Proofs}\label{appendix:proofs}

\subsection{Proof of \Cref{HSAT}}\label{subsec:proof_change_measure}

\begin{proof}[Proof of \Cref{HSAT}]
	The proof of this result relies on the Holley-Stroock perturbative argument for the Lindblad relative entropy proved in \cite{junge2019stability}. This entropic distance is defined for two positive semi-definite operators $X,Y\in\cB(\cH)$ such that $Y$ is full rank as 
	\[D_{\operatorname{Lin}}(X\|Y):=\tr[X(\log X-\log Y)]-\tr[X]+\tr[Y]\,.\]
	Next, we use a direct adaptation of the proof of Proposition 4.2 in \cite{junge2019stability} in order to relate the Lindblad relative entropies $D_{\operatorname{Lin}}(\rho\|E_{*}^\cM(\rho))$ and $D_{\operatorname{Lin}}(\rho\|E_{*}^{(0),\cM}(\rho))$. More precisely, we have that for any positive, semidefinite operators $X$ and $Y$, 
	\begin{align}\label{eq_proof_changeofvariable}
	\frac{1}{\lambda_{\max}(\sigma)\,d_\cH} \,D_{\operatorname{Lin}}(\Gamma_{\sigma}(X)\| E_{*}^{\cM}(\Gamma_\sigma(Y)) )\le  D_{\operatorname{Lin}}(X\|E_{*}^{(0),\cM}(Y))\le \frac{1}{\lambda_{\min}(\sigma)d_{\cH}}\,D_{\operatorname{Lin}}(\Gamma_{\sigma}(X)\| E_{*}^{\cM}(\Gamma_\sigma(Y)) )\,,
	\end{align}
	where $\lambda_{\min}(\sigma)$, resp. $\lambda_{\max}(\sigma)$, denotes the smallest, resp. largest, eigenvalue of the state $\sigma$. In words, the proof of \cite[Proposition 4.2]{junge2019stability} consists in the observation that the conditional expectations $E^{(0),\cM}$ and $E^{\cM}$ are related via $d_\cH E^{\cM}_*=\Gamma_\sigma^{-1}E^{(0),\cM}_*$, together with the monotonicity of $D_{\operatorname{Lin}}$ under completely positive, trace non-increasing maps. Analogous inequalities hold for $E_1$ and $E_2$. 
	Similarly to what is done for classical spin systems in \cite{ledoux2001logSobolev}, the previous inequality can be rewritten in the following way. Consider the generalization of the relative entropy for $X=\Gamma_\sigma^{-1}(\rho)$ given by:
\[\operatorname{Ent}_{1,\cM}(X):=D(\rho\|E_*^\cM(\rho))\,,\]
with analogous expressions for $\cN_1$ and $\cN_2$ with their respective conditional expectations. Then, we can express this relative entropy as an infimum over $D_{\operatorname{Lin}}$. Indeed, Lemma 3.4 in \cite{junge2019stability} states that for all full-rank positive semi-definite $Y\in\cM$,
 \begin{equation}\label{eq_proof_HS}
  D_{\operatorname{Lin}}(\rho\|\Gamma_{\sigma}(Y))=D_{\operatorname{Lin}}(\rho\|E_*^\cM(\rho))+D_{\operatorname{Lin}}(E_*^\cM(\rho)\|\Gamma_{\sigma}(Y))\,.
 \end{equation}
It shows in particular that $D_{\operatorname{Lin}}(\rho\|\Gamma_{\sigma}(Y))\geq D_{\operatorname{Lin}}(\rho\|E_*^\cM(\rho))=\operatorname{Ent}_{1,\cM}(X)$, with equality for $Y=E^\cM(X)$. Thus, we obtain
\begin{equation}\label{eq_var_entropy}
 \operatorname{Ent}_{1,\cM}(X)=\underset{Y\in \cM\,,Y>0}{\inf}\,D_{\operatorname{Lin}}(\rho\|\Gamma_{\sigma}(Y))\,
\end{equation}
and optimizing over all $Y$ we can rewrite \Cref{eq_proof_changeofvariable} as
\begin{align*}
	\frac{1}{\lambda_{\max}(\sigma)\,d_\cH} \,\operatorname{Ent}_{1,\cM}(X)\le  D(\rho\|E_{*}^{(0),\cM}(\rho))\le \frac{1}{\lambda_{\min}(\sigma)d_{\cH}}\,\operatorname{Ent}_{1,\cM}(X)\,.
	\end{align*}
Finally, using the approximate tensorization at infinite temperature and rearranging the terms leads to the result:
\begin{equation*}
 \operatorname{Ent}_{1,\cM}(X) \leq \frac{\lambda_{\max}(\sigma)}{\lambda_{\min}(\sigma)}\,\big( \operatorname{Ent}_{1,\cN_1}(X)+\operatorname{Ent}_{1,\cN_2}(X) \big) +\lambda_{\max}(\sigma)\,d_\cH\,d\,.
\end{equation*}
\end{proof}

\subsection{Proof of \Cref{propboundsd1d2}}\label{subsec:proof_estimation_c,d}

\begin{proof}[Proof of Proposition \ref{propboundsd1d2}]
	We first proceed by proving the bound $d\leq d_1+d_2$. For all $\rho\in\cD(\cN)$, we can use the chain rule on the max-relative entropy to obtain:
	\begin{align*}
	& D_{\max}\big(E_{1*}\circ E_{2*}(\rho)\|E_{1*}\circ E_{2*}(\eta)\big) \\
	& ~~~~~~~~~~~~~~~~~~~~~~~~~ \leq D_{\max}\big(E_{1*}\circ E_{2*}(\rho)\|E_{1*}\circ E_{2*}(\cP_\cM(\rho))\big) +D_{\max}\big(E_{1*}\circ E_{2*}(\cP_\cM(\rho))\|E_{1*}\circ E_{2*}(\eta)\big)\\
	& ~~~~~~~~~~~~~~~~~~~~~~~~~  \leq d_1 + D_{\max}\big(\cP_\cM(\rho)\|\eta\big)\,,
	\end{align*}
    where the second inequality follows from the data processing inequality for $D_{\max}$. Then 
    \[D_{\max}\big(\cP_\cM(\rho)\|\eta\big)\leq\max_{i\in I_\cM}D_{\max}\big(\cP_\cM(\rho)^{(i)}\|\eta^{(i)}\big)\,,\]
	where we write $A^{(i)}:=P_i\,A\,P_i$ for any $A\in\cB(\cH)$. This last $D_{\max}$ is exactly $I_{\max}\big( \cH_i:\cK_i  \big)_{\rho^{(i)}}$ after minimizing on $\eta$.	We are left with proving the two separate bounds on $d_1$ and $d_2$ respectively. The first bound is a simple consequence of the data processing inequality for $D_{\max}$ and the Pinching inequality. The second bound is a consequence of Lemma B.7 in \cite{berta2011quantum}.
\end{proof}

\subsection{Proofs of \Cref{theo_L2clust} and \Cref{theo_transference}}\label{subsec:proof_clustering}

Before proving \Cref{theo_L2clust}, we need to prove a technical lemma.
\begin{lemma}\label{lemma:L2clustering-invariant}
	Given a conditional expectation $E:\cN\to\cM\subset \cN\subset \cB(\cH)$ that is invariant with respect to two different full-rank states, $\rho$ and $\sigma$, the following holds:
	\begin{align*}
	\Gamma_\rho^{1/2}\circ E\circ \Gamma_{\rho}^{-1/2}=\Gamma_\sigma^{1/2}\circ E\circ \Gamma_\sigma^{-1/2}
	\end{align*}
\end{lemma}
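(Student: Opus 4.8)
The plan is to recast the claimed superoperator identity as a commutation relation and then to invoke the bimodule property of $E$ from \Cref{propositioncondexp}(ii). Since each modular map is an invertible superoperator with $\Gamma_\sigma^{1/2}(X)=\sigma^{1/4}X\sigma^{1/4}$, the target $\Gamma_\rho^{1/2}\circ E\circ\Gamma_\rho^{-1/2}=\Gamma_\sigma^{1/2}\circ E\circ\Gamma_\sigma^{-1/2}$ is, after multiplying on the left by $\Gamma_\sigma^{-1/2}$ and on the right by $\Gamma_\sigma^{1/2}$, equivalent to $\Psi\circ E\circ\Psi^{-1}=E$, i.e.\ to $E\circ\Psi^{-1}=\Psi^{-1}\circ E$, where $\Psi:=\Gamma_\sigma^{-1/2}\circ\Gamma_\rho^{1/2}$. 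A one-line computation identifies $\Psi$ as a sandwiching map, $\Psi(X)=m\,X\,m^{*}$ with $m:=\sigma^{-1/4}\rho^{1/4}$ and $m^{*}=\rho^{1/4}\sigma^{-1/4}$, so that $\Psi^{-1}(X)=m^{-1}X(m^{*})^{-1}$.

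The key step is then to show that $m\in\cM$ (whence also $m^{*},m^{-1},(m^{*})^{-1}\in\cM$). Here I would use that $E$, being invariant with respect to the faithful states $\rho$ and $\sigma$, is by the uniqueness statement \Cref{propositioncondexp}(v) the conditional expectation associated with each of them; in particular both $\rho$ and $\sigma$ are fixed points of $E_*$ and hence belong to $\cD(\cM)$. Writing $\cM=\bigoplus_{i\in I_\cM}\cB(\cH_i)\otimes\Id_{\cK_i}$ as in \eqref{eq_decomp}, the local structure \eqref{eq_decom_cond} shows that the fibre states $\tau_i$ are intrinsic to $E$, so that every $E$-invariant state takes the block form $\sigma=\bigoplus_i\omega_i\otimes\tau_i$ and $\rho=\bigoplus_i\omega_i'\otimes\tau_i$ with the \emph{same} $\tau_i$. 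Consequently, on each block $\sigma^{-1/4}\rho^{1/4}=(\omega_i^{-1/4}\omega_i'^{1/4})\otimes\Id_{\cK_i}$: the $\tau_i$-factors cancel and $m\in\cM$.

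With $m\in\cM$ established, the conclusion is immediate from the bimodule property $E[YXZ]=Y\,E[X]\,Z$ for $Y,Z\in\cM$ (\Cref{propositioncondexp}(ii)): applying it with $Y=m^{-1}$ and $Z=(m^{*})^{-1}$ gives $E\big[\Psi^{-1}(X)\big]=E\big[m^{-1}X(m^{*})^{-1}\big]=m^{-1}E[X](m^{*})^{-1}=\Psi^{-1}\big(E[X]\big)$, which is exactly $E\circ\Psi^{-1}=\Psi^{-1}\circ E$ and hence the lemma.

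The main obstacle is the second step, the membership $m\in\cM$; this is the only place where the hypothesis that a single $E$ is invariant under two states is used essentially, as it forces $\rho$ and $\sigma$ to share the same fibres $\tau_i$ over each block of $\cM$ and thereby makes the non-$\cM$ parts of $\sigma^{-1/4}$ and $\rho^{1/4}$ cancel. Conceptually, $m$ is the value at imaginary time of the Connes cocycle $\sigma^{-it}\rho^{it}$, which lands in $\cM$ precisely because $E$ commutes with both modular groups by \Cref{propositioncondexp}(iv). As a fully elementary alternative, one could bypass the reduction entirely and compute directly in the block decomposition that $\Gamma_\sigma^{1/2}\circ E\circ\Gamma_\sigma^{-1/2}$ acts on the $i$-th block by $A\otimes B\mapsto A\,\tr[\tau_i^{1/2}B]\otimes\tau_i^{1/2}$, a formula manifestly independent of $\omega_i$ and hence of the chosen invariant state.
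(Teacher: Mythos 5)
Your proposal is correct and follows essentially the same route as the paper: both arguments hinge on the observation that two $E$-invariant states share the same fibre states $\tau_i$ in the block decomposition $\cM=\bigoplus_i\cB(\cH_i)\otimes\Id_{\cK_i}$, so that $\sigma^{-1/4}\rho^{1/4}\in\cM$, after which the bimodule property $E[YXZ]=YE[X]Z$ for $Y,Z\in\cM$ yields the identity. The paper phrases this as a direct conjugation computation rather than as the commutation relation $E\circ\Psi^{-1}=\Psi^{-1}\circ E$, but the content is identical.
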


\begin{proof}[Proof of \Cref{lemma:L2clustering-invariant}]
	Since we are in finite dimension, the von Neumann algebra $\cM$ takes the following form:
	\begin{align*}
	\cM=\bigoplus_i\,\cB(\cH_i)\otimes\Id_{\cK_i}\,,
	\end{align*}
	for some decomposition $\cH:=\bigoplus_i\,\cH_i\otimes\cK_i$ of $\cH$. Therefore, since $\rho$ and $\sigma$ are invariant stats of $E$, they can be decomposed as follows:
	\begin{align*}
	\rho=\bigoplus_{i}\, \rho_i\otimes\tau_i\,,~~~	\sigma=\bigoplus_{i}\, \sigma_i\otimes\tau_i\,,
	\end{align*}
	for given positive definite operators $\sigma_i$, $\rho_i$ and where $\tau_i$ is given by $\Id_{\mathcal{K}_i}/d_{\mathcal{K}_i}$. Hence, 
	\begin{align*}
	\rho^{-1/4}\sigma^{1/4}=\bigoplus_i\,\rho_i^{-1/4}\sigma_i^{1/4}\otimes\Id_{\cK_i}\in\cN.
	\end{align*}	
	Then, it is clear that the following string of identities hold for all $Y\in\cB(\cH)$:
	\begin{align*}
	\rho^{-1/4}\,\sigma^{1/4}\,E\big[\sigma^{-1/4}\rho^{1/4}\,Y\,\rho^{1/4}\sigma^{-1/4}\big]\,\sigma^{1/4}\rho^{-1/4}&=E\big[\rho^{-1/4}\,\sigma^{1/4}\sigma^{-1/4}\rho^{1/4}\,Y\,\rho^{1/4}\sigma^{-1/4}\sigma^{1/4}\rho^{-1/4}\big]\\
	&=E[Y]\,.
	\end{align*}
	The result follows after choosing $Y=\rho^{-1/4}X\rho^{-1/4}$.
	
\end{proof}

Now we can proceed to the proof of \Cref{theo_L2clust}.

\begin{proof}[Proof of \Cref{theo_L2clust}]
We begin with proving that the property of strong $\mathbb{L}_2$ clustering of correlations is independent of the invariant state, thanks to \Cref{lemma:L2clustering-invariant}. Indeed, if we choose $Y:= \Gamma_\sigma^{-1/2}(X)$ and call $X':= \Gamma_{\sigma'}^{1/2}(Y)$, it is clear that 
	\begin{equation*}
	\norm{X}_{\mathbb{L}_2(\sigma)}^2= \norm{Y}_2^2 \; \; \text{ and } \; \;\norm{Y}_2^2 = \norm{X'}_{\mathbb{L}_2(\sigma')}^2.
	\end{equation*}
	Therefore, we have the following chain of identities:
	\begin{align*}
	\sup_{X\in\cN} \, \frac{\operatorname{Cov}_{\cM,\,\sigma}(E_1[X],E_2[X])}{\|X\|_{\mathbb{L}_2(\sigma)}^2} & = \sup_{X\in\cN} \, \frac{\langle X,\,E_1\circ E_2[X]-E^\cM[X] \rangle_{\sigma} }{\|X\|_{\mathbb{L}_2(\sigma)}^2}\\
	&=\sup_{Y\in\cN} \, \frac{\langle \Gamma_{\sigma}^{-1/2}(X),\,E_1\circ E_2[\Gamma_{\sigma}^{-1/2}(X)]-E^\cM[\Gamma_{\sigma}^{-1/2}(X)] \rangle_{\sigma} }{\|Y\|_{2}^2}\\
	&=\sup_{Y\in\cN}\, \frac{\langle X,\,\Gamma_{\sigma}^{1/2}(E_1\circ E_2-E^\cM)[\Gamma_{\sigma}^{-1/2}(X)] \rangle_{\operatorname{HS}} }{\|Y\|_{2}^2}\\
	&=\sup_{Y\in\cN} \, \frac{\langle \Gamma_{\sigma'}^{-1/2}(X),\,E_1\circ E_2[\Gamma_{\sigma'}^{-1/2}(X)]-E^\cM[\Gamma_{\sigma'}^{-1/2}(X)] \rangle_{\sigma'} }{\|Y\|_{2}^2}\\
	&=\sup_{X'\in\cN} \, \frac{\operatorname{Cov}_{\cM,\,\sigma}(E_1[X'],E_2[X'])}{\|X'\|_{\mathbb{L}_2(\sigma)}^2}\,,
	\end{align*}
	where we have used Lemma \ref{lemma:L2clustering-invariant} in the fourth line. 
\end{proof}

In a different direction, we can also provide the proof of \Cref{theo_transference}.

\begin{proof}[Proof of \Cref{theo_transference}]
Point 1. is straigthforward so we focus on point 2. As already mentioned, strong $ \mathbb L_2$ clustering implies $\operatorname{cond\mathbb{L}_2}(c_2)$, so we only need to prove the other implication. Now assume that $\operatorname{cond\mathbb{L}_2}(c_2)$ holds with a constant $c_2$ and take $X\in\cD(\cN)$. We write $T=E_1\circ E_2-E^\cM$. Remark that, according to the decomposition of $\cM$ given in \Cref{eq_decomp} and exploiting \Cref{eq_d1=0}, $T$ acts on $X$ as:
 \begin{equation}\label{eq_decomp_T}
   T(X)=\sum_{i\in I_\cM}\,\left(\Id_{\cB(\cH_i)}\otimes T^{(i)}\right)(P_i\,X\,P_i)\,,
  \end{equation}
 where $T^{(i)}$ acts on $\cB(\cK_i)$ and where the $P_i$ are the orthogonal projections on $\cH_i\otimes\cK_i$.
   
 Consider now the Hilbert-Schmidt decomposition of $P_iXP_i$ with respect to $(\cB(\cH_i),\langle\cdot , \cdot\rangle_{\sigma_i})$ and $(\cB(\cK_i),\langle\cdot , \cdot\rangle_{\tau_i})$:
 \[P_i\,X\,P_i=\sum_\alpha\,f_\alpha^{(i)}\otimes g_\alpha^{(i)}\,.\]
 Thus we have
  \[\| P_i\,X\,P_i\|_{\mathbb L_2(\sigma_i\otimes\tau_i)}^2=\sum_\alpha\,\|f_\alpha^{(i)}\|_{\mathbb L_2(\sigma_i)}^2\,\|g_\alpha^{(i)}\|_{\mathbb L_2(\tau_i)}^2\,,\]
 and therefore
    \begin{align*}
    \|T(X)\|_{\mathbb L_2(\sigma_i\otimes\tau_i)}^2
    & = \sum_{i\in I_\cM}\,\left\|\left(\Id_{\cB(\cH_i)}\otimes T^{(i)}\right)(P_i\,X\,P_i)\right\|_{\mathbb L_2(\sigma_i\otimes\tau_i)}^2 \\
    & = \sum_{i\in I_\cM}\,\left\|\sum_\alpha f_\alpha^{(i)}\otimes T^{(i)}(g_\alpha^{(i)})\right\|_{\mathbb L_2(\sigma_i\otimes\tau_i)}^2 \\
    & = \sum_{i\in I_\cM}\,\sum_\alpha \|f_\alpha^{(i)}\|_{\mathbb L_2(\sigma_i)}^2\,\|T^{(i)}(g_\alpha^{(i)})\|_{\mathbb L_2(\tau_i)}^2 \\
    & \leq c_2 \left\|\sum_{i\in I_\cM}\,P_i\,X\,P_i\right\|_{\mathbb L_2(\sigma)}^2 \\
    & \leq c_2 \|X\|_{\mathbb L_2(\sigma)}^2\,,
    \end{align*}
 where in the third line we use that $(f_\alpha^{(i)})_\alpha$ is an orthogonal family for every $i\in I_\cM$. This shows that
    \[\|E_1\circ E_2-E^\cM\,:\,L_2(\sigma)\to L_2(\sigma)\|\leq c_2\,,\]
 which is equivalent to strong $L_2$ clustering.
\end{proof}

\end{document}